\newtheorem{mydef}{Definition}
\newtheorem{mythm}{Theorem}
\newtheorem{mylemm}{Lemma}
\newtheorem{mycorol}{Corollary}
\begin{document}

\preprint{APS/123-QED}

\title{Matrix Product Representation of Locality Preserving Unitaries}
\author{M. Burak \surname{\c{S}ahino\u{g}lu}}
\affiliation{Department of Physics and Institute for Quantum Information and Matter, California Institute of Technology, Pasadena, California 91125, USA}
\author{Sujeet K. \surname{Shukla}}
\affiliation{Department of Physics and Institute for Quantum Information and Matter, California Institute of Technology, Pasadena, California 91125, USA}
\author{Feng \surname{Bi}}
\affiliation{Department of Physics and Institute for Quantum Information and Matter, California Institute of Technology, Pasadena, California 91125, USA}
\author{Xie \surname{Chen}}
\affiliation{Department of Physics and Institute for Quantum Information and Matter, California Institute of Technology, Pasadena, California 91125, USA}

\date{\today}

\begin{abstract}
The matrix product representation provides a useful formalism to study not only entangled states, but also entangled operators in one dimension. In this paper, we focus on unitary transformations and show that matrix product operators that are unitary provides a necessary and sufficient representation of 1D unitaries that preserve locality. That is, we show that matrix product operators that are unitary are guaranteed to preserve locality by mapping local operators to local operators while at the same time all locality preserving unitaries can be represented in a matrix product way. Moreover, we show that the matrix product representation gives a straight-forward way to extract the GNVW index defined in Ref.\cite{Gross2012} for classifying 1D locality preserving unitaries. The key to our discussion is a set of `fixed point' conditions which characterize the form of the matrix product unitary operators after blocking sites. Finally, we show that if the unitary condition is relaxed and only required for certain system sizes, the matrix product operator formalism allows more possibilities than locality preserving unitaries. In particular, we give an example of a simple matrix product operator which is unitary only for odd system sizes, does not preserve locality and carries a `fractional' index as compared to their locality preserving counterparts.
\end{abstract}

\pacs{Valid PACS appear here}

\maketitle


\section{\label{sec:intro} Introduction}

The matrix product formalism~\cite{Fannes1992,Perez-Garcia2007} has played a significant role in the study of one dimensional systems. In particular, the matrix product representation of 1D quantum states underlies successful numerical algorithms like the Density Matrix Renormalization Group algorithm~\cite{White1993} and the Time-Evolving Block Decimation algorithm~\cite{Vidal2003}. Moreover, the matrix product representation provides a deep insight into the structure of the ground states in 1D~\cite{Perez-Garcia2007} which enables rigorous proofs of the efficiency of 1D variational algorithms in search for the ground states~\cite{Schuch2010,Landau2013arxiv} and also a complete classification of 1D gapped phases~\cite{Pollmann2010,Pollmann2012,Chen2011,Schuch2011}.

Operators can also be represented in a matrix product form~\cite{Verstraete2004,Pirvu2010,Haegeman2016}, which provides a useful tool in the simulation of one dimensional mixed states and real / imaginary time evolutions (see for example Ref.~\onlinecite{Mascarenhas2015,Wall2012}). In particular, matrix product operators which are unitary play an important role in not only the simulation of dynamical processes in 1D , but also the understanding and classification of (symmetry protected) topological phases in 2D~\cite{Chen2011a,Buerschaper14,Sahinoglu2014,Williamson2016,Bultinck2017}. 

How well does the matrix product formalism represent unitary operators in 1D? Of particular interest are unitaries that preserve the locality structure of the system. That is, unitaries that map local operators to local operators. We want to understand: Can all locality preserving 1D unitaries be represented using the matrix product form? On the other hand, of course not all matrix product opereators are unitary. But among those that are, what conditions do they have to satisfy to preserve locality? Moreover, it has been shown~\cite{Gross2012} that locality preserving 1D unitaries can be classified according to how much information they are transmitting across any cut in the 1D chain and each class can be uniquely characterized by a positive rational index, which we refer to below as the GNVW index. We want to know if there is a simple way to extract this GNVW index from the matrix product representation if such a representation exists. 

In this paper, we address the above questions and show that 
\begin{itemize}
\item Unitary matrix product operators provide a necessary and sufficient representation of locality preserving unitaries in 1D. 
\end{itemize}
That is, matrix product operators that are unitary are guaranteed to preserve locality by mapping local operators to local operators while at the same time all locality preserving unitaries can be represented in a matrix product way. Moreover, we find that
\begin{itemize}
\item The GNVW index can be extracted in a simple way as the square root of $I_{\text{RR}}$, the `Rank-Ratio index', which is the ratio between the rank of the left and right singular value decompositions of the tensor representing the operator
\begin{equation}
\begin{array}{l}
I_{\text{RR}} = \text{rank}\left(\raisebox{-.4\height}{\includegraphics[height=0.7cm]{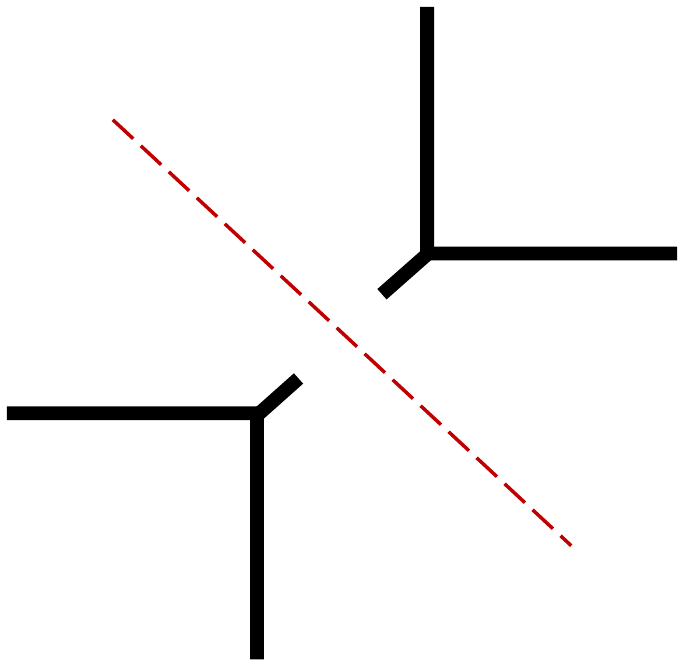}}\right) \bigg/ \text{rank}\left(\raisebox{-.4\height}{\includegraphics[height=0.7cm]{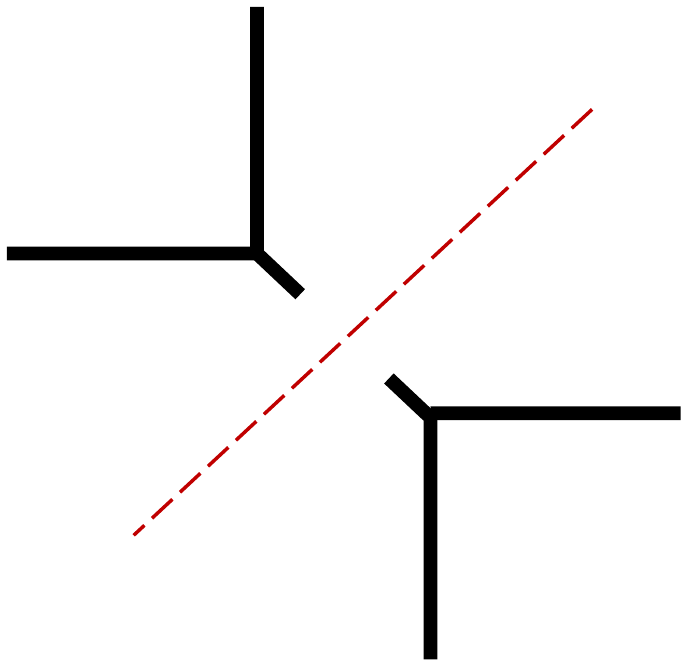}}\right)\\
I_{\text{GNVW}} = \sqrt{I_{\text{RR}}}
\label{index_intro}
\end{array}
\end{equation}
\end{itemize}
The exact meaning and a more rigorous version of this formula is given in section~\ref{sec:index}.

To show this result, we start from the basic requirements for a matrix product operator to be unitary in section~\ref{sec:MPU}. Based on these basic requirements, we prove in the section~\ref{sec:conditions} that after sufficient blocking, the `fixed point' matrix product operator satisfies a set of nice fixed point properties. Using this set of fixed point conditions, we can show the correspondence between matrix product unitary operators and locality preserving 1D unitaries. Moreover, these conditions enable us to prove in section~\ref{sec:index} that Eq.~\ref{index_intro} provides a well-defined index for each equivalence class of 1D locality preserving unitaries and it exactly matches (the square of) the GNVW index. In section~\ref{sec:numerics}, we compute the index according to Eq.~\ref{index_intro} numerically for some random locality preserving unitaries and demonstrate how it approaches the expected value as we take larger and larger blocks of the tensor. In section~\ref{sec:beyondGNVW}, we show that the matrix product formalism also provides interesting ways to go beyond the GNVW framework. In particular, we give an example of a simple matrix product operator with `fractional' index as compared to the locality preserving ones. This example does not contradict with our discussions in the previous sections because it is unitary only in systems of special sizes and does not preserve locality.

The structure of the paper is illustrated in Fig.~\ref{struc}.

\begin{figure}[htbp]
\includegraphics[scale=.6]{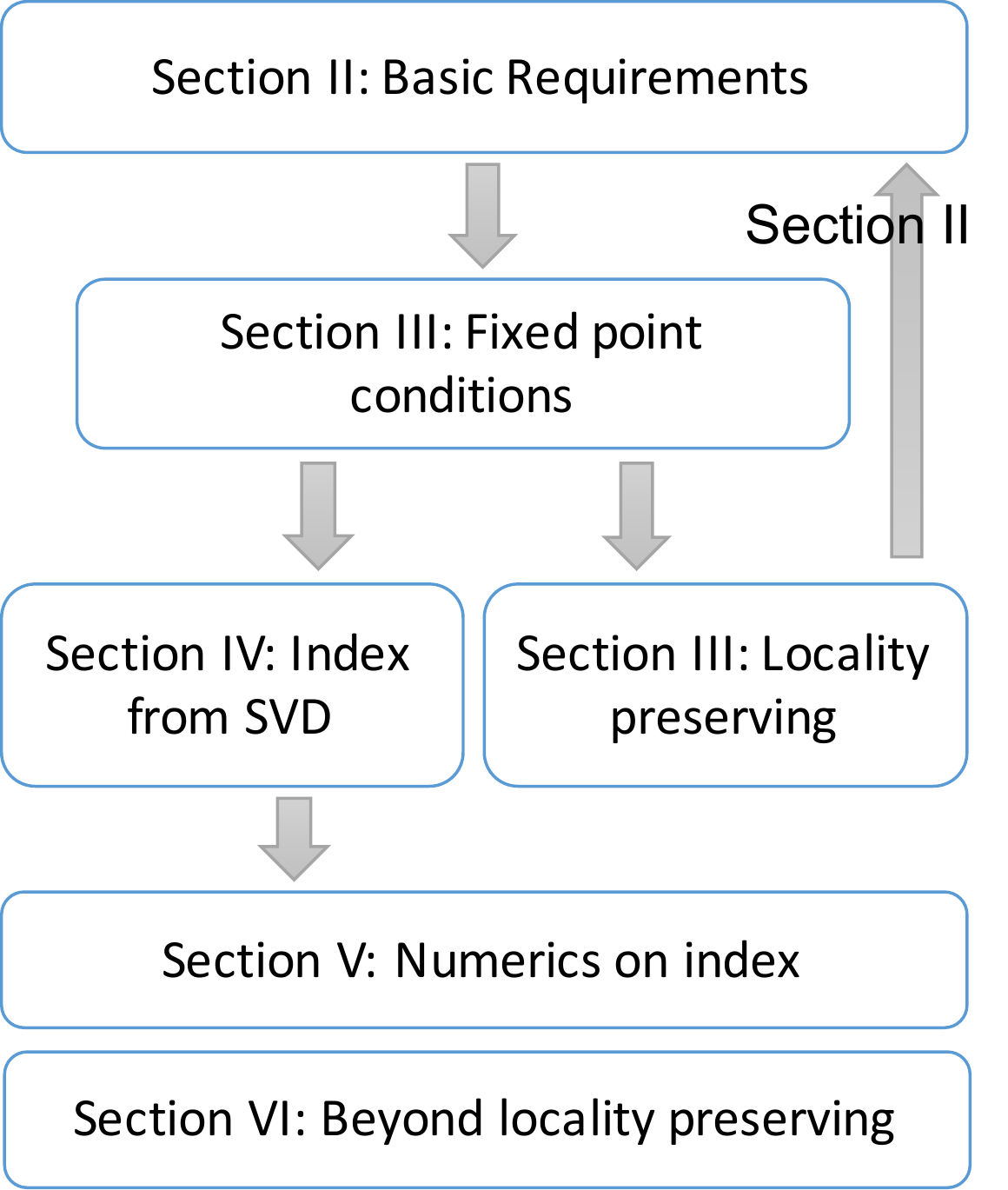}
\caption{Structure and logic of this paper.}
\label{struc}
\end{figure}

\section{\label{sec:MPU}Matrix Product Unitary: basic requirements}

\subsection{Implication of the unitary condition}

Let's first set up the stage and discuss the basic requirement a matrix product operator (MPO) has to satisfy to represent a unitary operator. Consider an MPO $O$ acting on $N$ sites where each site has a $d$-dimensional degree of freedom, i.e., $O$ acts on $(\mathbb C^d)^{\otimes N}$. In principle, $N$ is very large, ideally goes to infinity. In this paper we focus on translation invariant MPO with periodic boundary condition. The matrix product form of $O$ is given by
\begin{equation}\label{MPO}
O^{j_1j_2...j_N}_{i_1i_2...i_N} =  \text{Tr}\left(M^{j_1i_1}M^{j_2i_2}... M^{j_Ni_N}\right)
\end{equation}
where each $M^{j_ki_k}$, with fixed $i_k$ and $j_k$, is a $D\times D$ matrix. $i_1i_2...i_N$ label the input physical legs and $j_1j_2...j_N$ label the output physical legs. We are going to call the left and right legs of the $M^{j_ki_k}$ matrices the virtual legs and think of $M$ as a four leg tensor.

Pictorially, the local tensor $M$ in the MPO is given by
\begin{equation}
M = \raisebox{-.4\height}{\includegraphics[height=1.2cm]{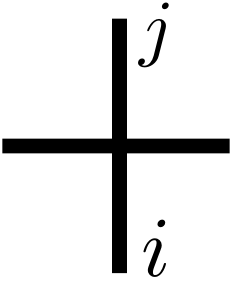}},
\end{equation}
while the total MPO is given by
\begin{equation}
O = \raisebox{-.4\height}{\includegraphics[height=1.2cm]{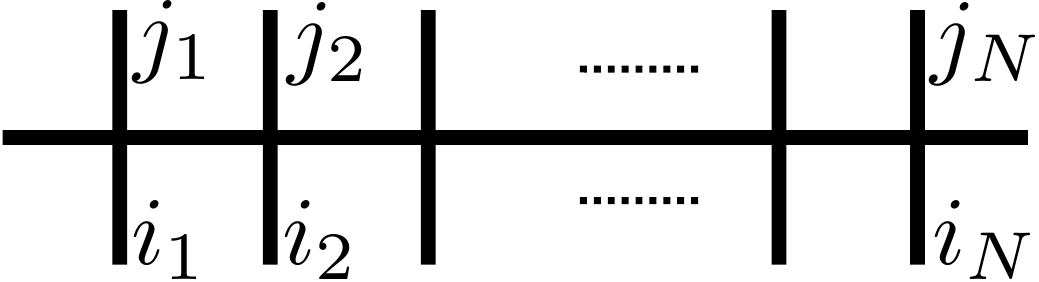}}.
\end{equation}

In order for $O$ to be unitary, it has to satisfy the condition that $O^{\dagger}O=I$. We consider the case where this is true for any finite system size, not just in the thermal dynamic limit. We call such operators Matrix Product Unitary Operators (MPUO).
\begin{mydef}[Matrix Product Unitary Operator]
Consider a matrix product operator $O$ represented with tensor $M$ of finite bond-dimension. $O$ is called a matrix product unitary operator if it is a unitary for all system sizes.
\label{def:MPUO}
\end{mydef}
Note that we emphasize `for all system sizes' for a good reason. In section~\ref{sec:beyondGNVW} we are going to see that there are matrix product operators which are unitary only for certain system sizes, and hence do not fit into this definition.

If we define a new tensor $M^{\dagger}$ as
\begin{equation}
{M^{\dagger}}^{ji} = \left(M^{ij}\right)^*
\end{equation}
Then the MPUO condition is given graphically as
\begin{equation}
O^{\dagger}O = \raisebox{-.4\height}{\includegraphics[height=1.2cm]{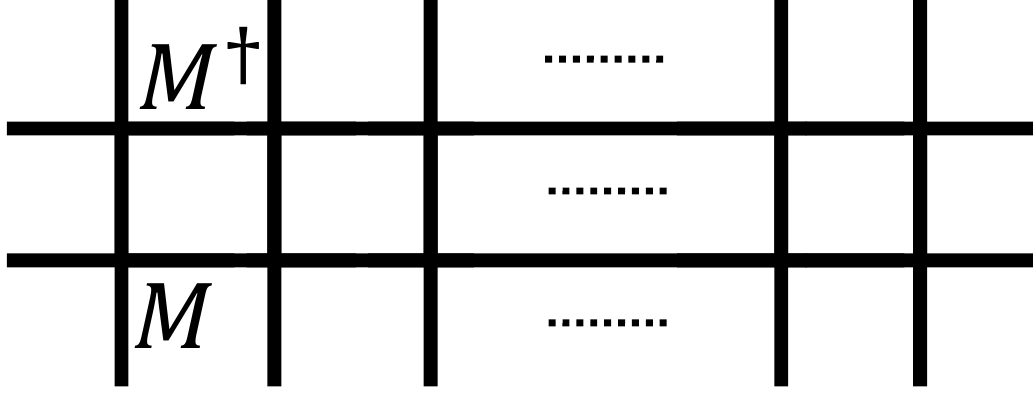}} = \raisebox{-.4\height}{\includegraphics[height=1.2cm]{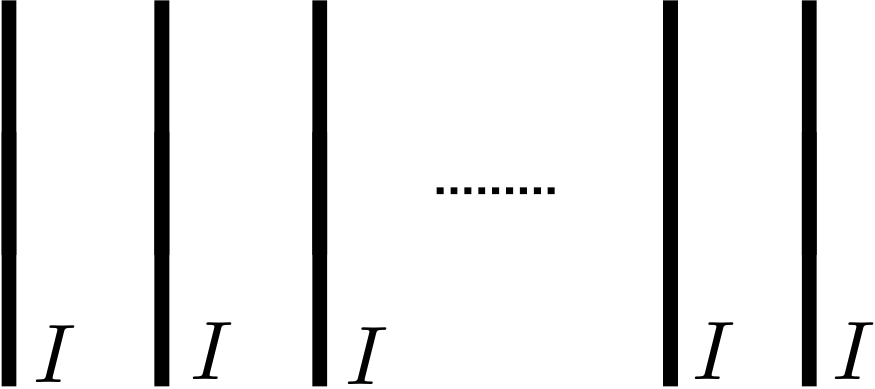}}
\label{uni}
\end{equation}
where we use a straight line to represent the identity matrix. This condition imposes very strong constraints on $M$. The constraint can be most easily identified on the composite of $M$ and $M^{\dagger}$ which we define as
\begin{equation}
T^{ij} = \sum_{k} {M^{\dagger}}^{ik}\otimes M^{kj} =\ \  \raisebox{-.4\height}{\includegraphics[height=1.3cm]{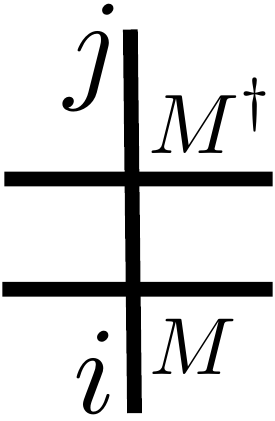}}
\label{Tij}
\end{equation}
The unitarity condition Eq.~\ref{uni} is saying that the matrix product operator with tensor $T^{ij}$ is equivalent to a tensor product of identity operators $I$ on each degree of freedom. If we combine the input and output physical legs of $T^{ij}$, we can think of it as representing a matrix product state, which would be a tensor product of maximally entangled pairs $|11\rangle+|22\rangle+ ... |dd\rangle$. 

Based on this observation, we can derive a general form for the $T^{ij}$ tensors. Let's give this as a lemma:

\begin{mylemm}\label{canonical_form}
Let $O$ be a Matrix Product Unitary Operator (MPUO) described by local tensor $M$, then the tensor $T^{ij}$, which is composed of $M$ and $M^{\dagger}$ as in Eq.~\ref{Tij}, has to take the following form:
\begin{align}\label{canonical_n}
\includegraphics[scale=0.15]{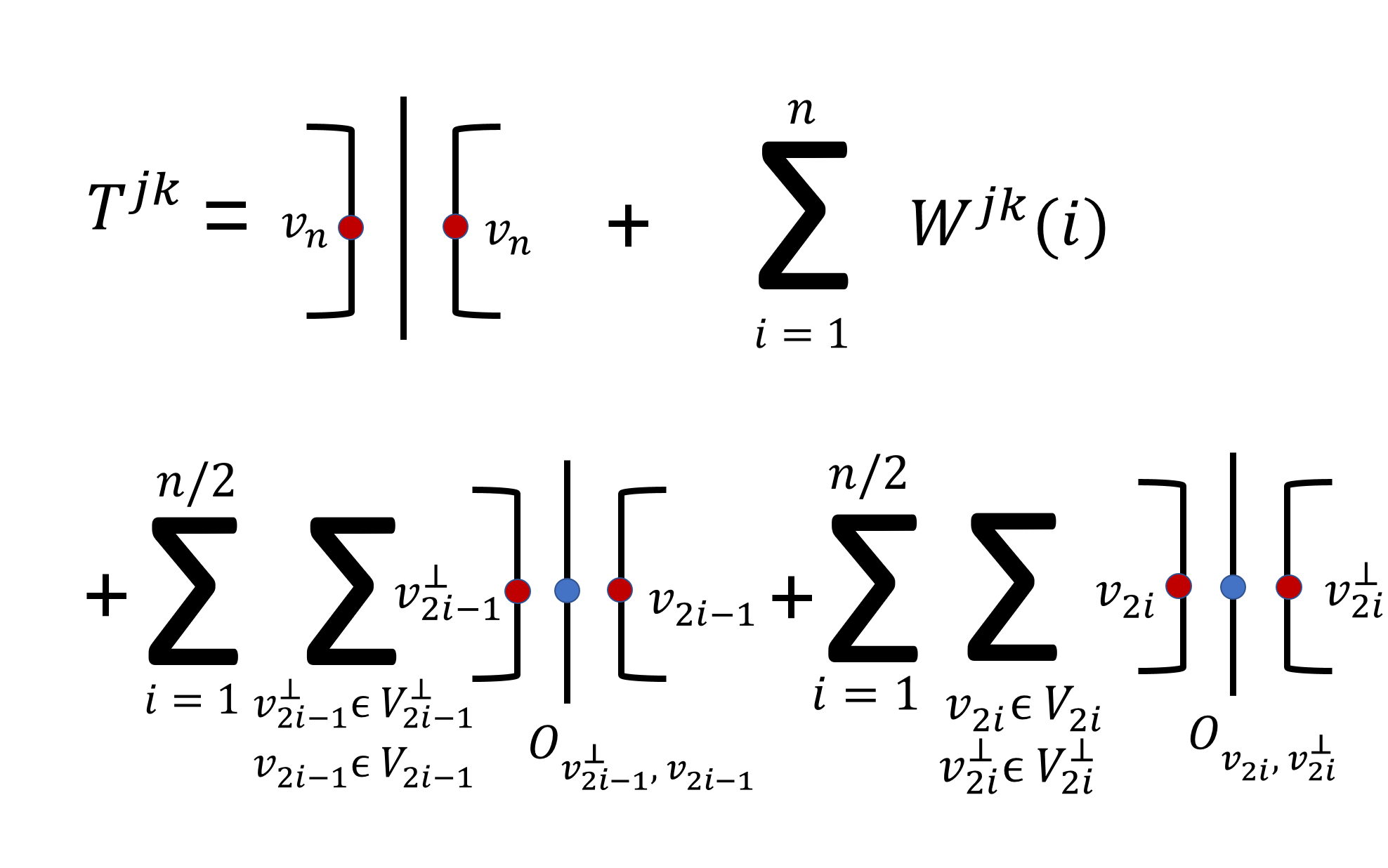}
\end{align}
where $n$ is a constant, which denotes the number of steps in the process of finding the canonical form of the associated MPS. $v_1, \ldots, v_n$ denotes vectors in the double virtual Hilbert space $V= \mathbb{C}^D \otimes \mathbb{C}^D$. Namely, each $v_i \in V_i$, $v^\perp_i \in V^\perp_i$ is an orthonormal basis vector in $V= V_n \oplus V^\perp_n \oplus V^\perp_{n-1} \oplus V^\perp_{n-2} \oplus \ldots \oplus V^\perp_{1}$ and $V_i= V_{i+1} \oplus V^\perp_{i+1}$ for all $0 \leq i \leq n-1$. Each $W^{jk}(i)$ denotes a block on $V^\perp_i$ which of similar form of $T^{jk}$ except $\langle v^\perp_i| W^{jk}_i |v^\perp_{i}\rangle=0$ for all $j,k$ and all $v^{\perp}_i \in V^{\perp}_i$.
\end{mylemm}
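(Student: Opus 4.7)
The plan is to reinterpret the claim as a statement about the canonical form of an MPS. After combining the input and output physical indices $i$ and $j$ into a single combined index and viewing $T^{ij}$ as the local tensor of an MPS with bond dimension $D^2$, the unitarity hypothesis $O^{\dagger}O = I$ rewrites as
\begin{equation*}
\mathrm{Tr}\bigl(T^{j_1 k_1} T^{j_2 k_2}\cdots T^{j_N k_N}\bigr) = \prod_{m=1}^{N}\delta_{j_m k_m},
\end{equation*}
so the associated MPS is the product state $|\Phi\rangle^{\otimes N}$ with $|\Phi\rangle=\sum_a |a,a\rangle$, for every system size $N$. The lemma is therefore a claim about the canonical form of an MPS representing a pure product state.

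The rest of the proof follows the standard iterative canonical form procedure for translation-invariant MPS \cite{Perez-Garcia2007}. At the first step, I would analyze the transfer matrix $E_T = \sum_{ij}T^{ij}\otimes\bar T^{ij}$ and identify a distinguished unit vector $v_1$ in the virtual space $V=\mathbb{C}^D\otimes\mathbb{C}^D$ such that the rank-one block $|v_1\rangle\langle v_1|$ of $T^{ij}$ already reproduces, by itself, the product-state amplitude $\delta^{ij}$ on the corresponding site. The remainder $W^{ij}(1)$ is supported on $V_1^{\perp}$, and the product-state condition forces $\langle v_1^{\perp}|W^{ij}(1)|v_1^{\perp}\rangle = 0$ for every $v_1^{\perp}\in V_1^{\perp}$; otherwise a non-zero diagonal matrix element in $W(1)$ would, upon expansion of the trace at some large $N$, generate additional contributions that spoil equality with $|\Phi\rangle^{\otimes N}$.

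Iterating this argument inside the residual subspace $V_1$ extracts $v_2, W(2)$ with the analogous perpendicularity, then $v_3, W(3)$, and so on. Since $\dim V_i$ strictly decreases at each step and $\dim V = D^2$, the procedure terminates after at most $n\le D^2$ iterations, producing the nested orthogonal decomposition
\begin{equation*}
V = V_n \oplus V_n^{\perp}\oplus V_{n-1}^{\perp}\oplus\cdots\oplus V_1^{\perp}
\end{equation*}
and the block structure displayed in Eq.~\ref{canonical_n}.

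The main obstacle is the rigorous justification, at each iteration, that the trivial direction $v_i$ is truly one-dimensional and that the perpendicularity $\langle v_i^{\perp}|W(i)|v_i^{\perp}\rangle = 0$ is forced. This step rests on the fundamental theorem of MPS, namely that two MPS tensors representing the same state for all $N$ must be related by a gauge on a block-diagonal structure, combined with the rigidity of the product-state representation. In practice one may need to block a fixed number of sites before the canonical form theorem applies, and care is needed to ensure that the unitarity hypothesis \emph{for all system sizes} is correctly invoked when ruling out diagonal contributions from the residual $W(i)$ blocks.
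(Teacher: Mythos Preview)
Your high-level strategy coincides with the paper's: reinterpret $T^{ij}$ as an MPS tensor for the product state $|\Phi\rangle^{\otimes N}$, run the Perez--Garcia canonical-form procedure, and then use unitarity for all $N$ to pin down the block structure. That is exactly what the paper does.

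However, your description of the iteration is imprecise in a way that misses an essential feature of the result. You write that after extracting the distinguished direction $v_1$ ``the remainder $W^{ij}(1)$ is supported on $V_1^\perp$'', and then iterate. But the canonical-form procedure of \cite{Perez-Garcia2007} does \emph{not} produce a clean split $T = |v_1\rangle\delta^{ij}\langle v_1| \oplus W(1)$ at the first step. It alternates between decomposing the \emph{left} and the \emph{right} virtual space: at an odd step one shows that $P_{V_i}T\,P_{V_i^\perp}=0$ while $P_{V_i^\perp}T\,P_{V_i}$ survives, and at the next step the roles are reversed. This alternation is precisely what generates the off-diagonal terms $|v_{2i}\rangle O_{v_{2i},v_{2i}^\perp}\langle v_{2i}^\perp|$ and $|v_{2i-1}^\perp\rangle O_{v_{2i-1}^\perp,v_{2i-1}}\langle v_{2i-1}|$ in the claimed form \eqref{canonical_n}. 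Your sketch suppresses these cross terms entirely, and they matter later (e.g.\ in the form of the transfer matrix and its eigenvectors, Eq.~\eqref{left_right_eigenvectors_EM}).

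Relatedly, the one-dimensional ``product-state'' direction is $V_n$, appearing at the \emph{end} of the iteration once the nested subspaces have stabilised, not a vector $v_1$ identified at the outset. In the paper's argument one first obtains the full block-triangular canonical form, and only then invokes unitarity for all $N$ to conclude that exactly one block can carry a diagonal contribution, that this block is one-dimensional, and that its physical operator is $I$. Your ordering (extract the 1D direction first, then iterate on the complement) would require an independent justification that such a direction exists before any reduction has been done.
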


\begin{proof}
This form of the tensor $T$ follows directly from the definition of the canonical form given in Ref.\onlinecite{Perez-Garcia2007} and the requirement that $O$ is an MPO which is a unitary for all system sizes. We define an MPS form for the operator $O^{\dagger} O$ which is described by local tensor $A^{jk}$ obtained by combining the input and output legs, $j$ and $k$, of $T^{jk}$ as the physical legs, i.e.,

\begin{align}\label{T_to_A}
\includegraphics[scale=0.17]{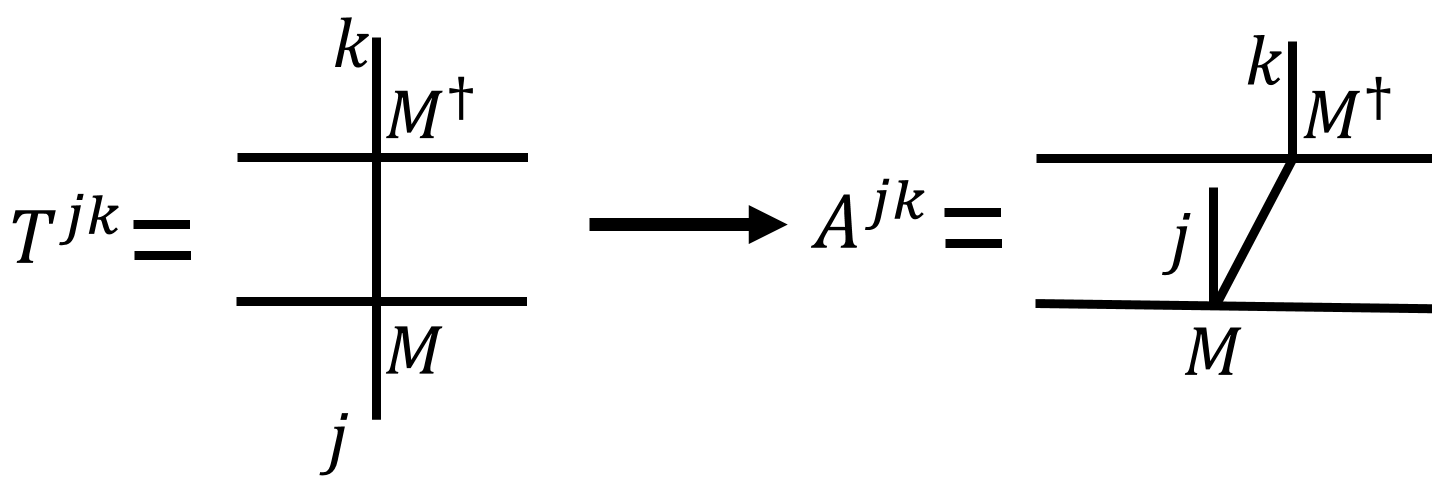}
\end{align}

Following the procedure of finding the canonical form given in Ref.~\cite{Perez-Garcia2007}, we step by step decompose the left and the right virtual vector space of the tensors $A^{jk}$ into orthogonal subspaces. The procedure does this alternatively, first $A^{ij}$ gets updated to $(P_{V_1}+ P_{V^\perp_1})A^{jk}$, where $P_{V_1}$ and $P_{V^{\perp}_{1}}$ are projectors onto $V_1$ and $V^{\perp}_1$ respectively, and $P_{V_1}+ P_{V^\perp_1}= P_{V_0}$ is the projector on the whole virtual space. As proved in Ref.~\cite{Perez-Garcia2007} the terms $P_{V_1}A^{jk}P_{V^\perp_1}$ vanishes. Now we update the MPS tensor $A^{jk}$ to $P_{V_1}A^{jk}P_{V_1}+ P_{V^{\perp}_1}A^{jk}P_{V_1}+ P_{V^{\perp}_1}A^{jk}P_{V^{\perp}_1}$. 

Repeating this procedure alternatively for decomposing left and right virtual vector spaces, we obtain the following general form of the tensor $T^{jk}$:

\begin{equation}\label{canonical_form_n}
\includegraphics[scale=0.15]{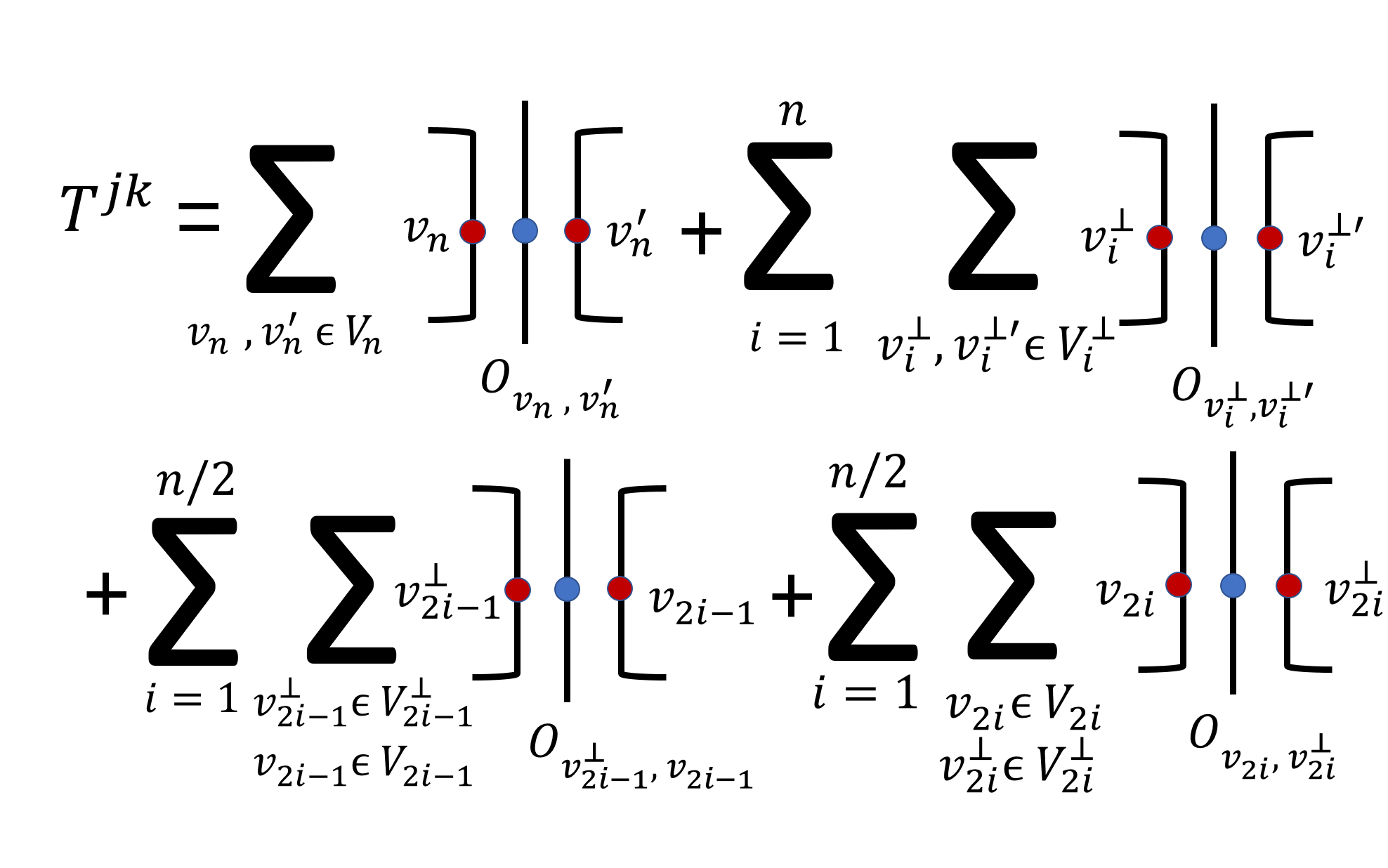}
\end{equation}

where the subspaces are split as $V= V_n \oplus V^{\perp}_n \oplus V^{\perp}_{n-1} \oplus \ldots V^{\perp}_{1}$, and $V_{i-1}= V_{i} \oplus V^{\perp}_{i}$ for all $1 \leq i \leq n$ and $V_0= V= \mathbb{C}^D \otimes \mathbb{C}^D$.

Now we impose the requirement that the MPO $O$ represented by the local tensor $M$ is unitary for all system sizes. Since the operator $O$ is obtained with periodic boundary conditions as seen in Eq.~\eqref{MPO}, we must investigate the associated MPS represented by local tensors $A^{ij}$ with periodic boundary conditions. This means that, only the operators of the form $O_{w,w'}$ with $w, w' \in V^{\perp}_i$ or $w, w' \in V_n$ appears in the expression of $O^{\dagger}O$. Since we know that $O^{\dagger}O= I^{\otimes N}$ for all system sizes, each of the operators $O_{w w'}$ must be individually equal to $I$. We can immediately see that only one block of these operators can have diagonal terms, since otherwise it would imply that $O^{\dagger} O$ is only proportional to $I^{\otimes N}$ and there is no way to make it exactly equal to $I^{\otimes N}$ by normalization. Let this block be the $n$th block that maps $V_n$ to $V_n$ from right to left virtual legs. This implies that in the general form of the MPS the blocks that map $V^{\perp}_i$ to $V^{\perp}_i$ can be decomposed further with the same procedure but without any diagonal term. Say all $\dim V^\perp_i=1$ for all $i \leq n$, then we only have diagonal term in the block that maps $V_n$ to $V_n$ from right to left virtual legs. When one of the blocks $V^\perp_i$ is such that $\dim V^\perp_i > 1$, we have additional terms in the expression of $A^{jk}$ that has only non diagonal terms in the the block $V^\perp_i$, which further decomposes as described above , but only within the vector space of $V^\perp_i$. We denote these terms in the sum as $W^{jk}(i)$ for each block of $V^\perp_i$. Furthermore, the fact that MPS is a product state means that $\dim V_n=1$. Hence,  Eq.~\eqref{canonical_form_n} and the fact that $O$ is an MPUO as defined in Def.~\ref{def:MPUO} imply that $T^{jk}$ is of the following form:

\begin{align}
\includegraphics[scale=0.15]{canonical_n}
\end{align}
\end{proof}

Note that $n \leq D^2 - 1$ which simply follows from dimension considerations.

\subsection{1D locality preserving unitaries as MPUO}
\label{sec:LP-MPUO}

In this section, we are going to show that all locality preserving 1D unitaries can be represented as MPUO.

Let's look at a few examples first and see how their representation fits the form in Lemma~\ref{canonical_form}.
\begin{itemize}
\item Example 1: Tensor product of unitary operators

This is a trivial case where the dimension of the virtual legs is $1$. Graphically, we denote it as
\begin{equation}
M_{\text{product}} = \raisebox{-.4\height}{\includegraphics[height=1.3cm]{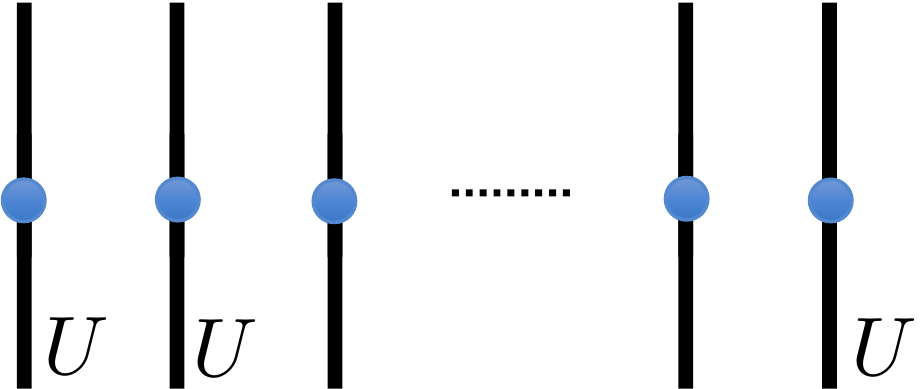}}
\end{equation}
where a line with a dot in the middle represents a nontrivial matrix, a unitary $U$ in this case.
The $T$ tensor as defined in Eq.~\ref{Tij} is automatically identity.

\item Example 2: Controlled-phase between nearest neighbor spin $1/2$s.

Let's consider a simple entangled unitary in 1D $\prod_{k=1}^N CP_{k,k+1}$, where each $CP_{k,k+1}$ is a two body unitary of the form
\begin{equation}
CP = \begin{pmatrix} 1 & 0 & 0 & 0 \\ 0 & 1 & 0 & 0\\ 0 & 0 & 1 & 0 \\ 0 & 0 & 0 & -1 \end{pmatrix}
\end{equation}
This unitary can be represented with
\begin{equation}
M_{CP} = \raisebox{-.4\height}{\includegraphics[height=1.5cm]{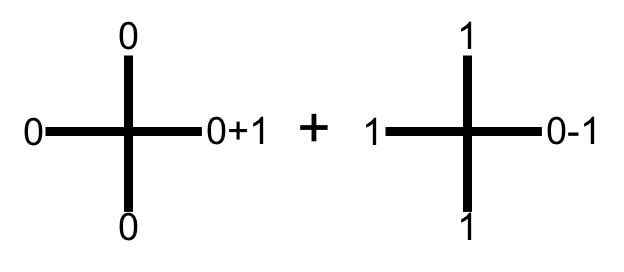}}
\label{MCP}
\end{equation}
We can check that $M_{CP}$ satisfies the condition in Lemma~\ref{canonical_form}. We can calculate $T_{CP}$ to be
\begin{equation}
\begin{array}{l}
T_{CP} = \raisebox{-.4\height}{\includegraphics[height=2cm]{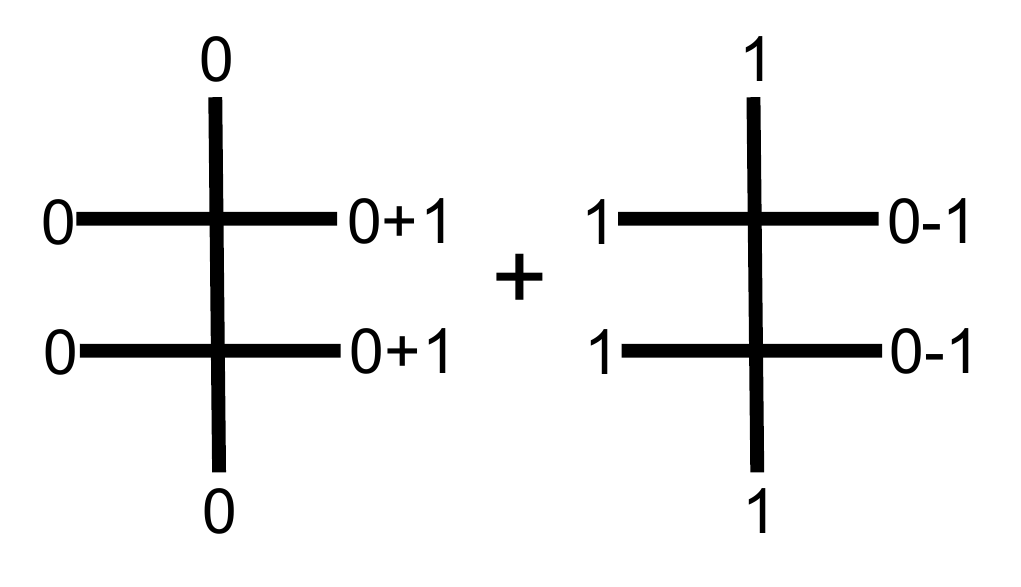}} = \\
\raisebox{-.4\height}{\includegraphics[height=1.5cm]{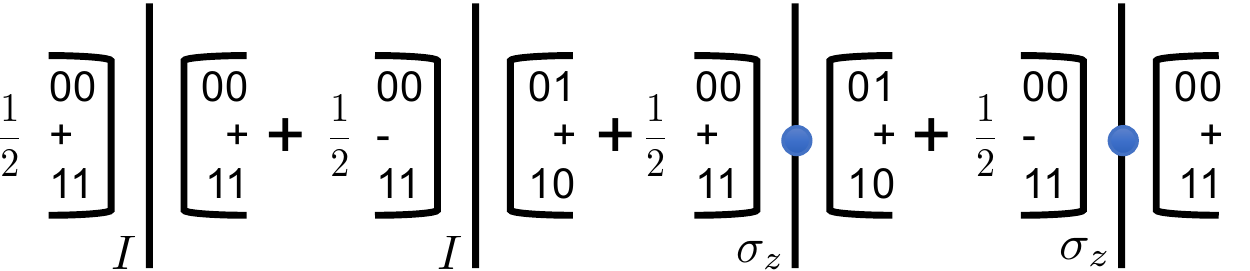}}
\end{array}
\end{equation}

\item Example 3: Translation

Translation, which is a locality preserving unitary that cannot be written as a finite depth circuit, can also be represented as a MPUO in a simple way. Consider the translation to the right by one step in a spin $1/2$ chain. The operator can be represented with
\begin{align}
M_r= \raisebox{-.4\height}{\includegraphics[height=1.2cm]{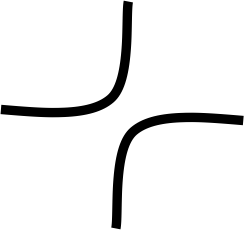}}
\label{M_r}
\end{align}
where the curved lines again represent the identity matrix between the left and up legs, and the right and down legs. When connected into a chain, it is straight forward to see that it represents translation.
\begin{align}
\raisebox{-.4\height}{\includegraphics[height=1.2cm]{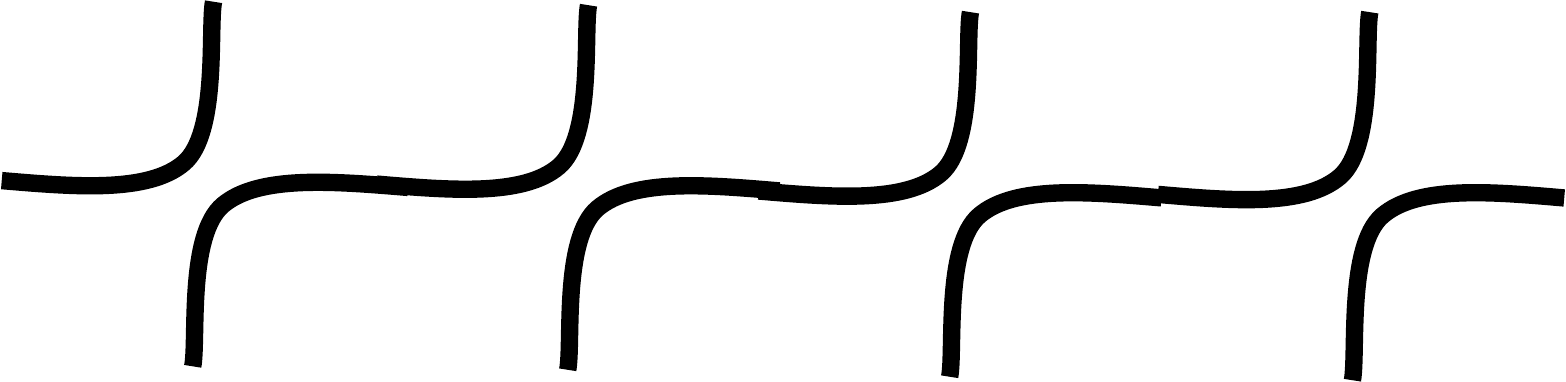}}
\end{align}
Similarly, translation to the left by one step can be represented with
\begin{align}
M_l= \raisebox{-.4\height}{\includegraphics[height=1.2cm]{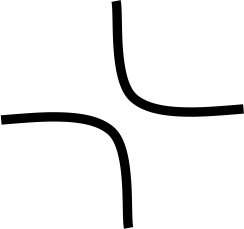}}
\end{align}
$M_r$ and $M_l$ also satisfy the condition in Lemma~\ref{canonical_form}. In particular,
\begin{equation}
\begin{array}{l}
T_r = \raisebox{-.4\height}{\includegraphics[height=1.2cm]{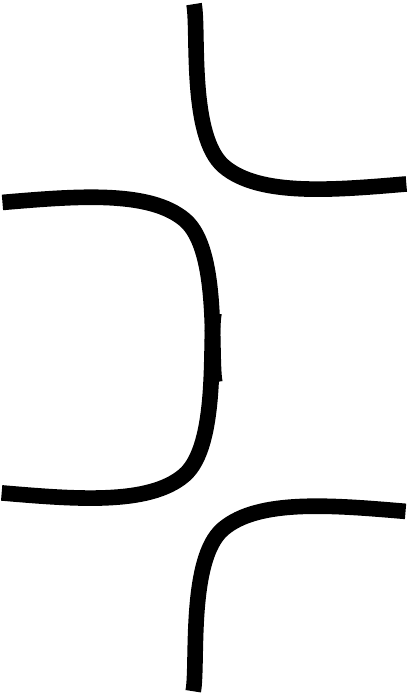}} = \\
\raisebox{-.4\height}{\includegraphics[height=1.5cm]{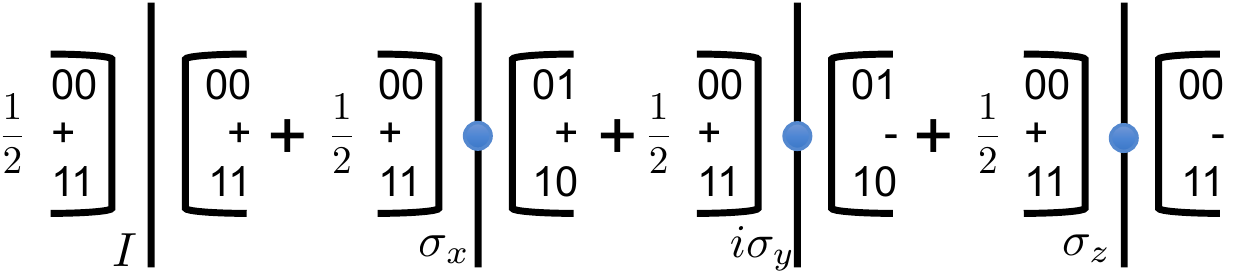}}
\end{array}
\end{equation}
And a similar expansion holds for $T_l$.
\end{itemize}

In fact, all locality preserving unitaries in 1D can be represented as MPUO satisfying Lemma~\ref{canonical_form}.

\begin{mythm}[Locality preserving 1D unitaries as MPUO]
Let $O$ be a locality preserving 1D unitary. It is possible to represent it as a Matrix Product Unitary Operator, as defined in Definition~\ref{def:MPUO}.
\label{LP-MPUO}
\end{mythm}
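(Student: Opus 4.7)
The plan is to invoke the structure theorem of Gross--Nesme--Vogts--Werner~\cite{Gross2012}, which states that every 1D locality preserving unitary $O$, after blocking a constant number of neighboring sites into ``supersites,'' can be expressed as the product of a one-supersite shift and a depth-two brickwork circuit of two-body unitaries acting on pairs of neighboring supersites. Given this decomposition, it suffices to establish three facts: (i) the shift admits an explicit MPU representation; (ii) any depth-two brickwork circuit of local two-body unitaries admits an explicit MPU representation; and (iii) the MPU property is preserved under composition. The theorem then follows by composing the MPU representations of the shift and the two brick layers to recover $O$ (up to the chosen blocking).

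Step (i) has essentially been carried out in Example~3, where $M_r$ explicitly represents the one-site shift with bond dimension $d$; the same construction gives a shift by one supersite after enlarging the physical leg to $d^b$. For step (ii), a single layer of disjoint two-site unitaries $\{U_{2k-1,2k}\}$ is trivially an MPO of bond dimension one if we group each pair into a single site; adding a second, shifted layer $\{U_{2k,2k+1}\}$ only requires the virtual leg to carry one on-site Hilbert space across each bond, so the combined operator admits an explicit MPO representation of bond dimension at most $d^b$. Unitarity on every compatible system size is verified by direct computation, since at each bond the incoming and outgoing virtual states exactly encode the action of the corresponding interlayer gate. For step (iii), if $O_1$ and $O_2$ are MPUOs with local tensors $M_1, M_2$ of bond dimensions $D_1, D_2$, then $O_2 O_1$ is the MPO whose local tensor is obtained by contracting $M_1$ and $M_2$ along the shared physical leg, giving bond dimension $D_1 D_2$; unitarity is inherited automatically from the factors.

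The main technical subtlety will be the handling of boundary conditions and blocking in step~(ii). The GNVW decomposition requires blocking $b$ sites into a supersite, and the brick pattern only closes consistently under periodic boundary conditions when the number of supersites has the appropriate parity. One must therefore verify that by choosing $b$ large enough (and, if necessary, absorbing a trivial identity gate at the boundary supersite), the resulting MPU satisfies $O^{\dagger} O = I$ for \emph{all} admissible system sizes $N$, in accordance with Definition~\ref{def:MPUO}. Once this combinatorial bookkeeping is settled, combining the three steps via the canonical form already established in Lemma~\ref{canonical_form} yields the desired MPUO representation of the original locality preserving unitary $O$.
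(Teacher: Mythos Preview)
Your proposal is correct and follows essentially the same approach as the paper: invoke the GNVW structure theorem to decompose $O$ into shifts and brick layers of local unitaries, exhibit explicit MPU tensors for each (Example~3 for the shift, an analogue of the paper's $M_{tb}$ for the brick layers), and then stack them, using that a composition of MPUOs is again an MPUO. Two minor remarks: the appeal to Lemma~\ref{canonical_form} in your final sentence is unnecessary, since that lemma characterizes the $T$ tensor of an MPUO rather than providing a tool for constructing one; and your concern about parity and blocking is more carefully flagged than in the paper's own proof, which simply asserts that the layer-of-two-body-unitaries MPO is unitary for all system sizes.
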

\begin{proof}
We prove this statement in the following steps:

1. Translation operator by one step can be represented with an MPO as shown with Example 3 above, such that the MPO is unitary for any system size.

2. One layer of non-overlapping unitaries can be represented with an MPO. WLOG, consider a layer of non-overlapping two-body unitaries, which can be represented with a tensor
\begin{align}
M_{tb} = \raisebox{-.4\height}{\includegraphics[height=1.2cm]{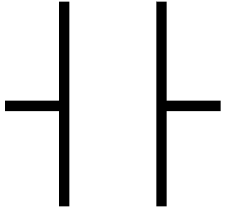}}
\label{Mtb}
\end{align}
when connected together into a chain, this tensor gives the two-body unitaries.
\begin{align}
\raisebox{-.4\height}{\includegraphics[height=1.2cm]{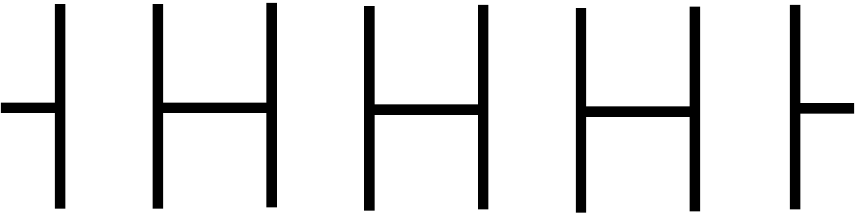}}
\label{Otb}
\end{align}
Such an MPO is unitary for all system sizes.

3. According to Ref.\onlinecite{Gross2012}, all 1D locality preserving unitaries can be decomposed into a finite number of layers of translation and finite depth local unitary circuits which can be further decomposed into a finite number of layers of non-overlapping few-body unitaries. The MPO representation of such a composite can be obtained by stacking the MPO representation for each component. As each component satisfies the MPUO condition that the MPO is unitary for all system sizes, the same is true for the composite MPO. Therefore, all 1D locality preserving unitaries can be represented as a MPUO, with tensors satisfying Lemma~\ref{canonical_form}.
\end{proof}

\section{\label{sec:conditions}Characterization of Matrix Product Unitary Operators}

In this section we prove fixed-point properties of MPUOs. Suppose that $O$ is an MPUO described by tensor $M$. We show that when the individual tensors are blocked, they satisfy equations that we call \emph{fixed-point equations}. These equations give a characterization of finite-bond dimension MPUOs. More importantly they imply that MPUOs are \emph{locality-preserving}. 

In order to obtain these results, we use basic facts about MPS\cite{Perez-Garcia2007}. So, let us first review these starting from the transfer matrix. Define the transfer matrix $\mathbb{E}_M$ of $M$ as
\begin{equation}\label{transfer_matrix}
\mathbb{E}_M = \sum_{ij} M^{ij} \otimes {M^{ij}}^* = \raisebox{-.4\height}{\includegraphics[height=1.3cm]{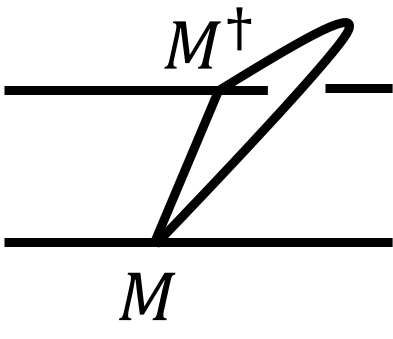}} = \sum_{i} T^{ii},
\end{equation}
and denote the right eigen-vector of $\mathbb{E}_M$ with largest eigenvalue as $r$ and the left eigen-vector with largest eigenvalue as $l$, such that $\langle l | r \rangle= 1$. Assuming the spectral radius of $\mathbb{E}$ is $1$, we have
\begin{equation}\label{left_right_eigenvectors}
\raisebox{-.4\height}{\includegraphics[height=3.8cm]{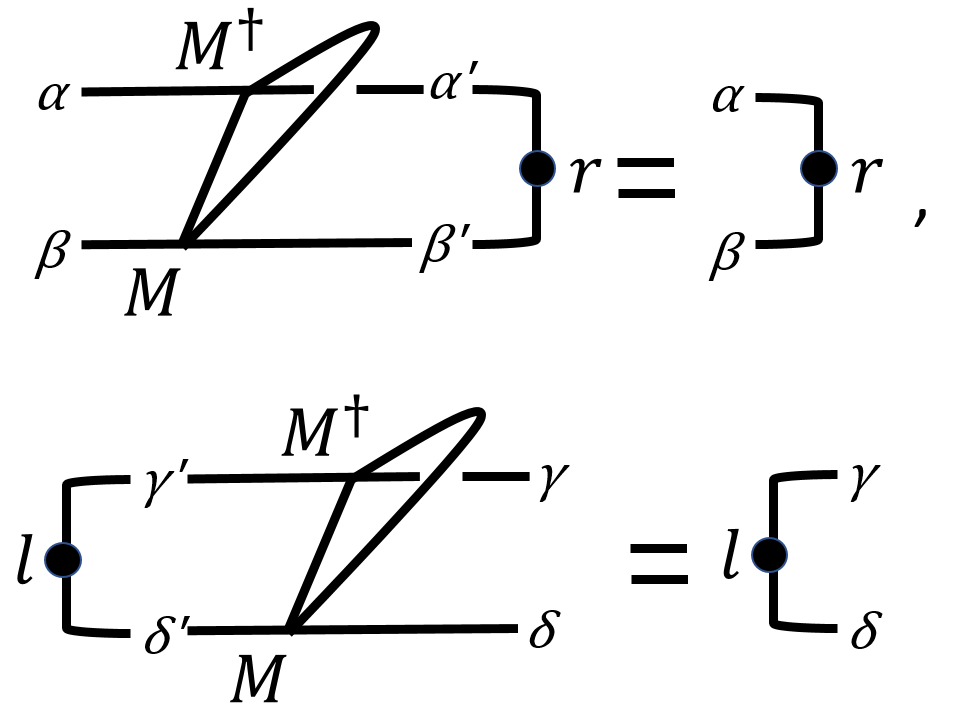}}
\end{equation}

Based on Lemma~\ref{canonical_form}, we can see that if $M$ describes an MPUO, the transfer matrix $\mathbb{E}_M$ is of the following form:

\begin{equation}
\begin{array}{ll}\label{transfer_M}
\mathbb{E}_M= &|v_n \rangle \langle v_n| + \displaystyle\sum^{n/2}_{i=1}\left(\displaystyle\sum_{v_{2i}, v^{\perp}_{2i}} tr(O_{v_{2i}, v^{\perp}_{2i}}) |v_{2i}\rangle \langle v^{\perp}_{2i}| \right.\\ 
& + \left. \displaystyle\sum_{v_{2i-1}, v^{\perp}_{2i-1}} tr(O_{v^{\perp}_{2i-1}, v_{2i-1}}) |v^{\perp}_{2i-1}\rangle \langle v_{2i-1}| \right)\\
&+ \displaystyle\sum^{n}_{i=1}\displaystyle\sum^{d}_{j=1} W^{jj}(i)
\end{array}
\end{equation}

Above we do not know the values of the trace of the operators, but we do know that the left and right eigen-vectors of $\mathbb{E}_M$ have to take the following form:

\begin{equation}\label{left_right_eigenvectors_EM}
\begin{array}{ll}
\langle l|&= \langle v_n| + \displaystyle\sum^{n/2}_{i=1} c_{2i-1} \langle v^{\perp}_{2i-1}|,\\
|r \rangle&= |v_n \rangle + \displaystyle\sum^{n/2}_{i=1} c_{2i} |v^{\perp}_{2i} \rangle
\end{array}
\end{equation}

where $c_i$s are complex coefficients.

The left and right eigenvectors, when seen as matrices $r= \sum_{\alpha \beta} r_{\alpha\beta} |\alpha\rangle\langle \beta|$ with elements $r_{\alpha \beta}$ and $l= \sum_{\gamma \delta} l_{\gamma\delta} |\gamma\rangle\langle \delta|$ with elements $l_{\gamma \delta}$, are positive matrices. $\langle l | r \rangle= 1$ since $\langle v_n | v_n \rangle=1$ and $\langle v^{\perp}_{2i}|v^{\perp}_{2j-1} \rangle=0$ for all $i,j$.

Now we are ready to state the results. We define $\tilde{M}^{JI}= M^{j_1 i_1}M^{j_2 i_2} \ldots M^{j_n i_n}$, where $I=i_1i_2...i_n$, $J=j_1j_2...j_n$, as the tensor obtained by blocking the individual tensor $M$. The blocked tensor $\tilde{M}$ satisfies the following fixed-point equations:

\begin{enumerate}

\item Fixed-point equation 1 - Separation:

\begin{align}\label{separation}
\raisebox{-.4\height}{\includegraphics[scale=0.2]{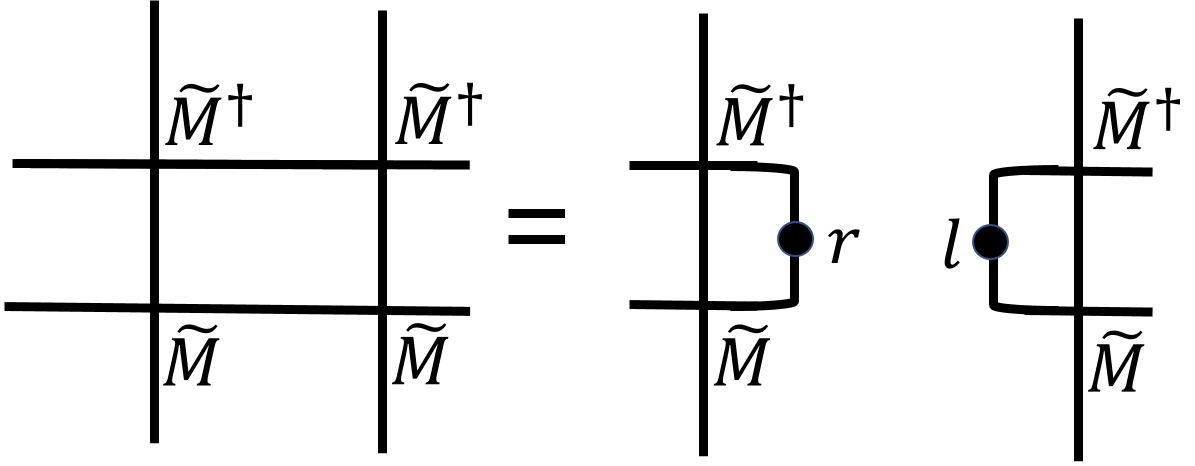}}
\end{align}

\item Fixed-point equation 2 - Isometry:

\begin{align}\label{isometry}
\raisebox{-.4\height}{\includegraphics[scale=0.20]{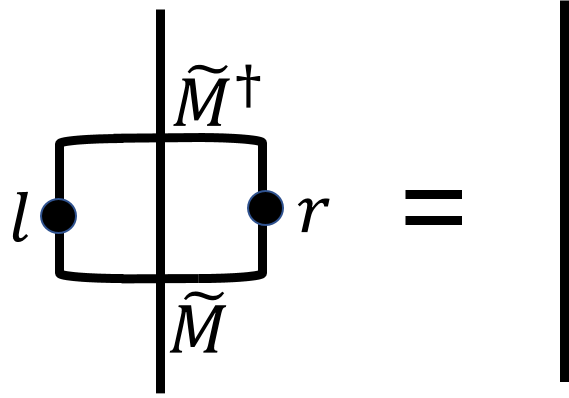}}
\end{align}

\end{enumerate}

where $l$ and $r$ denote the left and right eigenvectors of the transfer matrix $\mathbb{E}_M$ as given in Eq.~\eqref{left_right_eigenvectors_EM}. Eq.~\ref{separation} (separation) and Eq.~\eqref{isometry} (isometry) imply the following equations called \emph{pulling through} conditions, that we frequently make use of in the paper.

\begin{eqnarray} \label{pulling_through}
\nonumber \includegraphics[scale=0.2]{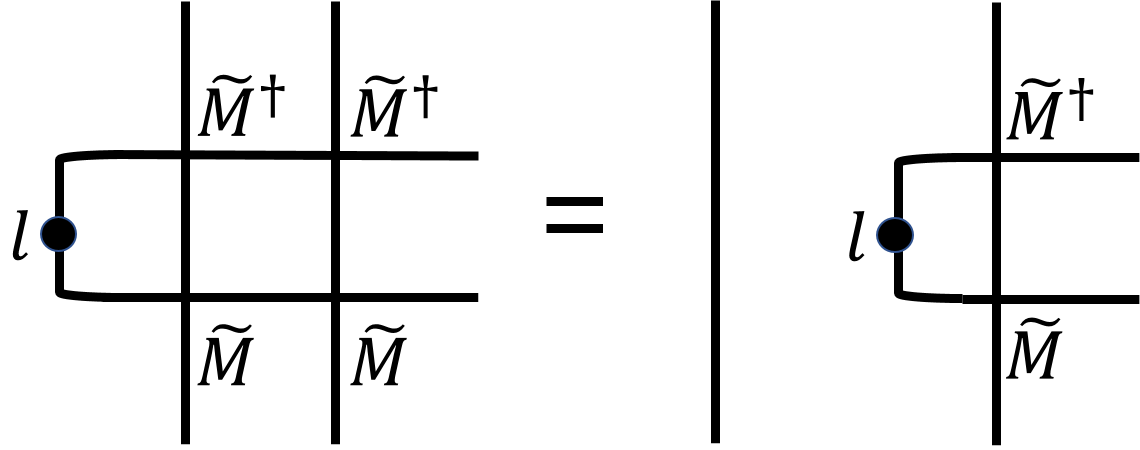}\\ 
\raisebox{-.9\height}{\includegraphics[scale=0.2]{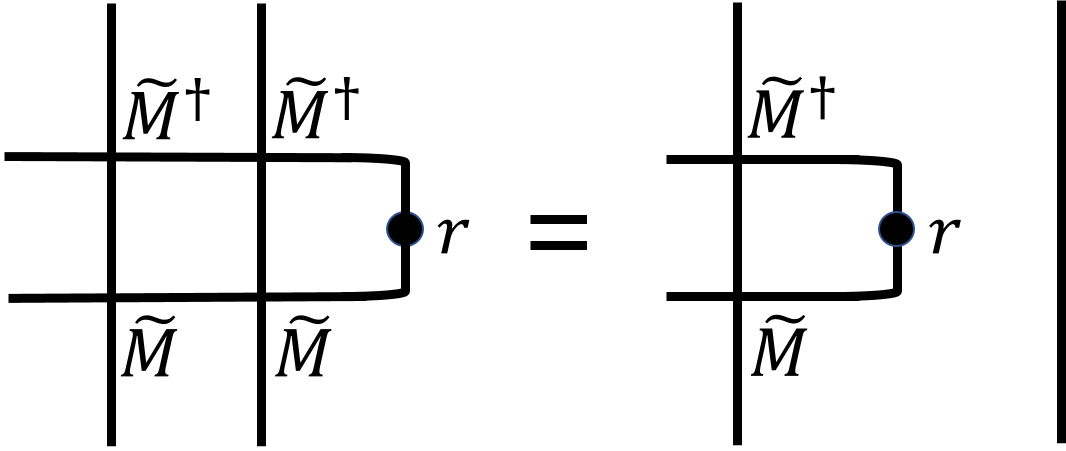} }
\end{eqnarray}

Before proving the above claims, we first give a lemma that explicitly shows the form of the tensor $\tilde{T}^{IJ}$ which is obtained by blocking the tensor $T^{ij}$ $D^2$-times, i.e., $\tilde{T}^{IJ}= T^{i_1 j_1} T^{i_2 j_2} \ldots T^{i_{D^2} j_{D^2}}$.

\begin{mylemm}\label{blocked_canonical_T}
Let the general form of the tensor $T$ be as in Eq.~\eqref{canonical_n} in Lemma~\ref{canonical_form}. Then, the blocked tensor $\tilde{T}^{IJ}=  T^{i_1 j_1} T^{i_2 j_2} \ldots T^{i_{D^2} j_{D^2}}$, where $D^2$ is the bond dimension of the tensor $T$, is of the following form
\begin{align}\label{blocked_canonical}
\raisebox{-.4\height}{\includegraphics[scale=0.14]{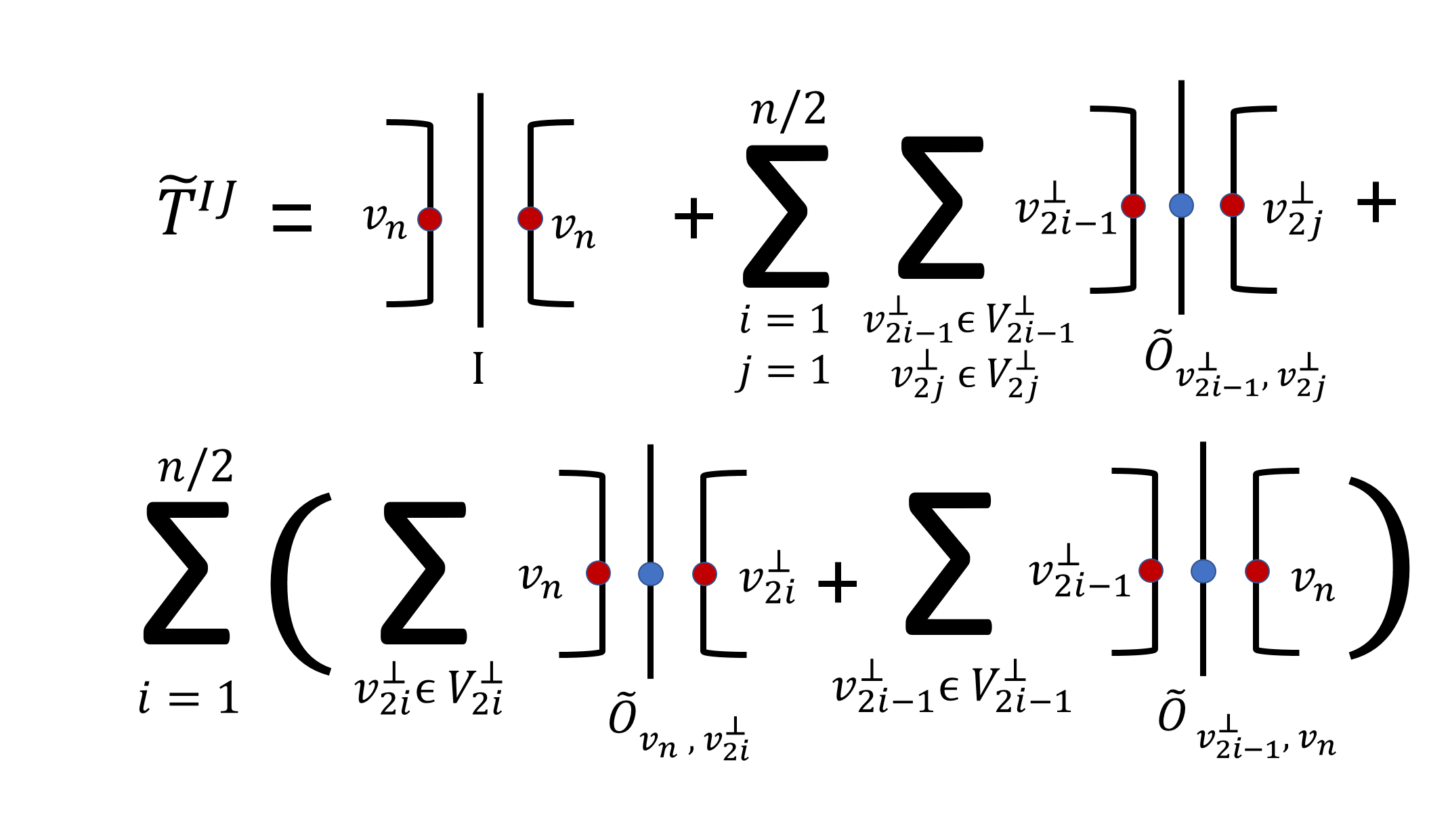}}
\end{align}
\end{mylemm}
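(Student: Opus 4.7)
The plan is to exploit the block decomposition of $T^{ij}$ provided by Lemma~\ref{canonical_form} and to view the blocking operation as ordinary matrix multiplication in the $D^2$-dimensional virtual space. Because the depth $n$ of the canonical decomposition satisfies $n \leq D^2 - 1$, taking $D^2$ factors is enough to exhaust all crossing patterns between the blocks, so that only the fixed-point structure pictured in Eq.~\eqref{blocked_canonical} survives.

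First I would write $T^{ij}$ explicitly in the virtual-space decomposition $V = V_n \oplus V_n^\perp \oplus V_{n-1}^\perp \oplus \cdots \oplus V_1^\perp$, distinguishing three kinds of contributions: (a) the dominant rank-one block $|v_n\rangle\langle v_n|$ tensored with $\delta_{ij}$ on the physical legs; (b) intra-subspace pieces $W^{ij}(i)$ living entirely within each $V_i^\perp \to V_i^\perp$ block; and (c) off-diagonal ``bridge'' terms $\mathrm{tr}(O_{\cdot,\cdot})\,|v_{2k}\rangle\langle v_{2k}^\perp|$ and $\mathrm{tr}(O_{\cdot,\cdot})\,|v_{2k-1}^\perp\rangle\langle v_{2k-1}|$ connecting $V_n$ to the $V_i^\perp$ subspaces. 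Expanding the product $T^{i_1 j_1} T^{i_2 j_2} \cdots T^{i_{D^2} j_{D^2}}$ in this basis then produces a sum over ordered strings drawn from (a), (b), (c). Strings that mix two distinct $V_i^\perp$ subspaces annihilate by orthogonality of the $v_i^\perp$ basis vectors; the all-(a) string gives $|v_n\rangle\langle v_n|$ tensored with identity on all $D^2$ physical legs; pure-$W(i)$ strings give the blocked tensors $\tilde W^{IJ}(i)$ within each $V_i^\perp$; and the remaining mixed strings couple (a) with bridge factors (c), and possibly with $W(i)$ factors on matching indices.

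Next I would sum the surviving mixed strings and check that, after collecting the trace coefficients in the correct positions, they reassemble into the outer product $|r\rangle\langle l|$ dressed onto the $V_n$ block, with $r$ and $l$ as in Eq.~\eqref{left_right_eigenvectors_EM}. The coefficients $c_i$ appearing in those eigenvectors arise precisely from enumerating the admissible placements of a bridge of type (c) within a length-$D^2$ string; the inequality $D^2 \geq n+1$ is the quantitative threshold at which this sum has enough room to saturate to $r$ and $l$. Together with the pure-(a) and pure-$W(i)$ contributions, this yields the form claimed in Eq.~\eqref{blocked_canonical}.

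The main obstacle will be the combinatorial bookkeeping of the mixed strings and the precise reassembly of their weights into the eigenvectors $r$ and $l$. I expect this to require (i) the orthogonality of the $v_i^\perp$ basis vectors together with the defining property $\langle v_i^\perp | W^{jk}(i) | v_i^\perp\rangle = 0$ to eliminate spurious terms, and (ii) an inductive argument on the number of bridges in a string, leveraging $n \leq D^2 - 1$ to show that no string of length $D^2$ can produce a previously unseen block. Beyond these two ingredients, no conceptual input beyond Lemma~\ref{canonical_form} and the standard MPS canonical-form machinery of Ref.~\cite{Perez-Garcia2007} should be required.
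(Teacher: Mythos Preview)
Your overall strategy---expand the product in the virtual-space decomposition and track which strings survive by orthogonality---is the same as the paper's. But two of your claimed survivors are wrong, and fixing them is precisely the content of the lemma.

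\textbf{The $W(i)$ blocks vanish; they do not persist as $\tilde W^{IJ}(i)$.} You assert that pure-$W(i)$ strings give blocked tensors $\tilde W^{IJ}(i)$ inside each $V_i^\perp$. The point you are missing is that each $W^{jk}(i)$ has, by Lemma~\ref{canonical_form}, the same lower/upper-triangular structure as $T^{jk}$ itself, only restricted to $V_i^\perp$ and with \emph{no} diagonal part. Hence as a matrix on $V_i^\perp$ it is strictly triangular, and a product of $\dim V_i^\perp$ such matrices is zero. Since $\dim V_i^\perp < D^2$, blocking $D^2$ copies kills every pure-$W(i)$ string. The paper states this explicitly: ``After blocking $D^2$ times, the terms that come from $W(i)$ do not contract anymore either from left or right, hence they don't appear in the tensor $\tilde T$.'' This nilpotency is the mechanism that makes $D^2$ the correct blocking length, not the enumeration of bridge placements you describe.

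\textbf{The surviving off-diagonal structure is not $|r\rangle\langle l|$.} You propose that the mixed strings reassemble into the outer product $|r\rangle\langle l|$ with the coefficients $c_i$ of Eq.~\eqref{left_right_eigenvectors_EM}. That cannot be right: your item~(c) already mislabels the bridge terms as carrying scalar weights $\mathrm{tr}(O_{\cdot,\cdot})$, which is the \emph{transfer-matrix} expression Eq.~\eqref{transfer_M}, not the full tensor $T^{ij}$ of Lemma~\ref{canonical_form}, whose bridges carry operator-valued $O_{v,v'}$ on the physical legs. What actually survives in $\tilde T^{IJ}$ (and what the figure Eq.~\eqref{blocked_canonical} depicts) are the four families $|v_n\rangle\, I^{\otimes D^2}\,\langle v_n|$, $|v_n\rangle\,\tilde O\,\langle v_{2i}^\perp|$, $|v_{2i-1}^\perp\rangle\,\tilde O\,\langle v_n|$, and $|v_{2i-1}^\perp\rangle\,\tilde O\,\langle v_{2j}^\perp|$, with operator-valued $\tilde O$ on the blocked physical legs. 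The eigenvectors $r$ and $l$ make no appearance in $\tilde T$ itself; they enter only in the \emph{next} step (Theorem~\ref{MPU_implies_conditions}), where one contracts $\tilde T$ against $|r\rangle$ and $\langle l|$ and uses $\langle v_n|r\rangle=\langle l|v_n\rangle=1$ together with the parity structure of Eq.~\eqref{left_right_eigenvectors_EM} to obtain separation and isometry.

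So the bookkeeping you flag as ``the main obstacle'' is actually simpler than you anticipate---it is the orthogonality relations $\langle v_i^\perp|v_j^\perp\rangle=0$ and $\langle v_i|v_j^\perp\rangle=0$ for $j\le i$ that do all the work---but the two outputs you predict are not the right ones.
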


\begin{proof}
By Lemma~\ref{canonical_form}, the general form of the tensor $T$ can be taken as 
\begin{equation}
\begin{array}{ll}
&T= |v_n \rangle I \langle v_n| + \displaystyle\sum^{n/2}_{i=1}\left(\displaystyle\sum_{v_{2i}, v^{\perp}_{2i}}  |v_{2i}\rangle O_{v_{2i}, v^{\perp}_{2i}} \langle v^{\perp}_{2i}| \right.\\ 
&+\left. \displaystyle\sum_{v_{2i-1}, v^{\perp}_{2i-1}} |v^{\perp}_{2i-1}\rangle O_{v^{\perp}_{2i-1}, v_{2i-1}} \langle v_{2i-1}| \right) + \displaystyle\sum_{i}W(i)
\end{array}
\end{equation}
where $|v_i\rangle \in V_i$, $|v^{\perp}_i \rangle \in V^\perp_i$, and $V= V_{n} \oplus V^\perp_{n} \oplus V^{\perp}_{n-1} \oplus \ldots \oplus V^{\perp}_{1}$. Now, imagine that we block $D^2$ of these tensors and obtain the tensor $\tilde{T}^{IJ}=  T^{i_1 j_1} T^{i_2 j_2} \ldots T^{i_{D^2} j_{D^2}}$. Using the facts that $\langle v^{\perp}_i |v^{\perp}_j \rangle=0$ for all $i$ and $j$, and  $\langle v_i|v^{\perp}_j \rangle= 0$ for all $j \leq i$, it's only a matter of careful book-keeping to show that only the terms with $|v_n \rangle \langle v_n|$, $|v^{\perp}_{2i-1} \rangle \langle v^{\perp}_{2j}|$, $|v_n \rangle \langle v^{\perp}_{2i}|$ and $|v^{\perp}_{2i-1} \rangle \langle v_n|$ appear in the expression of the tensor $\tilde{T}^{IJ}$. Notice that the $W(i)$ denotes the operator components within the block $V^\perp_{i}$ which only has nondiagonal elements and except the diagonal element it's in the same form as $T$. After blocking $D^2$ times, the terms that come from $W(i)$ do not contract anymore either from left or right, hence they don't appear in the tensor $\tilde{T}$. Note that the operator component with left and right indices $|v_n \rangle \langle v_n|$ acts as $I^{\otimes (D^2)}$. 
\end{proof}

Now, we prove that the blocked tensor $\tilde{M}$ that describes the MPUO satisfies the separation and isometry fixed-point equations given above in Eq.~\eqref{separation} and Eq.~\eqref{isometry}.

\begin{mythm}[MPUO implies fixed-point equations]
Let $O$ be an MPUO described by the tensor $M$. Then there exists a finite number $n$ such that the blocked tensor $\tilde{M}$, which is obtained by blocking $D^2$ of the tensor $M$, satisfies the fixed point equations, i.e., Eq.\eqref{separation} and Eq.\eqref{isometry}.
\label{MPU_implies_conditions}
\end{mythm}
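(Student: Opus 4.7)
The plan is to leverage Lemma~\ref{blocked_canonical_T}, which after blocking $D^2$ copies of $T$ already pins down which virtual-index sectors survive in $\tilde{T}^{IJ}$. The two fixed-point equations correspond to two natural contractions of this blocked tensor, so the strategy is: (i) read off the exact form of $\tilde{T}^{IJ}$ from Lemma~\ref{blocked_canonical_T}; (ii) contract it against the explicit left and right eigenvectors $l,r$ given in Eq.~\eqref{left_right_eigenvectors_EM}; and (iii) use the orthogonality relations between the $|v_n\rangle$ and $|v^\perp_i\rangle$ sectors to reassemble everything into the desired tensor-network identities.

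For the separation equation, I would isolate the three families of surviving terms identified in Lemma~\ref{blocked_canonical_T}: the $|v_n\rangle\langle v_n|$ term, which carries $I^{\otimes D^2}$ on the physical legs; the one-sided boundary terms $|v_n\rangle\langle v^\perp_{2i}|$ and $|v^\perp_{2i-1}\rangle\langle v_n|$; and the pure cross terms $|v^\perp_{2i-1}\rangle\langle v^\perp_{2j}|$. The next step is to match the coefficients of these terms against the coefficients $c_{2i-1}, c_{2i}$ appearing in $|r\rangle$ and $\langle l|$. Those coefficients are fixed by $\mathbb{E}_M |r\rangle = |r\rangle$ and $\langle l|\mathbb{E}_M = \langle l|$ together with the explicit transfer-matrix form in Eq.~\eqref{transfer_M}. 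Once this matching is carried out, $\tilde{T}^{IJ}$ is seen to factorize as $|r\rangle\langle l|$ on the virtual legs times the identity on the physical legs, which is precisely Eq.~\eqref{separation}.

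For the isometry equation, the plan is to use the unitarity $O^{\dagger}O = I^{\otimes N}$ more directly. Summing $\tilde{T}^{II}$ over the physical index $I$ returns $\mathbb{E}_M^{D^2}$, and by the block structure of Eq.~\eqref{transfer_M} the only channel that carries eigenvalue $1$ is the $|v_n\rangle\langle v_n|$ block; after $D^2$ iterations all other contributions either have vanished (the $W^{jj}(i)$ nilpotent pieces) or project to the rank-one subspace $|r\rangle\langle l|$. Combining this with the separation identity already established and peeling off one copy of $\tilde{M}$ using the orthogonality of the virtual subspaces $V_n$ and $V^\perp_i$ yields that $\tilde{M}^{\dagger}\tilde{M}$, closed against $|r\rangle$ and $\langle l|$ on its virtual legs, acts as the identity on the physical legs, which is exactly Eq.~\eqref{isometry}.

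The main obstacle I anticipate is the careful bookkeeping of the $W^{jk}(i)$ components appearing in Lemma~\ref{canonical_form}. These blocks carry only off-diagonal content within each $V^\perp_i$, and the argument in Lemma~\ref{blocked_canonical_T} shows they drop out of $\tilde{T}$ after blocking $D^2$ times; one must verify that they similarly do not contaminate the overlaps with $l$ and $r$. This relies on the dimension bound $n \leq D^2-1$ noted after Lemma~\ref{canonical_form}, which is exactly why $D^2$ blockings suffice. A secondary subtlety is confirming that $\mathbb{E}_M^{D^2}$ really projects cleanly onto $|r\rangle\langle l|$ rather than retaining subleading Jordan blocks, but this is guaranteed by the canonical form of Lemma~\ref{canonical_form}, in which only the $|v_n\rangle\langle v_n|$ channel sustains an eigenvalue of modulus one while every other channel is strictly contracting after enough iterations.
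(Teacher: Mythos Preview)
Your plan contains a genuine misreading of what Eq.~\eqref{separation} asserts. You write that after matching coefficients, ``$\tilde{T}^{IJ}$ is seen to factorize as $|r\rangle\langle l|$ on the virtual legs times the identity on the physical legs, which is precisely Eq.~\eqref{separation}.'' That factorization would mean $\tilde{T}^{IJ}=\delta^{IJ}\,|r\rangle\langle l|$, and this is simply false: Lemma~\ref{blocked_canonical_T} tells you that $\tilde{T}$ still carries the off-diagonal operators $\tilde{O}_{v_n,v^\perp_{2i}}$, $\tilde{O}_{v^\perp_{2i-1},v_n}$, and $\tilde{O}_{v^\perp_{2i-1},v^\perp_{2j}}$ on the physical legs, and these are \emph{not} proportional to the identity. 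No amount of matching scalar coefficients $c_{2i}$ against operator-valued blocks will collapse them. The separation equation is a statement about \emph{two} copies of $\tilde{T}$: it says that the product $\tilde{T}\tilde{T}$ (contracted on the shared virtual bond) equals $\tilde{T}\,|r\rangle\langle l|\,\tilde{T}$. The paper proves this by observing that in $\tilde{T}\tilde{T}$ the middle virtual index can only run over the intersection of the right-index support $\{v_n,v^\perp_{\text{even}}\}$ of the first factor and the left-index support $\{v_n,v^\perp_{\text{odd}}\}$ of the second; the sole survivor is $v_n$, and the resulting expression coincides with inserting $|r\rangle\langle l|$ once the overlaps $\langle v_n|r\rangle=\langle l|v_n\rangle=1$ and $\langle v^\perp_{2i}|r\rangle=\langle l|v^\perp_{2i-1}\rangle=0$ are used.

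Your route to isometry is also more circuitous than needed, and partly circular. The paper obtains Eq.~\eqref{isometry} in one line as $\langle l|\tilde{T}|r\rangle=I$, using the same orthogonality relations; it even notes that this holds before blocking. You instead pass through $\mathbb{E}_M^{D^2}$, invoke the already-established separation identity, and then ``peel off one copy of $\tilde{M}$'' --- but separation itself is what you are still trying to establish correctly, and the peeling step is not specified. The direct computation avoids all of this.
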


\begin{proof}
By Lemma~\ref{canonical_form} and Lemma~\ref{blocked_canonical_T}, we know that an MPUO implies the general form for $\tilde{T}$ as in Eq.~\eqref{blocked_canonical}. By direct calculation the LHS of Eq.~\eqref{separation} is given as
\begin{equation}
\begin{array}{ll}
|v_n \rangle I \otimes I \langle v_n|& + \displaystyle\sum^{n/2}_{i,j} \displaystyle\sum_{v^{\perp}_{2i-1}, v^{\perp}_{2j}} |v^{\perp}_{2i-1} \rangle \tilde{O}_{v^{\perp}_{2i-1}, v_n} \otimes \tilde{O}_{v_n, v^{\perp}_{2j}} \langle v^{\perp}_{2j}|\\
&+ \displaystyle\sum^{n/2}_{i}\left( \displaystyle\sum_{v^{\perp}_{2i}} |v_n \rangle I \otimes \tilde{O}_{v_n, v^{\perp}_{2i}} \langle v^{\perp}_{2i}|\right.\\ 
&\left. +  \displaystyle\sum_{v^{\perp}_{2i-1}} |v^{\perp}_{2i-1} \rangle \tilde{O}_{v^{\perp}_{2i-1}, v_n} \otimes I \langle v_{n}|  \right)
\end{array}
\end{equation}
which is also equal to the RHS of the same equation, considering the fact that $\langle v_n | r \rangle = \langle l | v_n \rangle= 1$ and $\langle v^{\perp}_{2i}| r \rangle = \langle l | v^{\perp}_{2i-1}\rangle= 0$ for all $i$, which are easily seen from the form of the left and right eigen-vectors derived in Eq.~\eqref{left_right_eigenvectors_EM}. This concludes the proof of the separation equation. Using the same facts, it is straightforward to prove the isometry condition given in Eq.~\eqref{isometry}. It is the following equation that follows immediately from the above facts
\begin{align}
\langle l| \tilde{T} |r \rangle= I.
\end{align}
This completes the proof.
As a side remark it's also straightforward to see that the isometry equation~\eqref{isometry} is true even before blocking, i.e., $\langle l | T | r\rangle=I$.
\end{proof}

Theorem~\ref{MPU_implies_conditions} gives a characterization of MPUOs $O$ by tensors $\tilde{M}$ that satisfies the fixed-point equations, i.e., Eqs.\eqref{separation} and \eqref{isometry}.

Another consequence of the fixed-point equations is what we call the pulling through equations, which is given as a corollary as follows.

\begin{mycorol}
The fixed point equations, i.e., Eq.~\eqref{separation} and Eq.~\eqref{isometry}, imply the pulling through equations, i.e., Eq.~\eqref{pulling_through}.
\end{mycorol}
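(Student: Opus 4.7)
The plan is to derive the pulling through equations diagrammatically by combining the separation and isometry identities through short tensor manipulations. The intuition is that separation tells us that adjacent blocks of $\tilde{T}$ factor through a rank-one bridge $|r\rangle\langle l|$, while isometry says that closing a single $\tilde{T}$ block between $\langle l|$ and $|r\rangle$ collapses it to the identity on the physical legs. Together these two facts allow us to ``slide'' an $r$ (respectively $l$) boundary through the tensor $\tilde{M}$ (respectively $\tilde{M}^{\dagger}$), which is exactly what pulling through asserts.

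Concretely, for the left-to-right pulling equation, I would start from the configuration on its LHS, which features a block of $\tilde{M}$'s (and/or $\tilde{M}^{\dagger}$'s) with $|r\rangle$ attached on one virtual leg. The first step is to grow this configuration by inserting the isometry equation $\langle l|\tilde{T}|r\rangle = I$ in an unused region, which is harmless since it just multiplies by the identity on the physical legs. This produces a configuration that contains two adjacent $\tilde{T}$ blocks and is therefore set up to apply the separation equation. Applying separation rewrites those two adjacent $\tilde{T}$ blocks as a tensor product of two single $\tilde{T}$ blocks linked only through the rank-one bridge $|r\rangle\langle l|$. At this point the diagram has split into two disconnected pieces up to virtual scalars, and one of the pieces is again of the form $\langle l|\tilde{T}|r\rangle$, which by isometry collapses to a physical identity. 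What remains is the desired RHS of the pulling through equation, with $|r\rangle$ now appearing on the opposite virtual leg.

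The right-to-left pulling equation is proved by the mirror image of the same argument: insert $\langle l|\tilde{T}|r\rangle = I$ on the other side, apply separation to the newly-formed $\tilde{T}\tilde{T}$ pair, and collapse the irrelevant factor using isometry a second time. In both cases the virtual bridge $|r\rangle\langle l|$ introduced by separation is precisely what carries the boundary vector from one side to the other.

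The main obstacle is not conceptual but diagrammatic: one must carefully keep track of which virtual leg is contracted with $l$ versus $r$, and ensure that the inserted $\langle l|\tilde{T}|r\rangle$ factor is positioned so that applying separation produces a rank-one bridge at the correct location. Once the order of operations is fixed, the identities of Theorem~\ref{MPU_implies_conditions} do all the work, and no additional input about the internal canonical-form structure of $T$ (as in Lemma~\ref{canonical_form} or Lemma~\ref{blocked_canonical_T}) is needed at this stage.
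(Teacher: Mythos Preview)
Your proposal uses the same two ingredients as the paper --- separation to introduce the rank-one bridge $|r\rangle\langle l|$ and isometry to collapse a closed $\tilde T$ block --- and the intuition you describe is exactly right. The paper's execution is slightly more direct: starting from the LHS of the pulling-through equation, it applies \emph{separation first} (the LHS already contains two adjacent $\tilde T$ blocks, so no growing is needed), and then applies isometry once to collapse the block that now carries $\langle l|$ on the left and $|r\rangle$ on the right; what remains is the RHS.

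Your proposed preliminary step of ``inserting $\langle l|\tilde T|r\rangle=I$ in an unused region'' is unnecessary and, as you have described it, would not actually work: a fully closed $\langle l|\tilde T|r\rangle$ has no open virtual legs, so tensoring it in does not produce two \emph{adjacent} $\tilde T$ blocks on which separation can act. The resolution is simply that the LHS already has the two-block structure, so separation applies immediately. Drop that initial insertion and your argument coincides with the paper's: separation, then isometry, then done (and the mirror argument for the other direction).
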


\begin{proof}
We start with the LHS of the pulling through equation, i.e., Eq.~\eqref{pulling_through}. We apply the fixed-point equations, namely separation, i.e., Eq.~\eqref{separation} and then apply the isometry, i.e., Eq.~\eqref{isometry}, respectively. Pictorially, it follows as below.
\begin{align}
\raisebox{-.4\height}{\includegraphics[scale=0.18]{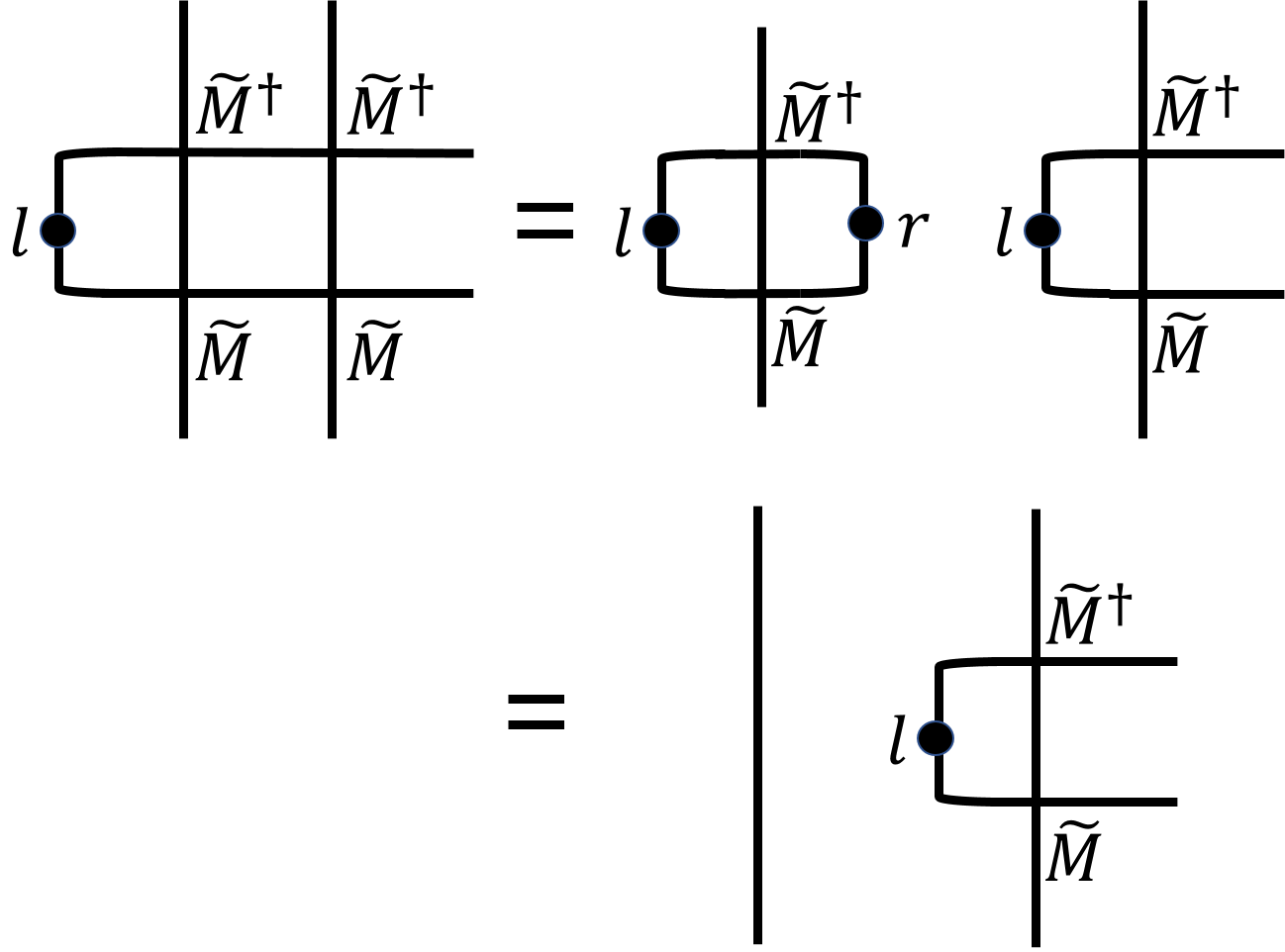}}
\end{align}
The other pulling through equation from right to left follows from separation and isometry fixed-point equations in the same way.
\end{proof}

Finally we close this section by showing that all finite-bond dimension MPUOs are locality-preserving. It means that, it maps any geometrically $k$-local operator to a geometrically $(k+c)$-local operator, where $c$ is a constant independent of the system size. This is proven in the following corollary.

\begin{mycorol}[MPUOs are locality-preserving]
Every MPUO is locality preserving, namely they map geometrically $k$-local operators to geometrically at most $(k+c)$-local operators where $c$ is a constant independent of the system size.
\end{mycorol}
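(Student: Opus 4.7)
The plan is to take a geometrically $k$-local operator $P$ supported on a contiguous region $R$ of $k$ sites, draw $OPO^{\dagger}$ as a two-layer tensor network, and use the fixed-point equations of Theorem~\ref{MPU_implies_conditions} to collapse the far-away part into the identity, leaving a residual support that is only $O(D^2)$ sites wider than $R$.

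First I would replace $M$ by the blocked tensor $\tilde M$ (combining $D^2$ consecutive copies of $M$), so that separation~\eqref{separation}, isometry~\eqref{isometry}, and pulling through~\eqref{pulling_through} all apply. Let $R'$ be the smallest union of blocked sites containing $R$; it covers at most $k + O(D^2)$ physical sites. Then I would draw $OPO^{\dagger}$ with $\tilde M$ on top and $\tilde M^{\dagger}$ on bottom on every blocked site: inside $R'$ the two layers sandwich $P$ (padded by identities on sites in $R'\setminus R$), while on every blocked site outside $R'$ the middle physical legs are contracted directly, producing the blocked composite $\tilde T$ of Lemma~\ref{blocked_canonical_T}.

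Next I would collapse the exterior. Since $O^{\dagger}O = I$ for every system size (Definition~\ref{def:MPUO}), the MPO built from $\tilde T$ is the identity, and on the virtual level this fact is captured by the isometry $\langle l|\tilde T|r\rangle = I$ together with the transfer-matrix eigenstructure in Eq.~\eqref{left_right_eigenvectors_EM}. I would insert the eigenvectors $|r\rangle$ and $\langle l|$ at the two boundaries of $R'$ and then march them inward by one buffer block on each side, using the pulling through equations~\eqref{pulling_through} in combination with separation~\eqref{separation} to rewrite the boundary $\tilde M$ and $\tilde M^{\dagger}$ into tensors that act only on the physical legs of that buffer block. Everything strictly outside the one-block buffer is then the identity on its own physical legs, because an all-identity region of $\tilde T$'s sandwiched between $|r\rangle$ and $\langle l|$ is identity by the isometry fixed-point equation.

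What survives is a linear operator whose support is contained in $R'$ plus one buffer block on each side, i.e., at most $k + 3D^2$ consecutive physical sites, a bound independent of the system size $N$ and of the position of $R$. This gives the constant $c = 3D^2$ (or any comparable constant depending only on the bond dimension $D$), establishing the claim. The step I expect to be the main obstacle is the bookkeeping at the two boundaries: one must check that after the pulling through plus separation steps the inserted pair $|r\rangle\langle l|$ is fully absorbed into the new local operator on $R'$ and does not leave any dangling virtual contraction that could propagate further out into the identity region. With the explicit canonical forms of Lemma~\ref{canonical_form} and Lemma~\ref{blocked_canonical_T} at hand, this reduces to a careful tensor-network computation rather than any new conceptual input.
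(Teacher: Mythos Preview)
Your proposal is correct and follows essentially the same route as the paper: block to reach the fixed-point tensor $\tilde M$, sandwich the local operator between $O$ and $O^{\dagger}$ so that every exterior site becomes a copy of $\tilde T$, and then use the fixed-point equations to collapse that exterior to the identity, leaving support on $R'$ plus an $O(1)$ number of buffer blocks.

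The only place your write-up is slightly more roundabout than the paper is at the boundary step. You speak of ``inserting $|r\rangle$ and $\langle l|$ at the two boundaries of $R'$'' and then invoking pulling through; strictly speaking separation~\eqref{separation} lets you insert $|r\rangle\langle l|$ only between two adjacent $\tilde T$'s, not between the $R'$ block (which carries the nontrivial $P$) and the first exterior $\tilde T$. The paper sidesteps this entirely: it applies separation between every consecutive pair of exterior $\tilde T$'s and then isometry $\langle l|\tilde T|r\rangle=I$ to each interior one, leaving exactly one surviving $\tilde T$ attached to each end of $R'$. That yields a $(k+2)$-local operator in blocked units, hence $c=2D^2$ in physical sites, with no need for pulling through and no residual ``dangling virtual contraction'' to worry about. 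Your concern about the boundary bookkeeping thus dissolves once you place the separation cuts one block further out.
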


\begin{proof}
An MPUO $O$ acts on an operator $O_k$ as $O: O_k \rightarrow O^{\dagger} O_k O$.Pictorially it is shown by 
\begin{align}
\raisebox{-.4\height}{\includegraphics[scale=0.18]{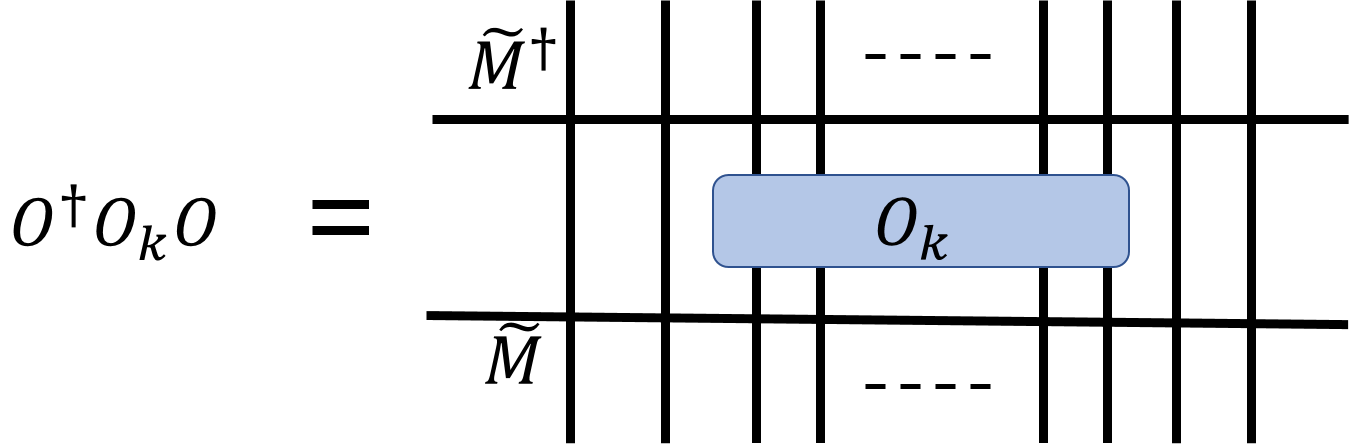}}
\end{align}
Using fixed-point equations, it is straightforward to see that 
\begin{align}
\raisebox{-.4\height}{\includegraphics[scale=0.18]{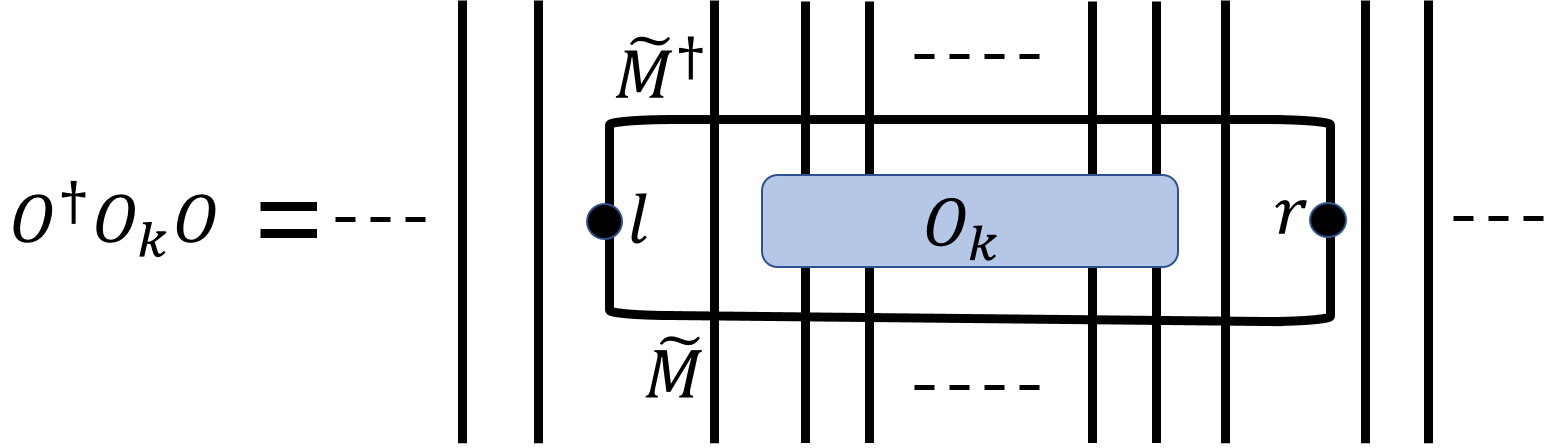}}
\end{align}
is a $(k+2)$-local operator. Hence after blocking sites, MPUOs map $k$-local operators to at most $(k+2)$-local operators. This means that before blocking, a $k$-local operator is mapped to at most a $(k+2D^2)$-local operator, since we are guaranteed to reach the fixed point after blocking $D^2$ sites.
\end{proof}

\section{\label{sec:index} Extracting GNVW index from MPUO representation}

\subsection{Review of GNVW index}
In Ref.\onlinecite{Gross2012}, Gross, Nesme, Vogts and Werner
proved that 1D locality preserving unitaries (called cellular automata in that paper) can be classified according to how much information is flowing across a cut in the chain. For example, finite depth local unitary circuits -- a finite number of layers of local unitaries where unitaries within each layer do not overlap with each other -- all belong to one class and there is zero information flow. On the other hand, translation by one step in a spin $1/2$ chain belongs to another class and there is a flow of a single spin $1/2$ across any cut. 

More specifically, Ref.\onlinecite{Gross2012} defined two 1D locality preserving untiaries to be equivalent to each other if and only if they differ from each other by a finite depth local unitary circuit and showed that every 1D locality preserving unitary is then equivalent to some translation operation. Each equivalence class is characterized by an index (the GNVW index) which measures how much translation is taking place: if there is a translation of $p$ dimensional Hilbert space by $m$ steps to the right, the index is $p^m$; if there is a translation of $q$ dimensional Hilbert space by $n$ steps to the left, the index is $1/q^n$; if there is translation in both directions, the index is $p^m/q^n$. Such an index is consistent with the equivalence class structure of locality preserving unitaries because it was shown that when two locality preserving operators multiply, their GNVW index also multiply
\begin{equation}
I_{\text{GNVW}}(O_1O_2) = I_{\text{GNVW}}(O_1)I_{\text{GNVW}}(O_2)
\label{III}
\end{equation}

For 1D locality preserving unitaries, the index is always a positive rational number and can be calculated as
\begin{equation}
I_{\text{GNVW}}(O) := \frac{\eta(O\mathcal{A}_LO^{\dagger},\mathcal{A}_R)}{\eta(\mathcal{A}_L,O\mathcal{A}_RO^{\dagger})}
\end{equation}
where $\mathcal{A}_L$ is the set of operators within distance $l_0$ on the left hand side of a cut and $\mathcal{A}_R$ is the set of operators within distance $l_0$ on the right hand side of the cut. $\eta(\mathcal{A},\mathcal{B})$ measures the overlap between the two sets of operators and is defined as
\begin{equation}
\eta(\mathcal{A},\mathcal{B}) := \frac{\sqrt{p_ap_b}}{p_{\Lambda}}\sqrt{\sum_{i,j=1}^{p_a}\sum_{l,m=1}^{p_b}\left|Tr_{\Lambda}\left(\hat{e}^{a\dagger}_{ij}\hat{e}^b_{lm}\right) \right|^2}
\end{equation}
where $\hat{e}^{a}_{ij}$ is the set of basis operators in $\mathcal{A}$ and there are $p_a$ of them; $\hat{e}^{b}_{lm}$ is the set of basis operators in $\mathcal{B}$ and there are $p_b$ of them; $\Lambda$ is a segment in the chain containing both $a$ and $b$. The GNVW index defined in this way converges to the positive rational number characterizing information flow when $l_0$ becomes large.

\subsection{Rank-ratio index $=$ $($GNVW index$)^2$}

How to extract the GNVW index from the matrix product representation of the locality preserving unitary operators? In this section, we show that it can be extracted as the square root of the Rank-Ratio index, which is defined as the ratio between the rank of the left and right SVD decompositions of the tensor $M$ in the representation.

\begin{mydef}[Rank-Ratio Index]
Let $M$ be the tensor in the matrix product representation of a unitary operator with physical legs in the up and down directions and virtual legs in the left and right directions. The Rank-Ratio Index is defined as the ratio between the rank of the SVD decomposition between left,down--right,up legs and the rank of the SVD decomposition between left,up--right,down legs. Graphically, the Rank-Ratio Index is given by
\begin{equation}
I_{\text{RR}}(M) = \text{rank}\left(\raisebox{-.4\height}{\includegraphics[height=0.8cm]{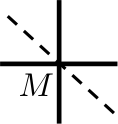}}\right) \bigg/ \text{rank}\left(\raisebox{-.4\height}{\includegraphics[height=0.8cm]{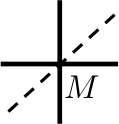}}\right)
\label{index}
\end{equation}
\end{mydef}

To demonstrate the connection between the Rank-Ratio index defined above and the GNVW index in Ref.~\onlinecite{Gross2012}, first we need to define the injectivity condition for matrix product operators. This definition is the same as the definition of injectivity as given in Ref.~\onlinecite{Perez-Garcia2007} if we combine the input and output physical legs of the MPO tensor and treat it as a matrix product state. We state this condition in detail below for subsequent discussions.

\begin{mydef}[Injective matrix product operator]
Consider a matrix product operator given by a set of matrices $\{M^{ij}\}$, where $i,j=1,..,d$ label the input and output physical legs. The MPO is called injective if $r_{\alpha\beta}$ and $l_{\gamma\delta}$ defined in Eq.~\ref{left_right_eigenvectors} are full rank matrices with row and column indices $\alpha$, $\beta$ and $\gamma$, $\delta$ respectively.
\label{injectivity}
\end{mydef}

The notion of injectivity is relevant to our discussion of MPUO because if $M$ represents an MPUO, then it can always be put into an injective form by removing redundant virtual leg dimensions. This can be shown by noticing that, if $M$ cannot be put into an injective form by removing redundant virtual dimensions, then each $M^{ij}$ contains at least two blocks in their canonical form. Then correspondingly $T^{ij}$ contains at least two blocks in its canonical form, which is not possible if it represents identity for all system sizes, as we argued below Eq.~\ref{canonical_form_n}.

Moreover, we need the following lemma.
\begin{mylemm}
Consider an MPO represented by an injective tensor $M$. Then
\begin{equation}
\lambda \left(\raisebox{-.4\height}{\includegraphics[height=1.8cm]{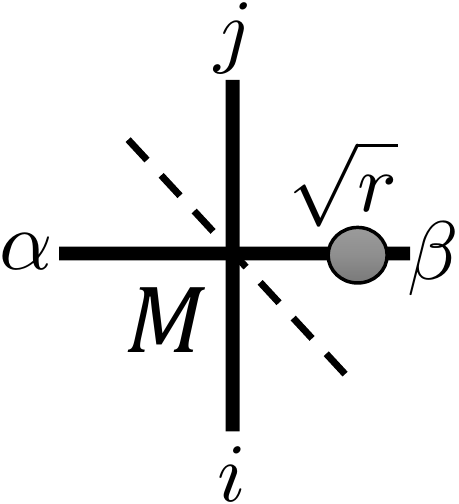}}\right) = \lambda^{1/2} \left(\raisebox{-.40\height}{\includegraphics[height=2.0cm]{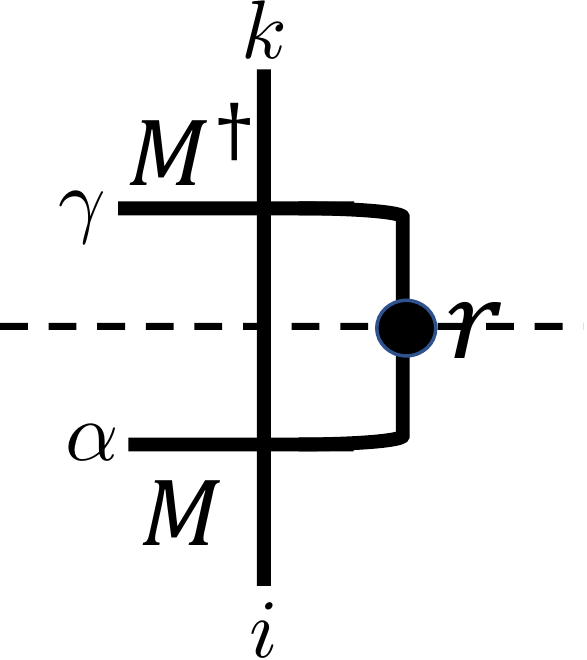}}\right)
\label{index_lemma_l}
\end{equation}
\begin{equation}
\lambda \left(\raisebox{-.4\height}{\includegraphics[height=1.8cm]{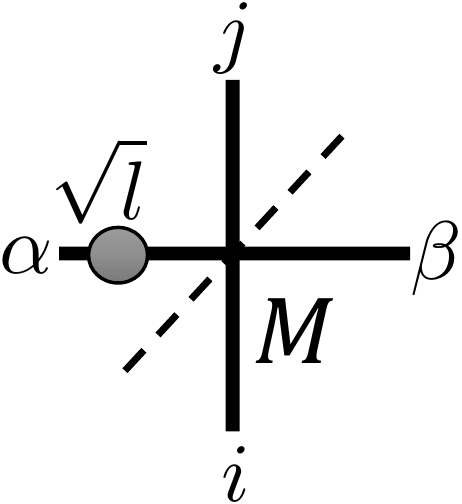}}\right) = \lambda^{1/2} \left(\raisebox{-.40\height}{\includegraphics[height=2.0cm]{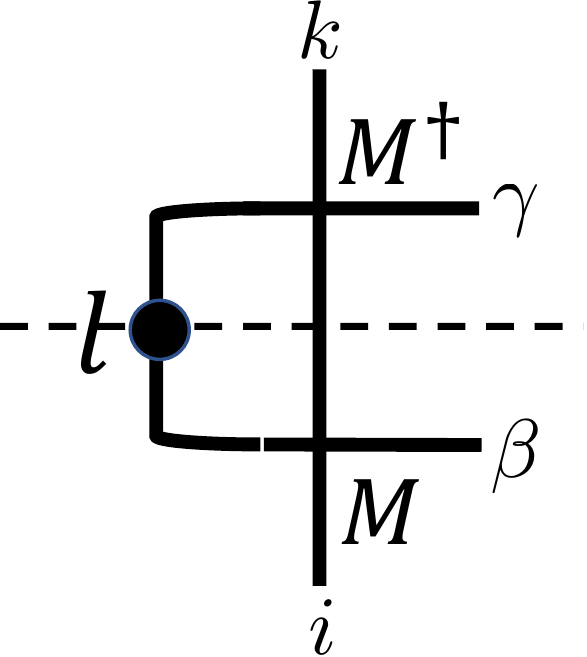}}\right) 
\label{index_lemma_r}
\end{equation}
where the dashed lines denote SVD decompositions across the cut, $\lambda$ denotes the set of singular values of the decomposition and the square root on $\lambda$ is taken element-wise. $l$ and $r$ are the left and right eigenvectors of the transfer matrix $\mathbb{E}_M$ as defined in Eq.~\ref{left_right_eigenvectors}. $l$ and $r$ are denoted with black dots and their square roots are denoted with grey dots.

Similarly, we have
\begin{equation}
\lambda \left(\raisebox{-.4\height}{\includegraphics[height=1.8cm]{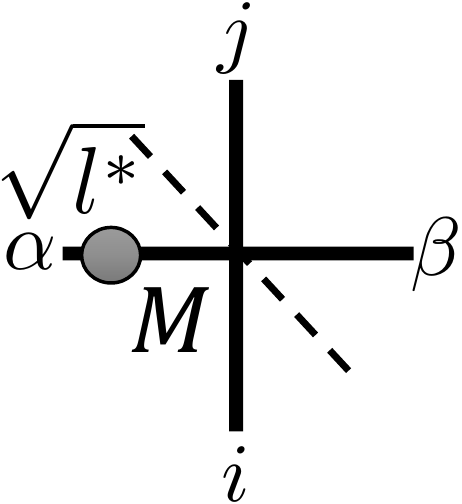}}\right) = \lambda^{1/2} \left(\raisebox{-.40\height}{\includegraphics[height=2.0cm]{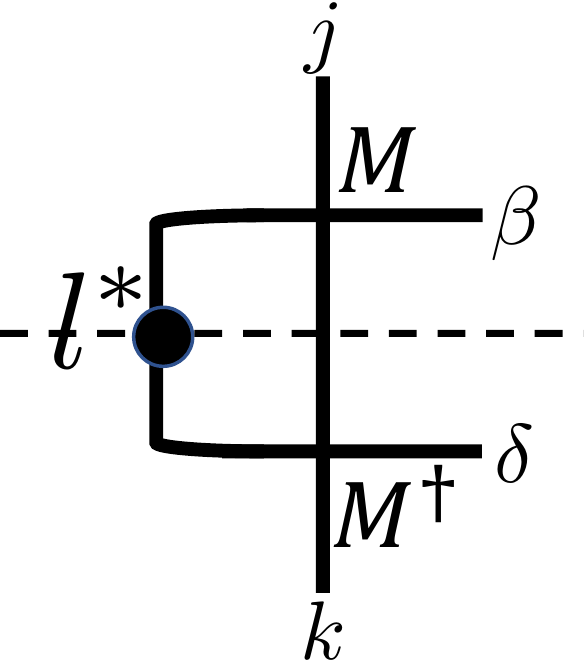}}\right)
\label{index_lemma_ls}
\end{equation}
\begin{equation}
\lambda \left(\raisebox{-.4\height}{\includegraphics[height=1.8cm]{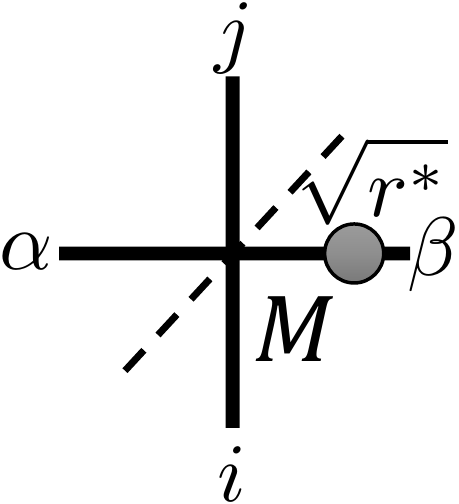}}\right) = \lambda^{1/2} \left(\raisebox{-.40\height}{\includegraphics[height=2.0cm]{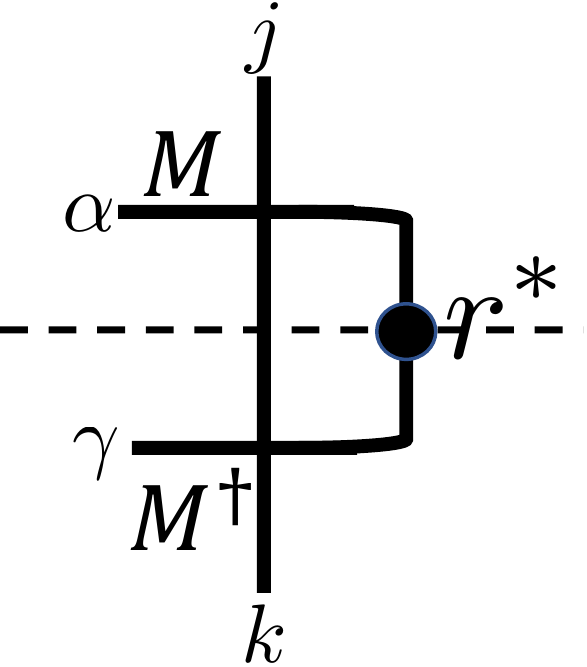}}\right) 
\label{index_lemma_rs}
\end{equation}
where $l^*$ is the complex conjugation of $l$ and $r^*$ is the complex conjugation of $r$.
\label{index_lemma}
\end{mylemm}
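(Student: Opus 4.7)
The plan is to recognize each right-hand side as the product of the left-hand side with its Hermitian conjugate, closed up along the virtual loop, and then invoke the elementary fact that the singular values of a matrix $A$ are exactly the square roots of the eigenvalues of $A A^{\dagger}$ (equivalently $A^{\dagger} A$). What makes this more than a tautology, and what the lemma is really about, is that inserting $\sqrt{l}$ and $\sqrt{r}$ on the open virtual legs is the correct decoration so that the singular-value spectrum on the left matches the bulk-loop spectrum on the right, with no spurious zeros.

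First I would decode the diagrams. Let $\tilde M$ denote the tensor $M$ with its left virtual leg dressed by $\sqrt{l}$ and its right virtual leg dressed by $\sqrt{r}$, viewed as a matrix across the cut indicated by the dashed line. For Eq.~\eqref{index_lemma_l}, the left-hand side is precisely $\lambda(\tilde M)$, while the right-hand side closes two copies of $\tilde M$ back-to-back, contracting the remaining virtual legs through $\sqrt{l}\cdot\sqrt{l}=l$ and $\sqrt{r}\cdot\sqrt{r}=r$; this is exactly $\tilde M \tilde M^{\dagger}$ (or $\tilde M^{\dagger}\tilde M$, depending on orientation), so its eigenvalues are $\lambda(\tilde M)^{2}$ and the identity follows. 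The analogous reading works for Eq.~\eqref{index_lemma_r} with the complementary cut.

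The step I expect to be the main obstacle is making sure the multisets of singular values agree, not just the nonzero parts. A priori, $A$ and $A A^{\dagger}$ may have different numbers of zero modes, so one needs to rule out rank mismatch coming from the $\sqrt{l}$, $\sqrt{r}$ insertions. This is exactly where injectivity (Definition~\ref{injectivity}) enters: by hypothesis $l_{\gamma\delta}$ and $r_{\alpha\beta}$ are full-rank positive matrices, so $\sqrt{l}$ and $\sqrt{r}$ are invertible. Dressing by invertible positive factors preserves rank and puts the bulk contraction into a form where the correspondence between singular values of $\tilde M$ and square roots of eigenvalues of $\tilde M \tilde M^{\dagger}$ is a bijection with multiplicity. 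I would spell this out by writing $\tilde M = \sqrt{l}\, M_{\alpha i,\beta j}\,\sqrt{r}$ as an honest matrix across the cut and applying the SVD directly.

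Finally, Eqs.~\eqref{index_lemma_ls} and~\eqref{index_lemma_rs} are obtained by the same argument applied to the complex-conjugated tensor $M^{*}$, whose transfer matrix has $l^{*}$ and $r^{*}$ as its leading left and right eigenvectors. Thus the four statements reduce to a single identity: an SVD of the dressed tensor on one side, and the resulting Gram matrix, closed along the bulk loop with the full eigenvectors $l$ and $r$, on the other. I would close the proof by remarking that injectivity also ensures that $\sqrt{l}$ and $\sqrt{r}$ are uniquely defined (up to unitary) as positive square roots, so the pictorial identities are unambiguous.
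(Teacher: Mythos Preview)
Your core mechanism is exactly the paper's: take an SVD of the dressed tensor on the left, recognize the right-hand side as the corresponding Gram object $AA^{\dagger}$, and read off that the singular values square. The paper does precisely this in three lines, writing
\[
\sum_{\beta'} M_{i\alpha,j\beta'}\sqrt{r}_{\beta',\beta}=\sum_s U_{i\alpha,s}\lambda_s V_{s,j\beta}
\]
and then checking that $\sum_{\beta',\delta',j} M_{i\alpha,j\beta'}\,r_{\beta',\delta'}\,M^{\dagger}_{j\delta',k\gamma}=\sum_s U_{i\alpha,s}\lambda_s^2 U^{\dagger}_{k\gamma,s}$.

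Where your write-up drifts from the paper is in decoding the pictures. In Eq.~\eqref{index_lemma_l} only \emph{one} virtual leg carries a grey dot (here $\sqrt{r}$ on the right); the left virtual index $\alpha$ is bare and remains open on both sides. The RHS is $M\,r\,M^{\dagger}$ with the two left virtual legs still open, i.e.\ literally $(M\sqrt{r})(M\sqrt{r})^{\dagger}$. Your description $\tilde M=\sqrt{l}\,M\,\sqrt{r}$ with both loops closed through $l$ and $r$ would produce either a scalar (if both sides close) or $\sqrt{l}\,M r M^{\dagger}\sqrt{l}$ (if only the $r$ side closes), neither of which is the paper's RHS. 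The four equations \eqref{index_lemma_l}--\eqref{index_lemma_rs} simply cycle through the four single dressings $\sqrt{r},\sqrt{l},\sqrt{l^{*}},\sqrt{r^{*}}$, one per identity.

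Your paragraph on injectivity and zero-mode matching is more than the argument needs. The relation $\lambda(AA^{\dagger})=\lambda(A)^{2}$ is automatic once you have the SVD of $A$; no separate rank bookkeeping enters. The paper invokes injectivity only to ensure $l,r$ are full-rank positive so that $\sqrt{l},\sqrt{r}$ are well defined (and, for the later applications, invertible). That is the entire role of the hypothesis in this lemma.
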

Note that as singular values are non-negative, so there is no ambiguity in taking the square root. Moreover, as $M$ is injective, $l$ and $r$ have full rank and have well defined square root.
\begin{proof}
We are going to prove Eq.~\ref{index_lemma_l} and then the proof of Eq.~\ref{index_lemma_r},~\ref{index_lemma_ls},~\ref{index_lemma_rs} is going to follow in a similar way.

Consider the SVD decomposition on the left hand side of Eq.~\ref{index_lemma_l} and suppose it takes the form
\begin{equation}
\sum_{\beta'} M_{i\alpha,j\beta'}\sqrt{r}_{\beta',\beta} = \sum_s U_{i\alpha,s}\lambda_{s}V_{s,j\beta}
\end{equation}
Then the tensor on the right hand side of Eq.~\ref{index_lemma_l} becomes
\begin{equation}
\begin{array}{ll}
 & \sum_{\beta',\delta',j} M_{i\alpha,j\beta'}r_{\beta',\delta'}M^{\dagger}_{j\delta',k\gamma}  \\
= & \sum_{j,s,s'} U_{i\alpha,s}\lambda_{s}V_{s,j\beta}V^{\dagger}_{s',j\beta}\lambda_{s'}U^{\dagger}_{k\gamma,s'} \\
= & \sum_{s} U_{i\alpha,s} \lambda^2_{s} U^{\dagger}_{k\gamma,s}
\end{array}
\end{equation}
Therefore, the singular value for the tensor on the right hand side is the square of the singular value on the left hand side. Hence we get Eq.~\ref{index_lemma_l}.
\end{proof}

The Rank-Ratio Index defined above can be directly related to the GNVW index if the MPUO is either injective or a stack of injective MPUOs.

\begin{mythm}[Rank-Ratio index = $($GNVW index$)^2$ for injective or stack of injective MPUO]\label{thm:RR_GNVW}
Consider an MPUO $O$ represented with tensor $M$. Take a sufficiently long but finite block so that the blocked tensor $\tilde{M}$ satisfies the Separation,  Isometry and Pulling Through conditions in Eq.~\ref{separation}, ~\ref{isometry} and ~\ref{pulling_through}. If $M$ is injective, or a stack of several injective tensors as $\left( \raisebox{-.4\height}{\includegraphics[height=1.0cm]{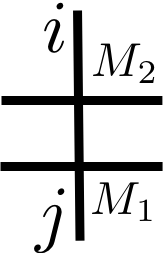}}\right)$, then
\begin{equation}
I_{\text{RR}}(\tilde{M}) = (I_{\text{GNVW}}(O))^2
\label{RR_GNVW}
\end{equation}
\end{mythm}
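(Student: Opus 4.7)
The plan is to translate each of the two ranks appearing in $I_{\text{RR}}(\tilde{M})$ into an operator-algebraic quantity attached to the two sides of a cut, and then match the ratio of these quantities to $(I_{\text{GNVW}}(O))^{2}$ via the GNVW overlap formula.

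First I would invoke Lemma~\ref{index_lemma} to pass from SVD ranks of $\tilde{M}$ to SVD ranks of the $\tilde{T}$-level objects obtained by contracting $\tilde{M}$ with $\tilde{M}^{\dagger}$ through the transfer-matrix eigenvectors $r$ and $l$. Because $M$ is injective (or a stack of injective tensors), the positive matrices $r$ and $l$ are full rank, so dressing $\tilde{M}$ with $\sqrt{r}$ and $\sqrt{l}$ on the appropriate sides does not affect the lSVD and rSVD ranks. Lemma~\ref{index_lemma} then identifies each dressed SVD with a $\tilde{T}$-level SVD whose singular values are the squares of the originals; since ranks are preserved under element-wise squaring, the $\tilde{T}$-level ranks equal the $\tilde{M}$-level ranks. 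This translation is natural because $\tilde{T}$ carries a direct operator interpretation as a piece of $O^{\dagger}O$ and its variants, so the moved problem lives at the level that the GNVW definition speaks to.

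Second, I would use the fixed-point equations---separation (Eq.~\eqref{separation}), isometry (Eq.~\eqref{isometry}), and pulling-through (Eq.~\eqref{pulling_through})---to identify the two $\tilde{T}$-level ranks with dimensions of operator algebras of the form $O\mathcal{A}_{L}O^{\dagger}\cap\mathcal{A}_{R}$ for the lSVD and $\mathcal{A}_{L}\cap O\mathcal{A}_{R}O^{\dagger}$ for the rSVD. Pulling-through converts the virtual contractions attached to $l$ and $r$ into physical ones, so the dressed tensor becomes a linear map sending a basis operator $X$ on one side of a cut to $OXO^{\dagger}$ restricted to the blocked region; the rank of this map is the dimension of the corresponding image algebra intersected with the available physical support. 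Injectivity is essential here: it ensures that distinct $X$'s are not identified in the virtual representation, so the virtual rank coincides exactly with the algebra dimension. For a stack of injective tensors, the same analysis applies layer by layer, producing a multiplicative factorization of the ranks across the stack.

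Finally, I would tie these dimensions back to the GNVW index. By the overlap construction in Ref.~\onlinecite{Gross2012}, $\eta(\mathcal{A},\mathcal{B})^{2}$ is proportional to $\dim(\mathcal{A}\cap\mathcal{B})$, and the normalization factor $\sqrt{p_{a}p_{b}}/p_{\Lambda}$ cancels in the ratio defining $I_{\text{GNVW}}$. Hence the ratio of the two algebra dimensions computed above is exactly $(I_{\text{GNVW}}(O))^{2}$, which combined with the earlier steps establishes $I_{\text{RR}}(\tilde{M})=(I_{\text{GNVW}}(O))^{2}$. The hard part will be the stack-of-injective case: when $\tilde{M}$ is globally injective the rank-dimension correspondence is clean because $r$ and $l$ are positive full-rank and every virtual direction contributes a distinct operator, but in a stack the tensor may fail global injectivity, $r$ and $l$ decompose block-diagonally across the layers, and one must verify that the SVD ranks still factor as products over the layers without spurious cross-layer cancellation. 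I expect this requires working within the virtual space of each layer separately and confirming that pulling-through and isometry respect the layer decomposition.
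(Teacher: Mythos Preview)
Your route is genuinely different from the paper's, and the crucial middle step is not justified.

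The paper does not attempt to interpret the two SVD ranks as algebra dimensions and then plug into the $\eta$-formula. Instead it argues structurally: it verifies $I_{\text{RR}}=I_{\text{GNVW}}^{2}$ directly on the two generators of the GNVW classification (one layer of non-overlapping few-body unitaries, and a single translation step), then proves that $I_{\text{RR}}$ is multiplicative under stacking two injective MPUOs, that this multiplicativity survives reduction of the stack to injective form, and that $I_{\text{RR}}$ is stable under further blocking once the fixed-point conditions hold. Since the GNVW index is also multiplicative under composition and every locality-preserving unitary decomposes into such layers, equality on generators plus multiplicativity gives the theorem. Lemma~\ref{index_lemma} and the pulling-through condition are used only as computational devices inside the multiplicativity and stability arguments, never to identify a rank with $\dim(O\mathcal{A}_{L}O^{\dagger}\cap\mathcal{A}_{R})$.

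Your step~2 is the gap. You assert that after dressing with $\sqrt{l},\sqrt{r}$ and applying pulling-through, the lSVD and rSVD ranks of $\tilde{T}$ become exactly $\dim(O\mathcal{A}_{L}O^{\dagger}\cap\mathcal{A}_{R})$ and $\dim(\mathcal{A}_{L}\cap O\mathcal{A}_{R}O^{\dagger})$, but you do not show this, and it is not obvious: pulling-through in Eq.~\eqref{pulling_through} moves one $\tilde{M}^{\dagger}$ past one $\tilde{M}$ along the chain, it does not by itself turn the four-legged SVD object into the map $X\mapsto OXO^{\dagger}$ restricted to a half-line. Your step~3 compounds this: the claim that $\eta(\mathcal{A},\mathcal{B})^{2}$ is proportional to $\dim(\mathcal{A}\cap\mathcal{B})$ is not the content of the overlap formula in Ref.~\onlinecite{Gross2012} (the relevant quantity there is a support-algebra dimension, not a naive intersection), so even if step~2 held, the matching in step~3 would need a separate argument. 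The paper's generator-plus-multiplicativity strategy avoids both of these issues entirely; if you want to pursue the direct operator-algebraic route, you will need to make the rank-to-dimension correspondence precise and re-derive the correct $\eta$ identity rather than assume it.
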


We are going to proceed to prove theorem~\ref{thm:RR_GNVW} in the following steps:
\begin{enumerate}
\item For an injective MPUO representation of non-overlapping two-body unitaries, 
\begin{equation}
I_{\text{RR}}(\tilde{M})=1=I^2_{\text{GNVW}}(O).
\end{equation}
\item For an injective MPUO representation of translation (to the right) by one step, \begin{equation}
I_{\text{RR}}(\tilde{M})=d^2=I^2_{\text{GNVW}}(O).
\end{equation} 
where $d$ is the dimension of the local physical Hilbert space.
\item If we stack two injective MPUOs as $M_{12} = \raisebox{-.4\height}{\includegraphics[height=1.0cm]{stack}}$, then 
\begin{equation}
I_{\text{RR}}(\tilde{M}_{12})=I_{\text{RR}}(\tilde{M}_1)I_{\text{RR}}(\tilde{M}_2).
\end{equation}
According to Ref.\onlinecite{Gross2012}, any locality preserving unitary can be obtained by stacking translation and layers of non-overlapping few body unitaries and their GNVW index multiply when stacked. Therefore, using the above equations we can show that the Rank-Ratio index of the stacked tensor is the square of the GNVW index. 
\item On the other hand, the stacked $M_{12}$ may not be injective itself but can be made injective. We will show that its Rank-Ratio index does not change even if we reduce it to the injective form. 
\item Finally, we show that the Rank Ratio index is stable in that if $\tilde{M}$ is the fixed point form (which satisfies Eq.~\ref{separation},~\ref{isometry} and ~\ref{pulling_through}) of an injective tensor or a stack of injective tensors, then the Rank-Ratio index does not change if we keep blocking $\tilde{M}$.
\end{enumerate}

\begin{proof}
Let's follow the procedure listed above.

1.  Consider the tensor given in Eq.~\ref{Mtb} to represent non-overlapping two-body unitaries. 
\begin{align}
M_{tb} = \raisebox{-.4\height}{\includegraphics[height=1.2cm]{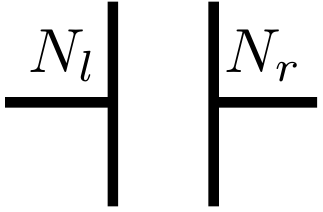}}
\label{Mtb2}
\end{align}
where we have labeled the left and right part of the tensor $N_l$ and $N_r$ respectively. As this representation can be obtained by decomposing each two-body unitary into a matrix product form, we can always choose $M_{tb}$ to be injective. 

According to the isometry condition in Eq.~\ref{isometry}, which is true even before blocking, we have
\begin{align}
\raisebox{-.4\height}{\includegraphics[height=1.5cm]{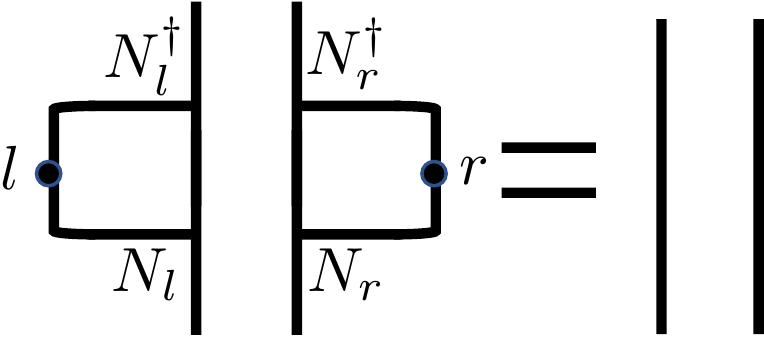}}
\label{EMtb}
\end{align}
and similarly
\begin{align}
\raisebox{-.4\height}{\includegraphics[height=1.5cm]{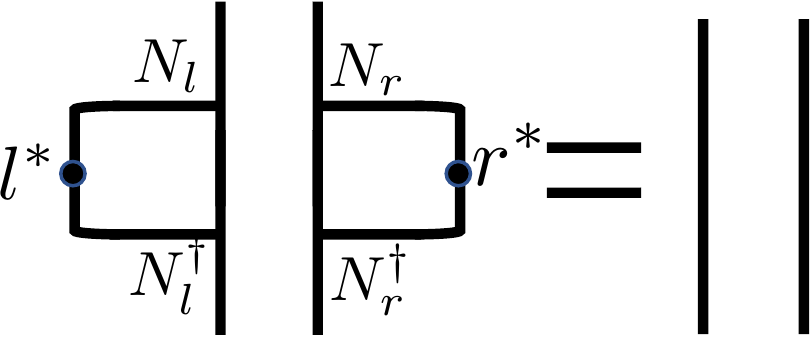}}
\label{EMtbs}
\end{align}
Each of these two equations actually contains two parts: the left halves on the two sides are equal to each other and right halves on the two sides are equal to each other. Both halves have to be satisfied simultaneously. Then using Eq.~\ref{index_lemma_l}, we have
\begin{equation}
\lambda\left(\raisebox{-.4\height}{\includegraphics[height=1.5cm]{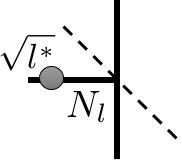}} \right) = \lambda^{1/2} \left(\raisebox{-.4\height}{\includegraphics[height=1.5cm]{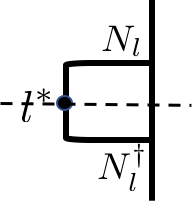}} \right) =  \lambda^{1/2} \left(\raisebox{-.4\height}{\includegraphics[height=1.5cm]{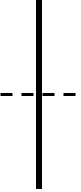}} \right)
\label{ltb}
\end{equation}
As $\sqrt{l^*}$ is a positive matrix, applying it does not change the rank of the SVD decomposition, so we have
\begin{equation}
\text{rank}\left(\raisebox{-.4\height}{\includegraphics[height=1cm]{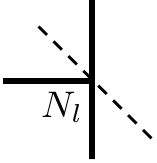}}\right) = \text{rank} \left(\raisebox{-.4\height}{\includegraphics[height=1cm]{ltb3}}\right) = d_l
\label{rtb}
\end{equation}
where $d_l$ is the dimension of the physical index in $N_l$. Similarly, we have
\begin{equation}
\begin{array}{l}
\text{rank}\left(\raisebox{-.4\height}{\includegraphics[height=1.0cm]{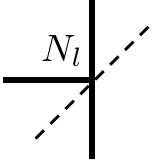}}\right)=d_l, \\
\text{rank}\left(\raisebox{-.4\height}{\includegraphics[height=1.0cm]{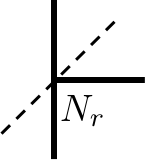}}\right)=\text{rank}\left(\raisebox{-.4\height}{\includegraphics[height=1.0cm]{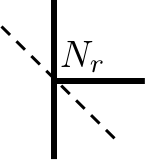}}\right)=d_r.  
\end{array}
\end{equation}
where $d_r$ is the dimension of the physical index in $N_r$. Now if we calculate the Rank-Ratio index for $M_{tb}$, we find that
\begin{equation}
\begin{array}{ll}
 & I_{RR}(M_{tb}) \\
 = & \text{rank}\left(\raisebox{-.4\height}{\includegraphics[height=0.8cm]{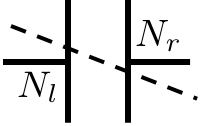}}\right) \bigg/ \text{rank}\left(\raisebox{-.4\height}{\includegraphics[height=0.8cm]{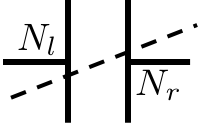}}\right) \\
= & (d_ld_r)/(d_ld_r) =1 = I^2_{\text{GNVW}}(O_{tb})
\end{array}
\end{equation}
Moreover, since the tensor $M_{tb}$ already satisfies the separation, isometry, pulling-through conditions in  Eq.~\ref{separation}, ~\ref{isometry} and ~\ref{pulling_through}, we have
\begin{equation}
I_{\text{RR}}(\tilde{M}_{tb})=1=I^2_{\text{GNVW}}(O_{tb}).
\end{equation}

2. For translation operator, the relation between the Rank-Ratio index and the GNVW index can be found through direct calculation. Consider translation by one step to the right represented by $M_r$ in Eq.~\ref{M_r}.
\begin{equation}
\begin{array}{ll}
 & I_{RR}(M_{r}) \\
 = & \text{rank}\left(\raisebox{-.4\height}{\includegraphics[height=0.8cm]{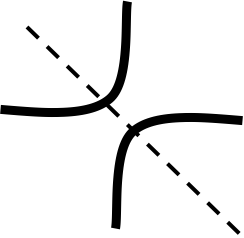}}\right) \bigg/ \text{rank}\left(\raisebox{-.4\height}{\includegraphics[height=0.8cm]{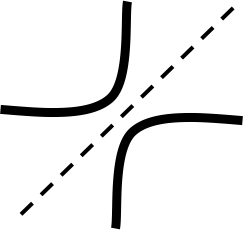}}\right) \\
= & d^2/1= d^2 = I^2_{\text{GNVW}}(O_r)
\end{array}
\end{equation}
Since the tensor $M_{r}$ already satisfies the fixed point conditions, we have
\begin{equation}
I_{\text{RR}}(\tilde{M}_{r})=d^2=I^2_{\text{GNVW}}(O_{r}).
\end{equation}
Moreover, even though we have only checked this relation for one possible representation of the translation operator, it holds for all possible injective representations as they differ from each other at most by a basis transformation on the virtual legs\cite{Perez-Garcia2007}.

3.  Now let us stack two layers of MPUOs which are injective individually. The composite tensor
\begin{equation}
M_{12} = \raisebox{-.4\height}{\includegraphics[height=1.0cm]{stack}}
\end{equation}
is in general not injective. But we will show that its Rank-Ratio index is still the square of the GNVW index of the corresponding unitary operator. 

Let's assume that $M_1$ and $M_2$ are already at fixed point form satisfying the separation, isometry, pulling through conditions Eq.~\ref{separation}, ~\ref{isometry}, ~\ref{pulling_through}. $M_{12}$ is in general not in a fixed point form, but by blocking sites we can take it to a fixed point form. Suppose that the fixed point for $M_{12}$ can be achieved by blocking two sites. (Our proof below also works if we take larger blocks.) Now we are going to use Eq.~\ref{index_lemma_l} through ~\ref{index_lemma_rs} in Lemma~\ref{index_lemma} to prove that
\begin{equation}
I_{\text{RR}}(\tilde{M}_{12}) = I_{\text{RR}}(\tilde{M_1})I_{\text{RR}}(\tilde{M_2}).
\end{equation}
To see this, we find that
\begin{equation}
\begin{array}{ll}
& \lambda\left(\raisebox{-.4\height}{\includegraphics[height=1.2cm]{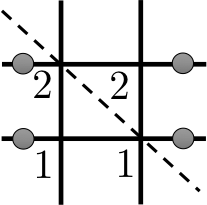}} \right)  = \lambda^{1/2}\left(\raisebox{-.4\height}{\includegraphics[height=1.6cm]{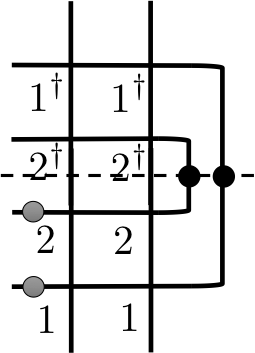}} \right) = \lambda^{1/2}\left(\raisebox{-.4\height}{\includegraphics[height=1.6cm]{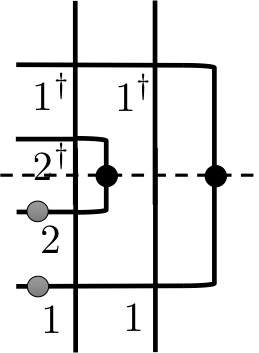}} \right)\\
= & \lambda\left(\raisebox{-.4\height}{\includegraphics[height=1.2cm]{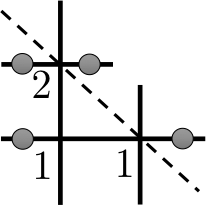}} \right)  = \lambda^{1/2}\left(\raisebox{-.4\height}{\includegraphics[height=1.6cm]{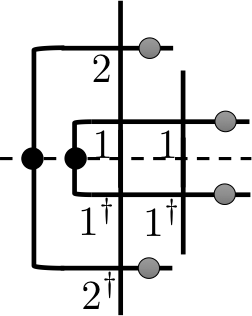}} \right)  = \lambda^{1/2}\left(\raisebox{-.4\height}{\includegraphics[height=1.6cm]{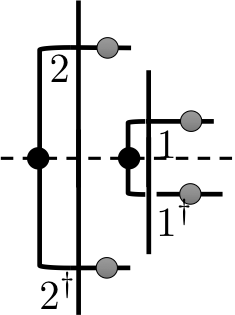}} \right)\\
= & \lambda\left(\raisebox{-.4\height}{\includegraphics[height=1.2cm]{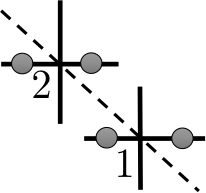}} \right)
\end{array}
\label{I12}
\end{equation}
where we have used simplified notation $1$, $2$, $1^{\dagger}$, $2^{\dagger}$ to refer to $M_1$, $M_2$, $M_1^{\dagger}$ and $M_2^{\dagger}$. The black dots represent the left and right eigenvectors of the transfer matrices of $M_1$, $M_2$, $M_1^{\dagger}$ and $M_2^{\dagger}$ while the grey dots are the square root of the black dots. As long as $M_1$, $M_2$ are injective (so are $M_1^{\dagger}$ and $M_2^{\dagger}$), the grey dots do not change the rank of the SVD decomposition. Therefore we have
\begin{equation}
\text{rank}\left(\raisebox{-.4\height}{\includegraphics[height=1.2cm]{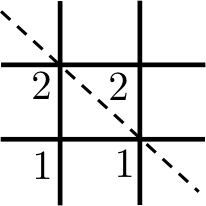}}\right) = \text{rank}\left(\raisebox{-.4\height}{\includegraphics[height=0.7cm]{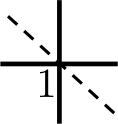}}\right)\text{rank}\left(\raisebox{-.4\height}{\includegraphics[height=0.7cm]{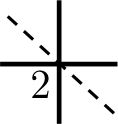}}\right)
\label{r12l}
\end{equation}
Similarly, we have
\begin{equation}
\text{rank}\left(\raisebox{-.4\height}{\includegraphics[height=1.2cm]{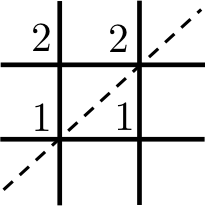}}\right) = \text{rank}\left(\raisebox{-.4\height}{\includegraphics[height=0.7cm]{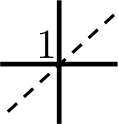}}\right)\text{rank}\left(\raisebox{-.4\height}{\includegraphics[height=0.7cm]{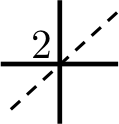}}\right)
\label{r12r}
\end{equation}

Dividing these two equations, we get as promised
\begin{equation}
I_{\text{RR}}(\tilde{M}_{12}) = I_{\text{RR}}(\tilde{M}_1)I_{\text{RR}}(\tilde{M}_2).
\end{equation}

4. $M_{12}$ as a stack of $M_1$ and $M_2$ may not be injective itself. But as we show below, the Rank-Ratio index does not change if we reduce it to the injective form. Suppose that to reduce $M_{12}$ to the injective form and remove redundant virtual dimensions, we need to do a projection $P$ to the pair of virtual legs in each direction, as denoted by the $\{$ $\}$ in the following equation
\begin{equation}
M_{12} = \raisebox{-.4\height}{\includegraphics[height=1.2cm]{stack}} \rightarrow \mathcal{M}_{12} = \raisebox{-.4\height}{\includegraphics[height=1.2cm]{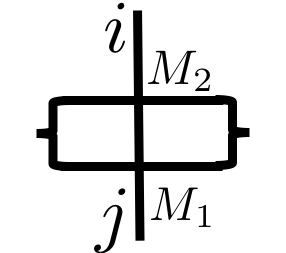}}
\end{equation}
The separation condition on $M_{12}$ reads
\begin{equation}
\raisebox{-.4\height}{\includegraphics[height=2.1cm]{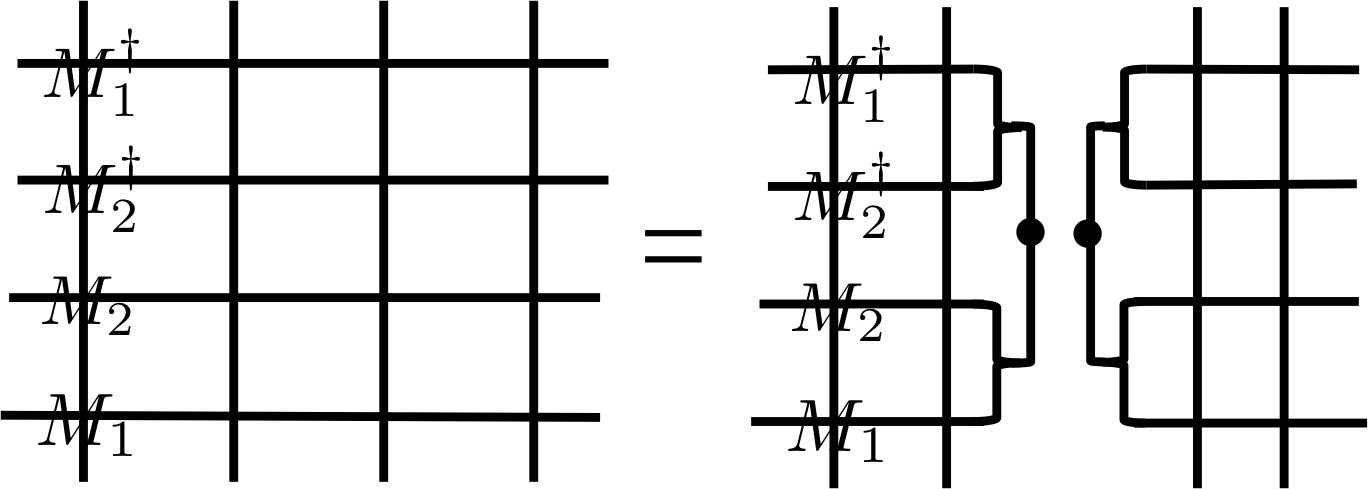}}
\label{M12Sep1}
\end{equation}
The left and right eigen-vectors (the black dots) that we insert in the middle are supported on $P$, so we are free to add those projections. (Note that the separation condition holds even if the MPO is not injective.) The tensors in each row are the same and we have labeled only one of them.

On the other hand, we have
\begin{equation}
\raisebox{-.4\height}{\includegraphics[height=4.5cm]{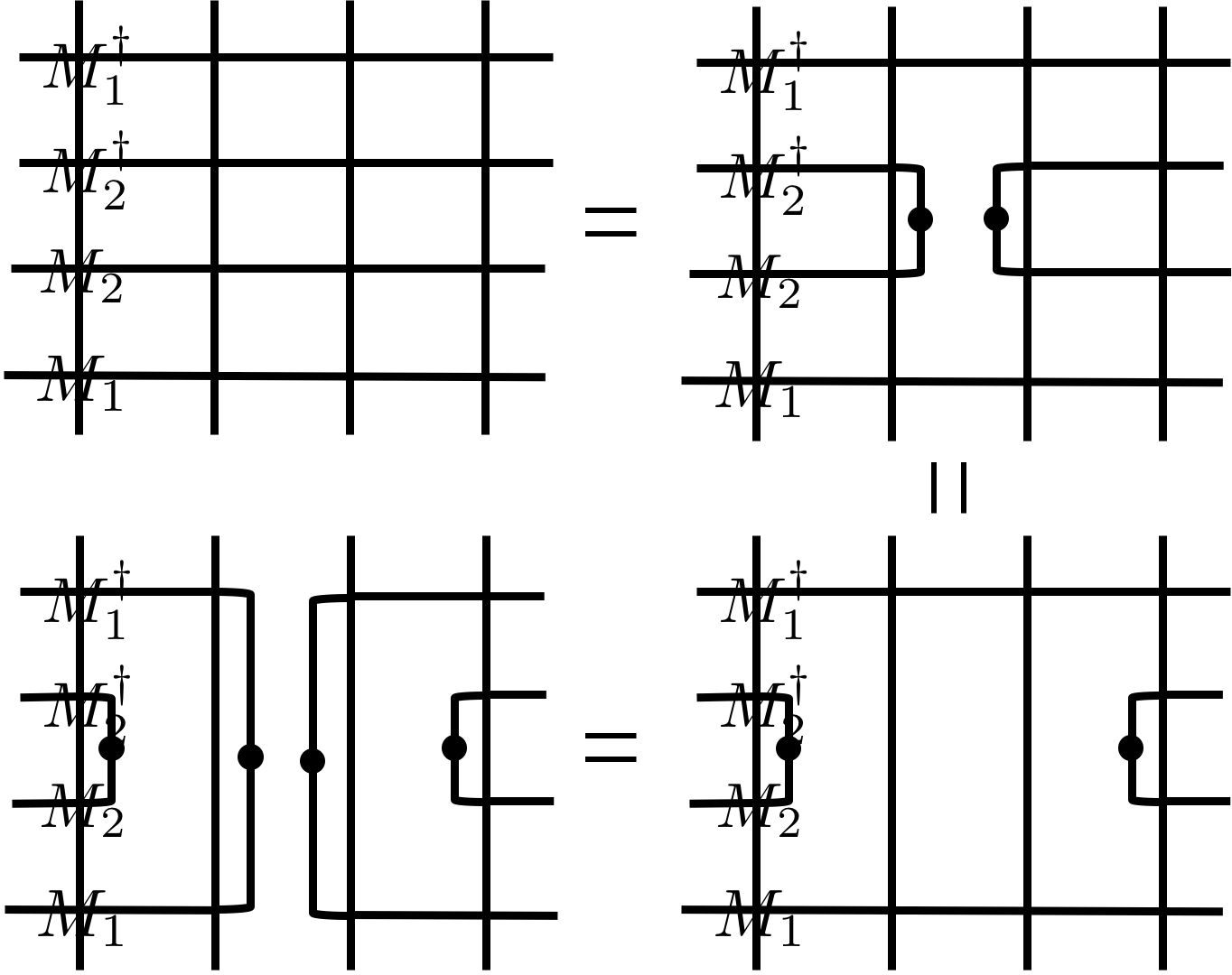}}
\label{M12Sep}
\end{equation}
where the first step uses the separation condition for $M_2$, the second step uses the pulling through condition for $M_2$, and the third step uses the separation condition for $M_1$.
Comparing Eq.~\ref{M12Sep1} and ~\ref{M12Sep}, we find that
\begin{equation}
\raisebox{-.4\height}{\includegraphics[height=2cm]{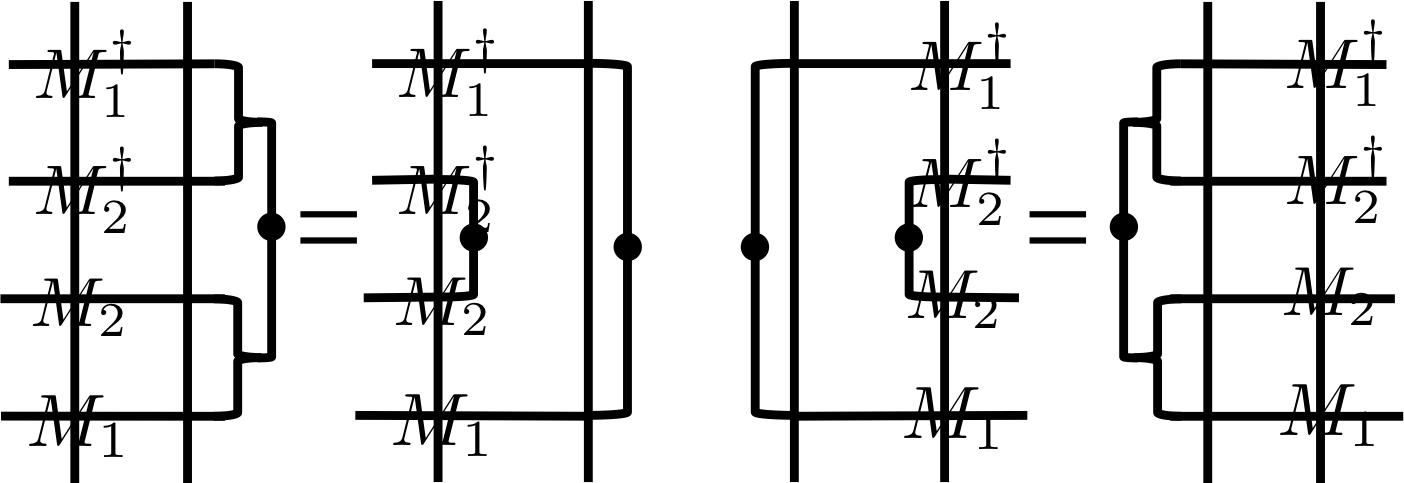}}
\label{M12Sep2}
\end{equation}
And a similar relation holds if we switch the place of $M_1M_2$ and $M^{\dagger}_2M^{\dagger}_1$.
From these relations we find that 
\begin{equation}
\begin{array}{ll}
& \lambda\left(\raisebox{-.4\height}{\includegraphics[height=1.0cm]{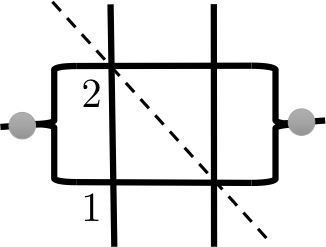}} \right)  = \lambda^{1/2}\left(\raisebox{-.4\height}{\includegraphics[height=1.6cm]{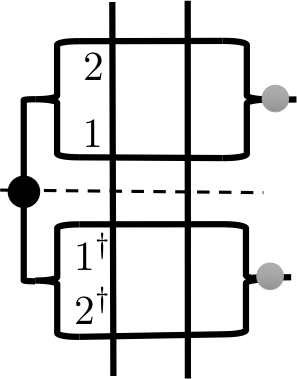}} \right) = \lambda^{1/2}\left(\raisebox{-.4\height}{\includegraphics[height=1.6cm]{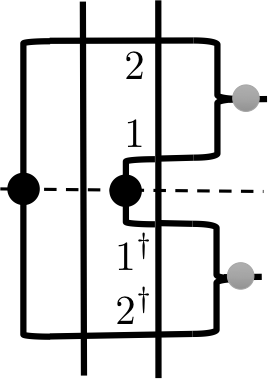}} \right)\\
= & \lambda\left(\raisebox{-.4\height}{\includegraphics[height=1.0cm]{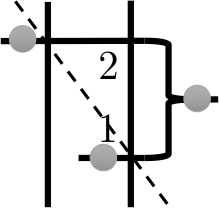}} \right)  = \lambda\left(\raisebox{-.4\height}{\includegraphics[height=1.0cm]{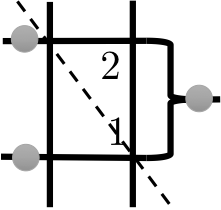}} \right)  = \lambda^{1/2}\left(\raisebox{-.4\height}{\includegraphics[height=1.6cm]{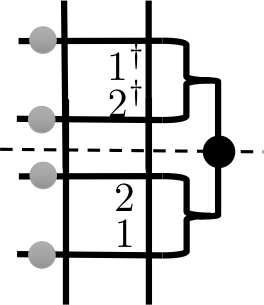}} \right)\\
= & \lambda^{1/2}\left(\raisebox{-.4\height}{\includegraphics[height=1.6cm]{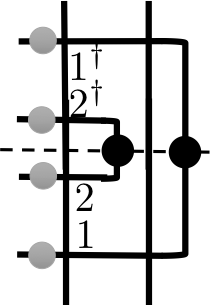}} \right) = \lambda\left(\raisebox{-.4\height}{\includegraphics[height=1.0cm]{I124}} \right) =  \lambda\left(\raisebox{-.4\height}{\includegraphics[height=1.0cm]{I127}} \right)
\end{array}
\label{I12_inj}
\end{equation}
In this equation, step 1, 3, 5, 7 uses Lemma~\ref{index_lemma}, step 2, 6 uses Eq.~\ref{M12Sep2} (or similar), step 4, 8 uses derivations similar to that in Eq.~\ref{I12}. In particular, in step 4 adding the projection $P$ does not affect the relation as the two tensors before adding the projection are related by a unitary on the left and down legs.

Therefore, we get
\begin{equation}
\text{rank}\left(\raisebox{-.4\height}{\includegraphics[height=1.0cm]{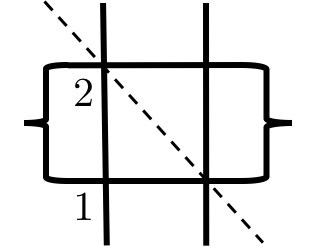}} \right) = \text{rank}\left(\raisebox{-.4\height}{\includegraphics[height=0.7cm]{r122}}\right)\text{rank}\left(\raisebox{-.4\height}{\includegraphics[height=0.7cm]{r123}}\right)
\label{r12injl}
\end{equation}
Similarly, we have
\begin{equation}
\text{rank}\left(\raisebox{-.4\height}{\includegraphics[height=1.0cm]{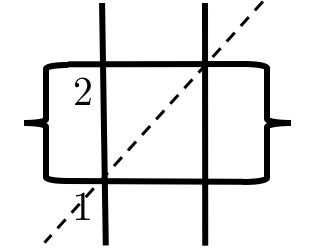}}\right) = \text{rank}\left(\raisebox{-.4\height}{\includegraphics[height=0.7cm]{r125}}\right)\text{rank}\left(\raisebox{-.4\height}{\includegraphics[height=0.7cm]{r126}}\right)
\label{r12injr}
\end{equation}
Dividing these two equations we get
\begin{equation}
I_{\text{RR}}(\tilde{\mathcal{M}}_{12}) = I_{\text{RR}}(\tilde{M}_1)I_{\text{RR}}(\tilde{M}_2)=I_{\text{RR}}(\tilde{M}_{12}).
\end{equation}

Therefore, the Rank-Ratio index of the stack MPUO $M_{12}$ remains the same whether we reduce it to the injective form or not and we always have
\begin{equation}
I_{\text{RR}}(\tilde{M}_{12}) = I_{\text{RR}}(\tilde{M}_1)I_{\text{RR}}(\tilde{M}_2).
\end{equation}
This property of the Rank-Ratio index is the same as that of the GNVW index which multiply when we combine two locality preserving unitaries (Eq.~\ref{III}). As in each of the injective layers (either representing local unitary or translation) the Rank-Ratio index is equal to the square of the GNVW index, when we stack the layers, the Rank-Ratio index is still equal to the square of the GNVW index. Therefore, for injective or stack of injective MPUO representations, we always have
\begin{equation}
I_{\text{RR}}(\tilde{M}) = (I_{\text{GNVW}}(O))^2
\end{equation}

5. Finally, we need to show that our definition of Rank-Ratio index is stable. That is, it does not change if we keep blocking the tensor $M$ once it has reached the fixed point form. This is true for both injective and stack of injective tensors.

Suppose that $M$ is at the fixed point form satisfying the separation, isometry, pulling through conditions in Eq.~\ref{separation}, ~\ref{isometry}, ~\ref{pulling_through}. Then we have
\begin{equation}
\begin{array}{ll}
& \lambda\left(\raisebox{-.4\height}{\includegraphics[height=1.2cm]{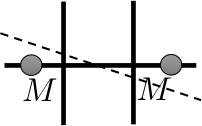}} \right)  = \lambda^{1/2}\left(\raisebox{-.4\height}{\includegraphics[height=1.6cm]{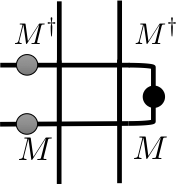}} \right)  \\ 
= & \lambda^{1/2}\left(\raisebox{-.4\height}{\includegraphics[height=1.6cm]{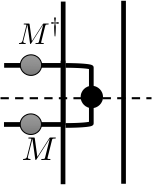}} \right)
= \lambda\left(\raisebox{-.4\height}{\includegraphics[height=1.2cm]{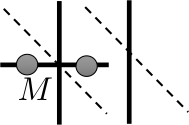}} \right)  
\end{array}
\label{IMM}
\end{equation}
Therefore, we have
\begin{equation}
\text{rank}\left(\raisebox{-.4\height}{\includegraphics[height=1.2cm]{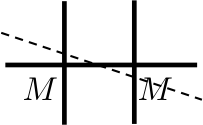}}\right) = \text{rank}\left(\raisebox{-.4\height}{\includegraphics[height=1.0cm]{lSVDM}}\right) \times d
\end{equation}
Similarly we have
\begin{equation}
\text{rank}\left(\raisebox{-.4\height}{\includegraphics[height=1.2cm]{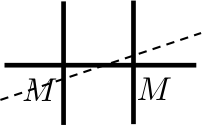}}\right) = \text{rank}\left(\raisebox{-.4\height}{\includegraphics[height=1.0cm]{rSVDM}}\right) \times d
\end{equation}
Dividing these two equations we find that the Rank-Ratio index does not change if we block tensors at the fixed point. Note that when $M$ is a stack of injective tensors, the grey dots in the previous equations actually correspond to several grey dots, one on each injective virtual leg.

With these steps, we complete the proof of Theorem~\ref{thm:RR_GNVW}. Note that, as our proof relies on Lemma~\ref{index_lemma} which is about the spectrum of the SVD decomposition, so in principle we can define our index as the ratio of the exponential of the entropy of the left and right SVD decompositions. The only tricky part is that we need to add the grey dots, the square root of the left and right eigenvectors of the transfer matrices, to the virtual legs for the index to work. This is doable but procedural-wise complicated. Therefore, we choose to define the index using the rank, instead of the entropy, of the SVD decomposition.
\end{proof}

\section{\label{sec:numerics} Numerical calculation of index for random MPUO} 
In this section we are going to calculate the rank-ratio index of some examples of random MPUO. The examples of random MPUO considered are drawn in Fig.~\ref{fig:exm}, and the corresponding numerical results are given in Tables ~\ref{tab:exm1},~\ref{tab:exm2} and \ref{tab:exm3} respectively. \par 
To generate random $k$-body unitaries we use the QR-decomposition of random matrices. The algorithm is as follows: 
\begin{enumerate}
    \item Generate $d^k $ dimensional random matrix $M_{d^k\times d^k}$. $d$ is the dimension of the physical Hilbert space at each site. 
    \item Perform a $QR$-decomposition: $M = QR$. $Q$ is a $d^k$ dimensional unitary while $R$ is an upper triangular matrix.  \item The $Q$ and $R$ are not unique since for any $d^k$ dimensional unitary diagonal matrix $\Lambda$, $QR = (Q\Lambda)( \Lambda^{-1}R)$. To fix this, we demand that $R$ has positive diagonal entries. This fixes $\Lambda$ to be identity. If $R=\sum_{ij}r_{ij}|i\rangle\langle j|$, create a diagonal matrix $\Lambda' = \sum_i \frac{r_{ii}}{|r_{ii}|} |i\rangle \langle i|$, and  $Q'=Q\Lambda'$. Now for every random matrix $M$, $Q'$ is a unique $d^k$ dimensional unitary. 
\end{enumerate}

From these examples, we can see that 
\begin{itemize}
\item The Rank-Ratio index fluctuates for small block sizes but saturates to a fixed value for large enough block sizes;
\item The saturated value is equal to the square of the GNVW index and only depends on the equivalence class of the MPUO which is invariant under stacking with any finite depth local unitary operation.
\end{itemize}

\begin{figure}
    \centering
    \includegraphics[width=\columnwidth]{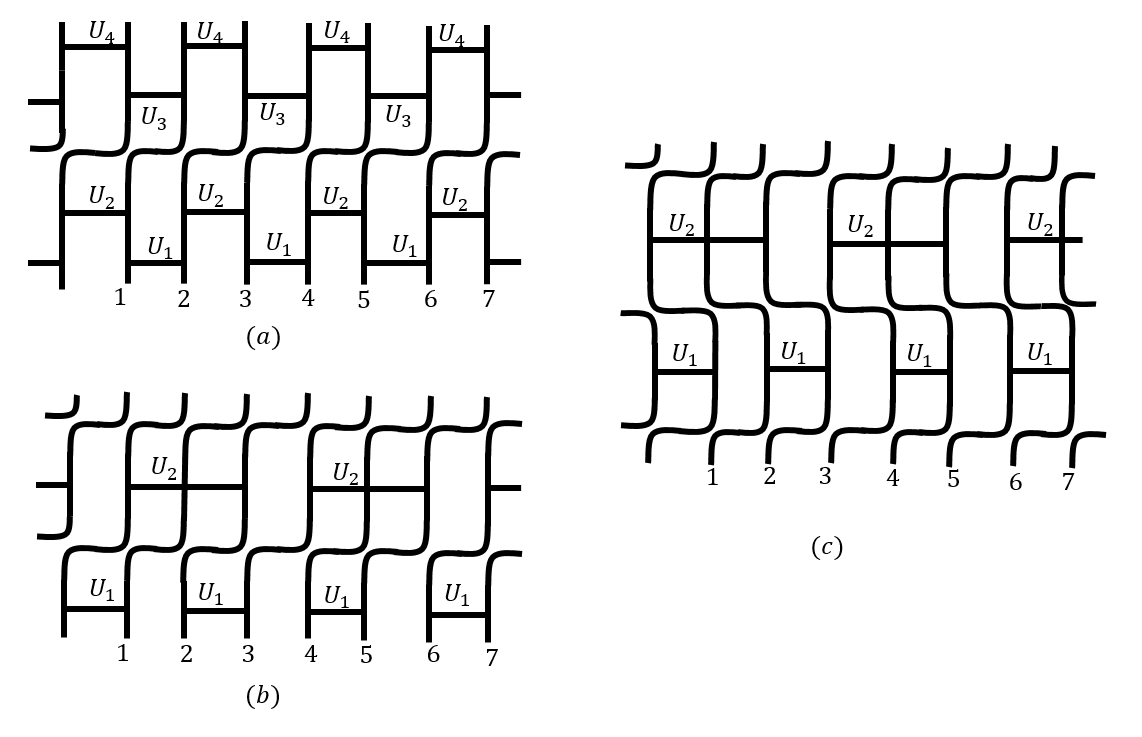}
    \caption{Some examples of random MPUOs. Local physical Hilbert space has dimension $d=2$ in all cases. (a) We combine a single right-translation operator with random finite depth local unitary operators. $U_1,U_2,U_3,U_4$ are all random 2-local unitaries,  (b) we combine layers of random local unitaries with layers of right-translation. First layer is made of 2-local random unitary $U_1$, second layer is right-translation, third layer is 3-local random unitary and fourth layer is again a right translation operator. (c) Finally as an example of the most general case we combine random local unitary operators with left and right translational operators. First layer is right-translation, second layer is random 2-local unitaries, third layer is left-translation, fourth layer is random 3-local untaries and final layer is right-translation again. Numerical calculation of RR indices of MPUOs in (a),(b) and (c) are given in tables ~\ref{tab:exm1},~\ref{tab:exm2} and \ref{tab:exm3} respectively. }
    \label{fig:exm}
\end{figure}
\begin{table}[h]
    \centering
    \begin{tabular}{c|c|c|c}
    \shortstack{Length of \\ blocked MPO}    & \shortstack{rank of \\ left SVD} &\shortstack{rank of \\ right SVD } & RR index  \\
     \hline 
    1 & 64 & 16 & 4\\
    2 & 8  & 8 & 1\\
    3 & 16 & 4 & 4\\
    4 & 32 & 8 & 4\\
    5 & 64 & 16 & 4\\
    6 & 128 & 32 & 4\\
    7 & 256 & 64 & 4\\
    \end{tabular}
    \caption{Numerical calculation of RR index of MPUO shown in Fig.~\ref{fig:exm}(a). We start with site labeled 1 and block sites one by one to the right. We see that after blocking 3 sites index stabilizes to value 4, which is expected since this MPUO is, by construction, equivalent (up to finite depth local untiaries) to a pure right-translation and hence has index  $I_{\text{RR}}(M_r)=2^2=4$. }
    \label{tab:exm1}
\end{table}

\begin{table}[h]
    \centering
    \begin{tabular}{c|c|c|c}
    \shortstack{Length of \\ blocked MPO}    & \shortstack{rank of \\ left SVD} &\shortstack{rank of \\ right SVD } & RR index  \\
     \hline 
    1 & 8   & 8 & 1\\
    2 & 16  & 4 & 4\\
    3 & 32  & 2 & 16\\
    4 & 64  & 4 & 16\\
    5 & 128 & 8 & 16\\
    6 & 256 & 16 & 16\\
    7 & 512 & 32 & 16\\
    \end{tabular}
    \caption{Numerical calculation of RR index of MPUO shown in Fig.~\ref{fig:exm}(b). We start with site labeled 1 and block sites one by one to the right. We see that after blocking 3 sites index stabilizes to value 16, which is expected since this MPUO is, by construction, equivalent (up to finite depth local untiaries) to the combination of two pure right-translation and hence has total index  $I_{\text{RR}}(M_r)^2=4^2=16.$ }
    \label{tab:exm2}
\end{table}

\begin{table}[h]
    \centering
    \begin{tabular}{c|c|c|c}
    \shortstack{Length of \\ blocked MPO}    & \shortstack{rank of \\ left SVD} &\shortstack{rank of \\ right SVD } & RR index  \\
     \hline 
    1    & 16 & 4 & 4\\
    2 & 32 & 8 & 4\\
    3 & 64 & 4 & 16\\
    4& 32 & 8 & 4\\
    5 & 64 & 16 & 4\\
    6 & 128 & 32 & 4\\
    7 & 512 & 128 & 4\\
    \end{tabular}
    \caption{Numerical calculation of RR index of MPUO shown in Fig.~\ref{fig:exm}(c). We start with site labeled 1 and block sites one by one to the right. We see that after blocking 3 sites index stabilizes to value 4, which is expected since this MPUO is, by construction, equivalent (up to finite depth local untiaries) to the combination of two pure right-translation and one left-translation, and hence has total index  $I_{\text{RR}}(M_r)I_{\text{RR}}(M_l)I_{\text{RR}}(M_r)=4.\frac{1}{4}.4=4$.}
    \label{tab:exm3}
\end{table}

\section{\label{sec:beyondGNVW} MPO with fractional index}

In the previous section, we have discussed how matrix product operators satisfying a simple unitary condition (Definition~\ref{def:MPUO} and Eq.~\ref{uni}) provides a necessary and sufficient representation of locality preserving unitaries classified by the GNVW index. On the other hand, if we relax the condition in Eq.~\ref{uni}, we can obtain matrix product operators, which are unitary in a more general sense, with index beyond the GNVW framework. In this section, we are going to give one example of such matrix product operators. We are going to show that this operator is unitary in systems of odd size and non-unitary in systems of even size. It does not preserve locality and can have a `fractional' index!

Consider the MPO $O_f$ represented with local tensor
\begin{equation}
M_f = \raisebox{-.4\height}{\includegraphics[height=1.5cm]{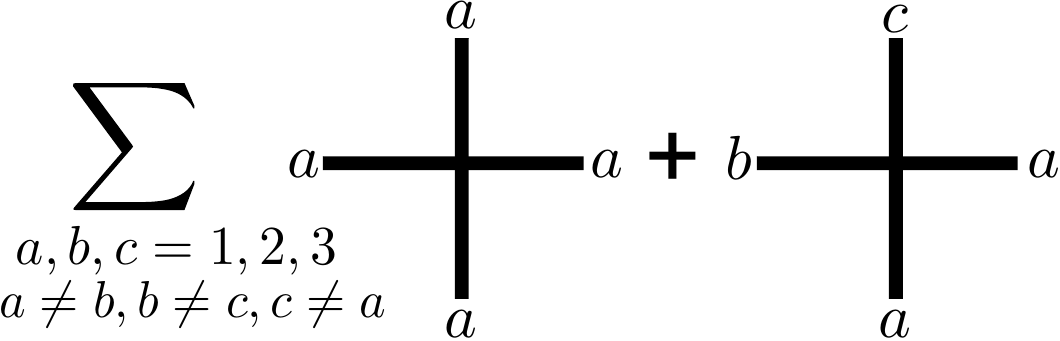}}
\end{equation}
This is a special MPO in that it represents a unitary operator when system size is odd and a non-unitary operator when system size is even. For example, when the system size is two, the operator maps both input states $|12\rangle$ and $|21\rangle$ to $|33\rangle$. Similar non-unitary mappings exist whenever the system size is even. This is different from all the other examples we discussed in this paper, which are unitary and satisfy Eq.~\ref{uni} for all system sizes. (And this operator does not satisfy Eq.~\ref{uni} even after blocking.) Therefore, it does not belong to the set of MPUO as defined in Definition~\ref{def:MPUO}.

To understand the property of this MPO, we can construct $T_f$ according to Eq.~\ref{Tij} and, from its general form, identify the operator $O_f^{\dagger}O_f$. The general form of $T_f$, which we calculate using the procedure in Ref.\onlinecite{Perez-Garcia2007}, contains two blocks. The first block is what we would expect if $O$ is a unitary for all system sizes \begin{equation}
\raisebox{-.4\height}{\includegraphics[height=2.0cm]{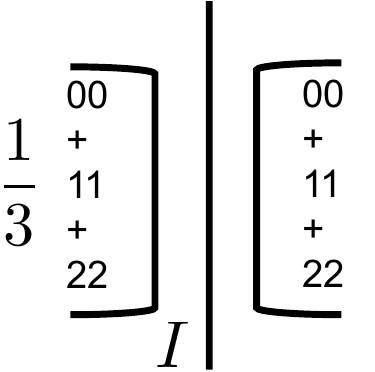}}
\end{equation}
Different from a usual unitary MPO, there is a second block, which represents the superposition of two translation symmetry breaking operators. The two operators each have period $2$ and they map into each other under a single step of translation. Therefore, this part of the MPO is zero when the system size is odd, leaving the MPO $O_f$ to be unitary. When the system size is even, the second block gives rise to a nontrivial operator, which breaks the unitarity of $O_f$.

When the system size is odd ($2n+1$), $O_f$ is a unitary operator, but it is a highly non-locality preserving. To see this, consider the operator $P_{n} = |1\rangle\langle 2| + |2\rangle\langle 3| + |3\rangle\langle 1|$ on the $n$th qutrit and the conjugation of $P_{n}$ by $O_f$. Apply $O^{\dagger}_fP_{n}O_f$ on an initial state $|11...1...11\rangle$, we find that the state is mapped to
\begin{equation}
\begin{array}{llll}
 & |11...1...11\rangle & \xrightarrow{O_f} & |11...1...11\rangle \\
 \xrightarrow{P_n} & |11...3...11\rangle & \xrightarrow{O^{\dagger}_f} & |32...a...32\rangle
 \end{array}
\end{equation}
where $a=3$ if $n$ is odd and $a=1$ if $n$ is even. As the final state $|32...a...32\rangle$ is globally different from the initial state $|11...1...11\rangle$, $O^{\dagger}_fP_{n}O_f$ has to be a nonlocal operator even tough $P_n$ is local. Therefore, $O_f$ is a non-locality-preserving unitary when system size is odd.

Interestingly, if we calculate the index of $M_f$ according to Eq.~\ref{index}, we find that
\begin{eqnarray}
I_{\text{RR}}(O_f) = \text{rank}\left(\raisebox{-.4\height}{\includegraphics[height=1cm]{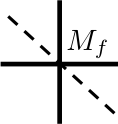}}\right)\bigg/\text{rank}\left(\raisebox{-.4\height}{\includegraphics[height=1cm]{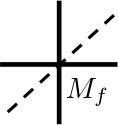}}\right)= 3
\end{eqnarray}
and this number stays invariant if we take blocks of $M_f$.If we were to convert it to the GNVW index, we would find it to be $\sqrt{3}$ which is not a rational number and hence not allowed as a GNVW index. This is of course expected because $O_f$ is not a locality preserving unitary and this example illustrates that it is possible to represent some non-locality-preserving unitaries with drastically different properties from the locality-preserving ones using the matrix product operator formalism.

\section{Conclusion}

In this paper we study the representation of one dimensional locality preserving unitaries using the matrix product operator (MPO) formalism. We show that matrix product operators which are unitary (for all system sizes) are guaranteed to preserve locality and all locality preserving unitaries can be represented in a matrix product way. Moreover, we show that the GNVW index\cite{Gross2012} classifying locality preserving unitaries in 1D can be extracted in a simple way as in Eq.~\ref{index} for injective or a stack of injective tensors. On the other hand, matrix product operators satisfying a more general unitarity condition -- unitary only for systems of certain sizes -- can have very different properties. In particular, we present one example of MPO which is unitary for odd size systems but not for even size systems and find that it is non-locality preserving and has a fractional index as compared to the locality preserving ones. 

Many interesting questions remain open regarding the matrix product representation of unitaries. First of all, Lemma~\ref{canonical_form} provides a complete characterization of MPOs which are unitary for any system size. However, this characterization is in terms of $T$ rather than $M$. In particular, if one wants to simulate a unitary evolution process using finite bond dimension MPO, it is not clear which parameter space one should choose from such that the MPO is guaranteed to be unitary. If such a parameter space can be identified, we can generate 1D unitaries without having to check the condition on the $T$ tensor. With the matrix product representation of states, we do not need to worry about this problem because any tensor generates a legitimate quantum state. This is essential for variational algorithms based on matrix product states. If we want to have similar simulation algorithms for unitary dynamics with matrix product operator, this problem needs to be addressed.

Secondly, adding symmetry requirement to the 1D unitary operators can result in more detailed classifications. This has been discussed in terms of (dynamical) interacting Floquet phases with symmetry where a classification in 1D has been proposed in Ref.\onlinecite{Else2016,Keyserlingk2016,Keyserlingk2016a,Potter2016,Roy2016}. Similar to the case of 1D gapped (nondynamical) phases, adding symmetry can result in symmetry-protected Floquet phases. It would be interesting to see how to distinguish different symmetry protected Floquet phases based on the MPO representation of their Floquet operator. 

Finally, the example we discussed in section~\ref{sec:beyondGNVW} shows that if we relax the definition of unitarity, MPO can represent non-locality-preserving unitaries with fractional index. What is the full power of MPO in representing 1D unitaries in this more general sense? For matrix product state, we know that with a translation invariant finite bond dimension representation, the state represented is either gapped or a superposition of several gapped states. Can we obtain a similar understanding of the MPO representation of 1D unitaries? This is a question we plan to study in the future.

\section*{Acknowledgment}
While working on this project, we became aware of similar work being carried out by J. I. Cirac, D. Perez-Garcia, N. Schuch, and F. Verstraete which has been reported in Ref.\onlinecite{Cirac2017}.

We would like to acknowledge helpful discussions with Lukasz Fidkowski which inspired this project. X.C. is supported by the National Science Foundation under award number DMR-1654340. We acknowledge funding provided by the Walter Burke Institute for Theoretical Physics and the Institute for Quantum Information and Matter, an NSF Physics Frontiers Center (NSF Grant PHY-1125565) with support of the Gordon and Betty Moore Foundation (GBMF-2644). M.B.S. acknowledges the support from Simons Qubit fellowship provided by Simons Foundation through It from Qubit collaboration.

\bibliography{main}

\begin{thebibliography}{27}%
\makeatletter
\providecommand \@ifxundefined [1]{%
 \@ifx{#1\undefined}
}%
\providecommand \@ifnum [1]{%
 \ifnum #1\expandafter \@firstoftwo
 \else \expandafter \@secondoftwo
 \fi
}%
\providecommand \@ifx [1]{%
 \ifx #1\expandafter \@firstoftwo
 \else \expandafter \@secondoftwo
 \fi
}%
\providecommand \natexlab [1]{#1}%
\providecommand \enquote  [1]{``#1''}%
\providecommand \bibnamefont  [1]{#1}%
\providecommand \bibfnamefont [1]{#1}%
\providecommand \citenamefont [1]{#1}%
\providecommand \href@noop [0]{\@secondoftwo}%
\providecommand \href [0]{\begingroup \@sanitize@url \@href}%
\providecommand \@href[1]{\@@startlink{#1}\@@href}%
\providecommand \@@href[1]{\endgroup#1\@@endlink}%
\providecommand \@sanitize@url [0]{\catcode `\\12\catcode `\$12\catcode
  `\&12\catcode `\#12\catcode `\^12\catcode `\_12\catcode `\%12\relax}%
\providecommand \@@startlink[1]{}%
\providecommand \@@endlink[0]{}%
\providecommand \url  [0]{\begingroup\@sanitize@url \@url }%
\providecommand \@url [1]{\endgroup\@href {#1}{\urlprefix }}%
\providecommand \urlprefix  [0]{URL }%
\providecommand \Eprint [0]{\href }%
\providecommand \doibase [0]{http://dx.doi.org/}%
\providecommand \selectlanguage [0]{\@gobble}%
\providecommand \bibinfo  [0]{\@secondoftwo}%
\providecommand \bibfield  [0]{\@secondoftwo}%
\providecommand \translation [1]{[#1]}%
\providecommand \BibitemOpen [0]{}%
\providecommand \bibitemStop [0]{}%
\providecommand \bibitemNoStop [0]{.\EOS\space}%
\providecommand \EOS [0]{\spacefactor3000\relax}%
\providecommand \BibitemShut  [1]{\csname bibitem#1\endcsname}%
\let\auto@bib@innerbib\@empty
\bibitem [{\citenamefont {Gross}\ \emph {et~al.}(2012)\citenamefont {Gross},
  \citenamefont {Nesme}, \citenamefont {Vogts},\ and\ \citenamefont
  {Werner}}]{Gross2012}%
  \BibitemOpen
  \bibfield  {author} {\bibinfo {author} {\bibfnamefont {D.}~\bibnamefont
  {Gross}}, \bibinfo {author} {\bibfnamefont {V.}~\bibnamefont {Nesme}},
  \bibinfo {author} {\bibfnamefont {H.}~\bibnamefont {Vogts}}, \ and\ \bibinfo
  {author} {\bibfnamefont {R.~F.}\ \bibnamefont {Werner}},\ }\href {\doibase
  10.1007/s00220-012-1423-1} {\bibfield  {journal} {\bibinfo  {journal}
  {Communications in Mathematical Physics}\ }\textbf {\bibinfo {volume}
  {310}},\ \bibinfo {pages} {419} (\bibinfo {year} {2012})}\BibitemShut
  {NoStop}%
\bibitem [{\citenamefont {Fannes}\ \emph {et~al.}(1992)\citenamefont {Fannes},
  \citenamefont {Nachtergaele},\ and\ \citenamefont {Werner}}]{Fannes1992}%
  \BibitemOpen
  \bibfield  {author} {\bibinfo {author} {\bibfnamefont {M.}~\bibnamefont
  {Fannes}}, \bibinfo {author} {\bibfnamefont {B.}~\bibnamefont
  {Nachtergaele}}, \ and\ \bibinfo {author} {\bibfnamefont {R.}~\bibnamefont
  {Werner}},\ }\href {http://dx.doi.org/10.1007/BF02099178} {\bibfield
  {journal} {\bibinfo  {journal} {Communications in Mathematical Physics}\
  }\textbf {\bibinfo {volume} {144}},\ \bibinfo {pages} {443} (\bibinfo {year}
  {1992})},\ \bibinfo {note} {10.1007/BF02099178}\BibitemShut {NoStop}%
\bibitem [{\citenamefont {Perez-Garcia}\ \emph {et~al.}(2007)\citenamefont
  {Perez-Garcia}, \citenamefont {Verstraete}, \citenamefont {Wolf},\ and\
  \citenamefont {Cirac}}]{Perez-Garcia2007}%
  \BibitemOpen
  \bibfield  {author} {\bibinfo {author} {\bibfnamefont {D.}~\bibnamefont
  {Perez-Garcia}}, \bibinfo {author} {\bibfnamefont {F.}~\bibnamefont
  {Verstraete}}, \bibinfo {author} {\bibfnamefont {M.~M.}\ \bibnamefont
  {Wolf}}, \ and\ \bibinfo {author} {\bibfnamefont {J.~I.}\ \bibnamefont
  {Cirac}},\ }\href {http://arxiv.org/abs/quant-ph/0608197} {\bibfield
  {journal} {\bibinfo  {journal} {Quantum Inf. Comput.}\ }\textbf {\bibinfo
  {volume} {7}},\ \bibinfo {pages} {401} (\bibinfo {year} {2007})},\ \Eprint
  {http://arxiv.org/abs/quant-ph/0608197} {arXiv:quant-ph/0608197} \BibitemShut
  {NoStop}%
\bibitem [{\citenamefont {White}(1993)}]{White1993}%
  \BibitemOpen
  \bibfield  {author} {\bibinfo {author} {\bibfnamefont {S.~R.}\ \bibnamefont
  {White}},\ }\href {\doibase 10.1103/PhysRevB.48.10345} {\bibfield  {journal}
  {\bibinfo  {journal} {Phys. Rev. B}\ }\textbf {\bibinfo {volume} {48}},\
  \bibinfo {pages} {10345} (\bibinfo {year} {1993})}\BibitemShut {NoStop}%
\bibitem [{\citenamefont {Vidal}(2003)}]{Vidal2003}%
  \BibitemOpen
  \bibfield  {author} {\bibinfo {author} {\bibfnamefont {G.}~\bibnamefont
  {Vidal}},\ }\href {\doibase 10.1103/PhysRevLett.91.147902} {\bibfield
  {journal} {\bibinfo  {journal} {Phys. Rev. Lett.}\ }\textbf {\bibinfo
  {volume} {91}},\ \bibinfo {pages} {147902} (\bibinfo {year}
  {2003})}\BibitemShut {NoStop}%
\bibitem [{\citenamefont {Schuch}\ and\ \citenamefont
  {Cirac}(2010)}]{Schuch2010}%
  \BibitemOpen
  \bibfield  {author} {\bibinfo {author} {\bibfnamefont {N.}~\bibnamefont
  {Schuch}}\ and\ \bibinfo {author} {\bibfnamefont {J.~I.}\ \bibnamefont
  {Cirac}},\ }\href {\doibase 10.1103/PhysRevA.82.012314} {\bibfield  {journal}
  {\bibinfo  {journal} {Phys. Rev. A}\ }\textbf {\bibinfo {volume} {82}},\
  \bibinfo {pages} {012314} (\bibinfo {year} {2010})}\BibitemShut {NoStop}%
\bibitem [{\citenamefont {{Landau}}\ \emph {et~al.}(2013)\citenamefont
  {{Landau}}, \citenamefont {{Vazirani}},\ and\ \citenamefont
  {{Vidick}}}]{Landau2013arxiv}%
  \BibitemOpen
  \bibfield  {author} {\bibinfo {author} {\bibfnamefont {Z.}~\bibnamefont
  {{Landau}}}, \bibinfo {author} {\bibfnamefont {U.}~\bibnamefont
  {{Vazirani}}}, \ and\ \bibinfo {author} {\bibfnamefont {T.}~\bibnamefont
  {{Vidick}}},\ }\href@noop {} {\bibfield  {journal} {\bibinfo  {journal}
  {ArXiv e-prints}\ } (\bibinfo {year} {2013})},\ \Eprint
  {http://arxiv.org/abs/1307.5143} {arXiv:1307.5143 [quant-ph]} \BibitemShut
  {NoStop}%
\bibitem [{\citenamefont {Pollmann}\ \emph {et~al.}(2010)\citenamefont
  {Pollmann}, \citenamefont {Turner}, \citenamefont {Berg},\ and\ \citenamefont
  {Oshikawa}}]{Pollmann2010}%
  \BibitemOpen
  \bibfield  {author} {\bibinfo {author} {\bibfnamefont {F.}~\bibnamefont
  {Pollmann}}, \bibinfo {author} {\bibfnamefont {A.~M.}\ \bibnamefont
  {Turner}}, \bibinfo {author} {\bibfnamefont {E.}~\bibnamefont {Berg}}, \ and\
  \bibinfo {author} {\bibfnamefont {M.}~\bibnamefont {Oshikawa}},\ }\href
  {\doibase 10.1103/PhysRevB.81.064439} {\bibfield  {journal} {\bibinfo
  {journal} {Phys. Rev. B}\ }\textbf {\bibinfo {volume} {81}},\ \bibinfo
  {pages} {064439} (\bibinfo {year} {2010})}\BibitemShut {NoStop}%
\bibitem [{\citenamefont {Pollmann}\ \emph {et~al.}(2012)\citenamefont
  {Pollmann}, \citenamefont {Berg}, \citenamefont {Turner},\ and\ \citenamefont
  {Oshikawa}}]{Pollmann2012}%
  \BibitemOpen
  \bibfield  {author} {\bibinfo {author} {\bibfnamefont {F.}~\bibnamefont
  {Pollmann}}, \bibinfo {author} {\bibfnamefont {E.}~\bibnamefont {Berg}},
  \bibinfo {author} {\bibfnamefont {A.~M.}\ \bibnamefont {Turner}}, \ and\
  \bibinfo {author} {\bibfnamefont {M.}~\bibnamefont {Oshikawa}},\ }\href
  {http://link.aps.org/doi/10.1103/PhysRevB.85.075125} {\bibfield  {journal}
  {\bibinfo  {journal} {Phys. Rev. B}\ }\textbf {\bibinfo {volume} {85}},\
  \bibinfo {pages} {075125} (\bibinfo {year} {2012})}\BibitemShut {NoStop}%
\bibitem [{\citenamefont {Chen}\ \emph
  {et~al.}(2011{\natexlab{a}})\citenamefont {Chen}, \citenamefont {Gu},\ and\
  \citenamefont {Wen}}]{Chen2011}%
  \BibitemOpen
  \bibfield  {author} {\bibinfo {author} {\bibfnamefont {X.}~\bibnamefont
  {Chen}}, \bibinfo {author} {\bibfnamefont {Z.-C.}\ \bibnamefont {Gu}}, \ and\
  \bibinfo {author} {\bibfnamefont {X.-G.}\ \bibnamefont {Wen}},\ }\href
  {http://link.aps.org/doi/10.1103/PhysRevB.83.035107} {\bibfield  {journal}
  {\bibinfo  {journal} {Phys. Rev. B}\ }\textbf {\bibinfo {volume} {83}},\
  \bibinfo {pages} {035107} (\bibinfo {year} {2011}{\natexlab{a}})}\BibitemShut
  {NoStop}%
\bibitem [{\citenamefont {Schuch}\ \emph {et~al.}(2011)\citenamefont {Schuch},
  \citenamefont {Perez-Garcia},\ and\ \citenamefont {Cirac}}]{Schuch2011}%
  \BibitemOpen
  \bibfield  {author} {\bibinfo {author} {\bibfnamefont {N.}~\bibnamefont
  {Schuch}}, \bibinfo {author} {\bibfnamefont {D.}~\bibnamefont
  {Perez-Garcia}}, \ and\ \bibinfo {author} {\bibfnamefont {I.}~\bibnamefont
  {Cirac}},\ }\href {\doibase 10.1103/PhysRevB.84.165139} {\bibfield  {journal}
  {\bibinfo  {journal} {Phys. Rev. B}\ }\textbf {\bibinfo {volume} {84}},\
  \bibinfo {pages} {165139} (\bibinfo {year} {2011})}\BibitemShut {NoStop}%
\bibitem [{\citenamefont {Verstraete}\ \emph {et~al.}(2004)\citenamefont
  {Verstraete}, \citenamefont {Garc\'{i}a-Ripoll},\ and\ \citenamefont
  {Cirac}}]{Verstraete2004}%
  \BibitemOpen
  \bibfield  {author} {\bibinfo {author} {\bibfnamefont {F.}~\bibnamefont
  {Verstraete}}, \bibinfo {author} {\bibfnamefont {J.~J.}\ \bibnamefont
  {Garc\'{i}a-Ripoll}}, \ and\ \bibinfo {author} {\bibfnamefont {J.~I.}\
  \bibnamefont {Cirac}},\ }\href {\doibase 10.1103/PhysRevLett.93.207204}
  {\bibfield  {journal} {\bibinfo  {journal} {Phys. Rev. Lett.}\ }\textbf
  {\bibinfo {volume} {93}},\ \bibinfo {pages} {207204} (\bibinfo {year}
  {2004})}\BibitemShut {NoStop}%
\bibitem [{\citenamefont {Pirvu}\ \emph {et~al.}(2010)\citenamefont {Pirvu},
  \citenamefont {Murg}, \citenamefont {Cirac},\ and\ \citenamefont
  {Verstraete}}]{Pirvu2010}%
  \BibitemOpen
  \bibfield  {author} {\bibinfo {author} {\bibfnamefont {B.}~\bibnamefont
  {Pirvu}}, \bibinfo {author} {\bibfnamefont {V.}~\bibnamefont {Murg}},
  \bibinfo {author} {\bibfnamefont {J.~I.}\ \bibnamefont {Cirac}}, \ and\
  \bibinfo {author} {\bibfnamefont {F.}~\bibnamefont {Verstraete}},\ }\href
  {http://stacks.iop.org/1367-2630/12/i=2/a=025012} {\bibfield  {journal}
  {\bibinfo  {journal} {New Journal of Physics}\ }\textbf {\bibinfo {volume}
  {12}},\ \bibinfo {pages} {025012} (\bibinfo {year} {2010})}\BibitemShut
  {NoStop}%
\bibitem [{\citenamefont {{Haegeman}}\ and\ \citenamefont
  {{Verstraete}}(2016)}]{Haegeman2016}%
  \BibitemOpen
  \bibfield  {author} {\bibinfo {author} {\bibfnamefont {J.}~\bibnamefont
  {{Haegeman}}}\ and\ \bibinfo {author} {\bibfnamefont {F.}~\bibnamefont
  {{Verstraete}}},\ }\href@noop {} {\bibfield  {journal} {\bibinfo  {journal}
  {ArXiv e-prints}\ } (\bibinfo {year} {2016})},\ \Eprint
  {http://arxiv.org/abs/1611.08519} {arXiv:1611.08519 [cond-mat.str-el]}
  \BibitemShut {NoStop}%
\bibitem [{\citenamefont {Mascarenhas}\ \emph {et~al.}(2015)\citenamefont
  {Mascarenhas}, \citenamefont {Flayac},\ and\ \citenamefont
  {Savona}}]{Mascarenhas2015}%
  \BibitemOpen
  \bibfield  {author} {\bibinfo {author} {\bibfnamefont {E.}~\bibnamefont
  {Mascarenhas}}, \bibinfo {author} {\bibfnamefont {H.}~\bibnamefont {Flayac}},
  \ and\ \bibinfo {author} {\bibfnamefont {V.}~\bibnamefont {Savona}},\ }\href
  {\doibase 10.1103/PhysRevA.92.022116} {\bibfield  {journal} {\bibinfo
  {journal} {Phys. Rev. A}\ }\textbf {\bibinfo {volume} {92}},\ \bibinfo
  {pages} {022116} (\bibinfo {year} {2015})}\BibitemShut {NoStop}%
\bibitem [{\citenamefont {Wall}\ and\ \citenamefont {Carr}(2012)}]{Wall2012}%
  \BibitemOpen
  \bibfield  {author} {\bibinfo {author} {\bibfnamefont {M.~L.}\ \bibnamefont
  {Wall}}\ and\ \bibinfo {author} {\bibfnamefont {L.~D.}\ \bibnamefont
  {Carr}},\ }\href {http://stacks.iop.org/1367-2630/14/i=12/a=125015}
  {\bibfield  {journal} {\bibinfo  {journal} {New Journal of Physics}\ }\textbf
  {\bibinfo {volume} {14}},\ \bibinfo {pages} {125015} (\bibinfo {year}
  {2012})}\BibitemShut {NoStop}%
\bibitem [{\citenamefont {Chen}\ \emph
  {et~al.}(2011{\natexlab{b}})\citenamefont {Chen}, \citenamefont {Liu},\ and\
  \citenamefont {Wen}}]{Chen2011a}%
  \BibitemOpen
  \bibfield  {author} {\bibinfo {author} {\bibfnamefont {X.}~\bibnamefont
  {Chen}}, \bibinfo {author} {\bibfnamefont {Z.-X.}\ \bibnamefont {Liu}}, \
  and\ \bibinfo {author} {\bibfnamefont {X.-G.}\ \bibnamefont {Wen}},\ }\href
  {http://link.aps.org/doi/10.1103/PhysRevB.84.235141} {\bibfield  {journal}
  {\bibinfo  {journal} {Phys. Rev. B}\ }\textbf {\bibinfo {volume} {84}},\
  \bibinfo {pages} {235141} (\bibinfo {year} {2011}{\natexlab{b}})}\BibitemShut
  {NoStop}%
\bibitem [{\citenamefont {Buerschaper}(2014)}]{Buerschaper14}%
  \BibitemOpen
  \bibfield  {author} {\bibinfo {author} {\bibfnamefont {O.}~\bibnamefont
  {Buerschaper}},\ }\href {\doibase
  http://dx.doi.org/10.1016/j.aop.2014.09.007} {\bibfield  {journal} {\bibinfo
  {journal} {Annals of Physics}\ }\textbf {\bibinfo {volume} {351}},\ \bibinfo
  {pages} {447 } (\bibinfo {year} {2014})}\BibitemShut {NoStop}%
\bibitem [{\citenamefont {{{\c S}ahino{\u g}lu}}\ \emph
  {et~al.}(2014)\citenamefont {{{\c S}ahino{\u g}lu}}, \citenamefont
  {{Williamson}}, \citenamefont {{Bultinck}}, \citenamefont {{Mari{\"e}n}},
  \citenamefont {{Haegeman}}, \citenamefont {{Schuch}},\ and\ \citenamefont
  {{Verstraete}}}]{Sahinoglu2014}%
  \BibitemOpen
  \bibfield  {author} {\bibinfo {author} {\bibfnamefont {M.~B.}\ \bibnamefont
  {{{\c S}ahino{\u g}lu}}}, \bibinfo {author} {\bibfnamefont {D.}~\bibnamefont
  {{Williamson}}}, \bibinfo {author} {\bibfnamefont {N.}~\bibnamefont
  {{Bultinck}}}, \bibinfo {author} {\bibfnamefont {M.}~\bibnamefont
  {{Mari{\"e}n}}}, \bibinfo {author} {\bibfnamefont {J.}~\bibnamefont
  {{Haegeman}}}, \bibinfo {author} {\bibfnamefont {N.}~\bibnamefont
  {{Schuch}}}, \ and\ \bibinfo {author} {\bibfnamefont {F.}~\bibnamefont
  {{Verstraete}}},\ }\href@noop {} {\bibfield  {journal} {\bibinfo  {journal}
  {ArXiv e-prints}\ } (\bibinfo {year} {2014})},\ \Eprint
  {http://arxiv.org/abs/1409.2150} {arXiv:1409.2150 [quant-ph]} \BibitemShut
  {NoStop}%
\bibitem [{\citenamefont {Williamson}\ \emph {et~al.}(2016)\citenamefont
  {Williamson}, \citenamefont {Bultinck}, \citenamefont {Mari\"en},
  \citenamefont {\ifmmode \mbox{\c{S}}\else
  \c{S}\fi{}ahino\ifmmode~\breve{g}\else \u{g}\fi{}lu}, \citenamefont
  {Haegeman},\ and\ \citenamefont {Verstraete}}]{Williamson2016}%
  \BibitemOpen
  \bibfield  {author} {\bibinfo {author} {\bibfnamefont {D.~J.}\ \bibnamefont
  {Williamson}}, \bibinfo {author} {\bibfnamefont {N.}~\bibnamefont
  {Bultinck}}, \bibinfo {author} {\bibfnamefont {M.}~\bibnamefont {Mari\"en}},
  \bibinfo {author} {\bibfnamefont {M.~B.}\ \bibnamefont {\ifmmode
  \mbox{\c{S}}\else \c{S}\fi{}ahino\ifmmode~\breve{g}\else \u{g}\fi{}lu}},
  \bibinfo {author} {\bibfnamefont {J.}~\bibnamefont {Haegeman}}, \ and\
  \bibinfo {author} {\bibfnamefont {F.}~\bibnamefont {Verstraete}},\ }\href
  {\doibase 10.1103/PhysRevB.94.205150} {\bibfield  {journal} {\bibinfo
  {journal} {Phys. Rev. B}\ }\textbf {\bibinfo {volume} {94}},\ \bibinfo
  {pages} {205150} (\bibinfo {year} {2016})}\BibitemShut {NoStop}%
\bibitem [{\citenamefont {Bultinck}\ \emph {et~al.}(2017)\citenamefont
  {Bultinck}, \citenamefont {Mari{\"e}n}, \citenamefont {Williamson},
  \citenamefont {{\c S}ahino{\u g}lu}, \citenamefont {Haegeman},\ and\
  \citenamefont {Verstraete}}]{Bultinck2017}%
  \BibitemOpen
  \bibfield  {author} {\bibinfo {author} {\bibfnamefont {N.}~\bibnamefont
  {Bultinck}}, \bibinfo {author} {\bibfnamefont {M.}~\bibnamefont
  {Mari{\"e}n}}, \bibinfo {author} {\bibfnamefont {D.}~\bibnamefont
  {Williamson}}, \bibinfo {author} {\bibfnamefont {M.}~\bibnamefont {{\c
  S}ahino{\u g}lu}}, \bibinfo {author} {\bibfnamefont {J.}~\bibnamefont
  {Haegeman}}, \ and\ \bibinfo {author} {\bibfnamefont {F.}~\bibnamefont
  {Verstraete}},\ }\href {\doibase http://dx.doi.org/10.1016/j.aop.2017.01.004}
  {\bibfield  {journal} {\bibinfo  {journal} {Annals of Physics}\ }\textbf
  {\bibinfo {volume} {378}},\ \bibinfo {pages} {183 } (\bibinfo {year}
  {2017})}\BibitemShut {NoStop}%
\bibitem [{\citenamefont {Else}\ and\ \citenamefont {Nayak}(2016)}]{Else2016}%
  \BibitemOpen
  \bibfield  {author} {\bibinfo {author} {\bibfnamefont {D.~V.}\ \bibnamefont
  {Else}}\ and\ \bibinfo {author} {\bibfnamefont {C.}~\bibnamefont {Nayak}},\
  }\href {\doibase 10.1103/PhysRevB.93.201103} {\bibfield  {journal} {\bibinfo
  {journal} {Phys. Rev. B}\ }\textbf {\bibinfo {volume} {93}},\ \bibinfo
  {pages} {201103} (\bibinfo {year} {2016})}\BibitemShut {NoStop}%
\bibitem [{\citenamefont {von Keyserlingk}\ and\ \citenamefont
  {Sondhi}(2016{\natexlab{a}})}]{Keyserlingk2016}%
  \BibitemOpen
  \bibfield  {author} {\bibinfo {author} {\bibfnamefont {C.~W.}\ \bibnamefont
  {von Keyserlingk}}\ and\ \bibinfo {author} {\bibfnamefont {S.~L.}\
  \bibnamefont {Sondhi}},\ }\href {\doibase 10.1103/PhysRevB.93.245145}
  {\bibfield  {journal} {\bibinfo  {journal} {Phys. Rev. B}\ }\textbf {\bibinfo
  {volume} {93}},\ \bibinfo {pages} {245145} (\bibinfo {year}
  {2016}{\natexlab{a}})}\BibitemShut {NoStop}%
\bibitem [{\citenamefont {von Keyserlingk}\ and\ \citenamefont
  {Sondhi}(2016{\natexlab{b}})}]{Keyserlingk2016a}%
  \BibitemOpen
  \bibfield  {author} {\bibinfo {author} {\bibfnamefont {C.~W.}\ \bibnamefont
  {von Keyserlingk}}\ and\ \bibinfo {author} {\bibfnamefont {S.~L.}\
  \bibnamefont {Sondhi}},\ }\href {\doibase 10.1103/PhysRevB.93.245146}
  {\bibfield  {journal} {\bibinfo  {journal} {Phys. Rev. B}\ }\textbf {\bibinfo
  {volume} {93}},\ \bibinfo {pages} {245146} (\bibinfo {year}
  {2016}{\natexlab{b}})}\BibitemShut {NoStop}%
\bibitem [{\citenamefont {Potter}\ \emph {et~al.}(2016)\citenamefont {Potter},
  \citenamefont {Morimoto},\ and\ \citenamefont {Vishwanath}}]{Potter2016}%
  \BibitemOpen
  \bibfield  {author} {\bibinfo {author} {\bibfnamefont {A.~C.}\ \bibnamefont
  {Potter}}, \bibinfo {author} {\bibfnamefont {T.}~\bibnamefont {Morimoto}}, \
  and\ \bibinfo {author} {\bibfnamefont {A.}~\bibnamefont {Vishwanath}},\
  }\href {\doibase 10.1103/PhysRevX.6.041001} {\bibfield  {journal} {\bibinfo
  {journal} {Phys. Rev. X}\ }\textbf {\bibinfo {volume} {6}},\ \bibinfo {pages}
  {041001} (\bibinfo {year} {2016})}\BibitemShut {NoStop}%
\bibitem [{\citenamefont {Roy}\ and\ \citenamefont {Harper}(2016)}]{Roy2016}%
  \BibitemOpen
  \bibfield  {author} {\bibinfo {author} {\bibfnamefont {R.}~\bibnamefont
  {Roy}}\ and\ \bibinfo {author} {\bibfnamefont {F.}~\bibnamefont {Harper}},\
  }\href {\doibase 10.1103/PhysRevB.94.125105} {\bibfield  {journal} {\bibinfo
  {journal} {Phys. Rev. B}\ }\textbf {\bibinfo {volume} {94}},\ \bibinfo
  {pages} {125105} (\bibinfo {year} {2016})}\BibitemShut {NoStop}%
\bibitem [{\citenamefont {{Cirac}}\ \emph {et~al.}(2017)\citenamefont
  {{Cirac}}, \citenamefont {{Perez-Garcia}}, \citenamefont {{Schuch}},\ and\
  \citenamefont {{Verstraete}}}]{Cirac2017}%
  \BibitemOpen
  \bibfield  {author} {\bibinfo {author} {\bibfnamefont {J.~I.}\ \bibnamefont
  {{Cirac}}}, \bibinfo {author} {\bibfnamefont {D.}~\bibnamefont
  {{Perez-Garcia}}}, \bibinfo {author} {\bibfnamefont {N.}~\bibnamefont
  {{Schuch}}}, \ and\ \bibinfo {author} {\bibfnamefont {F.}~\bibnamefont
  {{Verstraete}}},\ }\href@noop {} {\bibfield  {journal} {\bibinfo  {journal}
  {ArXiv e-prints}\ } (\bibinfo {year} {2017})},\ \Eprint
  {http://arxiv.org/abs/1703.09188} {arXiv:1703.09188 [cond-mat.str-el]}
  \BibitemShut {NoStop}%
\end{thebibliography}%

\end{document}